\title{\textbf {A New Rejection Sampling Approach to $k$-$\mathtt{means}$++ \\With Improved Trade-Offs}}
\author[1]{Poojan Shah}
\author[1]{Shashwat Agrawal}
\author[1]{Ragesh Jaiswal}
\affil[1]{Department of Computer Science and Engineering, Indian Institute of Technology Delhi}
\affil[ ]{\texttt{\{cs1221594, csz248012, rjaiswal\}@cse.iitd.ac.in}}
\date{} 
\definecolor{MidnightBlue}{HTML}{191970}
\newcommand{\calE}{\mathcal E}
\newcommand{\calU}{\mathcal U}
\theoremstyle{plain}
\newtheorem{theorem}{Theorem}[section]
\newtheorem{lemma}[theorem]{Lemma}
\newtheorem{corollary}[theorem]{Corollary}
\theoremstyle{definition}
\newtheorem{definition}[theorem]{Definition}
\theoremstyle{remark}
\newtheorem{remark}[theorem]{Remark}
\newcommand{\nnz}{\mathtt{nnz}}
\newcommand{\kpp}{$\mathtt{k}\text{-}\mathtt{means}$++ }
\newcommand{\dkpp}{$\delta \text{-}\mathtt{k}\text{-}\mathtt{means}\text{++ }$}
\newcommand{\Var}{\operatorname{Var}}
\newcommand{\means}{$\mathtt{means}$ }
\newcommand{\X}{\mathcal{X}}
\newcommand{\D}{\mathcal{D}}
\newcommand{\R}{\mathbb{R}}
\newcommand{\N}{\mathbb{N}}
\newcommand{\E}{\mathbb{E}}
\newcommand{\eps}{\varepsilon}
\newcommand{\rskpp}{$\mathtt{RS\text{-}k\text{-}means}$++}
\renewcommand{\P}{\mathcal{P}}
\newcommand{\Q}{\mathcal{Q}}
\newcommand{\opt}{\mathtt{OPT}}
\newcommand{\C}{\mathcal{C}}
\renewcommand{\H}{\mathcal{H}}
\newcommand{\U}{\mathcal{U}}
\newcommand{\F}{\mathcal{F}}
\newcommand{\allc}{\mathtt{AC}}
\newcommand{\rejsamp}{$\mathtt{RejectionSample}$}
\renewcommand{\output}{\mathtt{Output}}
\newcommand{\afkmc}{\mathtt{AF}\text{-}\mathtt{k}\text{-}\mathtt{MC^2}}
\newcommand{\prone}{\mathtt{PRONE}}
\newcommand{\rskmeans}{\mathtt{RS\text{-}k\text{-}means}\text{++}}
\begin{document}

\setlength{\parindent}{0pt}

\maketitle
\begin{abstract}
    \noindent The $k$-$\mathtt{means}$++ seeding algorithm~\cite{arthur_vassilvitskii_07} is widely used in practice for the $k$-means clustering problem where the goal is to cluster a dataset $\mathcal{X} \subset \mathbb{R} ^d$ into $k$ clusters.
 The popularity of this algorithm is due to its simplicity and provable guarantee of being $O(\log k)$ competitive with the optimal solution in expectation.  However, its running time is $O(|\mathcal{X}|kd)$, making it expensive for large datasets. 
 In this work, we present a simple and effective rejection sampling based approach for speeding up $k$-$\mathtt{means}$++. 
 Our first method runs in time $\tilde{O}(\nnz (\mathcal{X}) + \beta k^2d)$ while still being $O(\log k )$ competitive in expectation. Here, $\beta$ is a parameter which is the ratio of the variance of the dataset to the optimal $k$-$\mathtt{means}$ cost in expectation and $\tilde{O}$ hides logarithmic factors in $k$ and $|\mathcal{X}|$. 
 Our second method presents a new trade-off between computational cost and solution quality. It incurs an additional scale-invariant factor of $ k^{-\Omega( m/\beta)} \Var (\mathcal{X})$ in addition to the $O(\log k)$ guarantee of $k$-$\mathtt{means}$++ improving upon a result of \cite{bachem_16a} who get an additional factor of $m^{-1}\Var(\mathcal{X})$  while still running in time $\tilde{O}(\nnz(\mathcal{X}) + mk^2d)$.   We perform extensive empirical evaluations to validate our theoretical results and to show the effectiveness of our approach on real datasets.
\end{abstract}

\newpage
\section{Introduction}

Data clustering has numerous applications in data processing and is one of the  classic problems in unsupervised machine learning. Its formulation as the $k$-\means problem is defined as: given a data set $ \X  \subset \R^d$ and a positive integer $k$ representing the number of clusters into which the dataset is to be partitioned, find a set $C \subset \R^d$ of $k$ centers such that the following objective or cost function is minimized : 
$$ \Delta(\X,C) \coloneqq \sum_{x \in \X} \min_{c \in C} \|x - c\|^2 $$
The set $C$ implicitly defines a partition of $\X$ based on the closest center from $C$. A set of centers which achieve the minimum  $k$-\means cost is denoted by $\mathtt{OPT}_k = \{ c_1^*, \dots, c_k^*\}$. We shall be using the shorthand $\Delta_k(\X) \coloneqq \Delta(\X,\mathtt{OPT}_k)$ to refer to the optimal $k$-\means cost. \\\\
\textbf{Background on the $k$-\means problem}. On the hardness front, solving the $k$-\means problem exactly is known to be $\mathtt{NP}$-hard \cite{dasgupta_08}, even when the data points are restricted to lie in a plane \cite{mahajan_09}. Moreover, there exists a constant $c > 1$ such that it is $\mathtt{NP}$-hard to solve the $c$-approximate version of $k$-\means where we are allowed  to output cluster centers $C$ such that $\Delta(\X,C) \leq c\Delta_k(\X)$ \cite{awasthi_15,lee_17,cohen-addad_c.s.19} . On the algorithmic front, a significant amount of effort has been put into designing algorithms for $k$-\means that have strong theoretical guarantees. These include, for example, the constant factor approximation results of \cite{jain_vazirani_01,kanungo_02,ahmadian_17,cohen-addad_22} and the $(1 + \varepsilon)$ approximation schemes of \cite{kumar_10,jaiswal_14,jaiswal_15,cohen-addad_18,friggstad_19,cohen-addad_19,bhattacharya_20} which have exponential dependence on one or more of $\varepsilon^{-1},k \text{ or } d$.  While these works provide important insights into the structure of the $k$-\means problem, they are seldom used in practice due to their slow speed. Indeed, one of the most popular heuristics used in practice \cite{wu_08} is Lloyd's iterations \cite{lloyd_82}, also referred to as the $k$-\means method. It starts off with an initial set of centers
\footnote{This is commonly known as {\em seeding}. A simple seeding method is to arbitrarily pick $k$ points from $\mathcal{X}$.} 
and iteratively refines the solution. This hill-climbing approach may get stuck in local minima and provide arbitrarily bad clusterings even for fixed $n$ and $k$ \cite{dasgupta_03,har-peled_sadri_05,arthur_vassilvitskii_06a,arthur_vassilvitskii_06b}. \\

\textbf{$k$-\means++ and $D^2$-sampling.} Usually, Lloyd's iterations are preceded by the $k$-\means++ seeding introduced in \cite{arthur_vassilvitskii_07}. 
Even though the $k$-means++ algorithm is the Lloyd's iterations preceded by $k$-means++ seeding, it is common to refer to the seeding procedure as $k$-means++. We follow this in the remaining discussion.
$k$-\means++ is a fast sampling-based approach.
Starting with  a randomly chosen center $S = \{c_1\}$, a new point $x \in \X$ is chosen as the next center with probability proportional to $\Delta(\{x\},S)$  in each iteration. This is commonly referred to as $D^2$-sampling. The centers generated by this seeding method are guaranteed to be $O(\log k)$ competitive with the optimal solution in expectation. Thus, $k$-\means++  provides the best of both worlds : theory and practice and  unsurprisingly, a lot of work has been done on it. This includes extending it to the distributed setting \cite{bahmani_12} and  the streaming setting \cite{ailon_09,ackermann_12}. Furthermore, several results on coreset constructions \footnote{See, for example \cite{bachem_17,feldman_20} and the extensive references cited therein.} are inspired by or rely on the theoretical guarantees of $k$-\means++. Recently, it was shown that appending $k$-\means++ with a sufficiently large number of local search steps \cite{lattanzi_sohler_19,choo_20} can lead to $O(1)$ competitive solutions. \\

A downside of $k$-\means++ is that its $\Theta(nkd)$ computational complexity becomes impractical on large datasets. Various approaches \cite{bachem_16a,bachem_16b,cohen-addad_20,charikar_23} have been presented to speed up $k$-\means++ with varying trade-offs, and our work also falls into this category. A detailed discussion about the position of our approach in the literature is presented in Section \ref{subsec:compare}. We also include Table \ref{tab:compare} as a summary for reference.

\section{Our Results}

 In this section, we present a high level discussion of our results, contributions and their significance. 

 \textbf{Improved tradeoffs.} Our main technical contribution is a novel simple yet fast algorithm based on rejection sampling with an improved trade-off between the computational cost and solution quality for $k$-\means++ in the Euclidean metric. A description is given in Algorithm \ref{alg:rskpp}. We state our result formally below. 

 \begin{theorem} \label{thm:main}
 (Main Theorem)
     Let $m \in \N$ be a parameter and $k \in \N$ be the number of clusters. Let $\X \subset \R^d$ be any dataset of $n$ points and $S$ be the output of $\mathtt{RS\text{-}k\text{-}means}$++ $(\X,k,m')$ where $m' = cm\ln k$ for some constant $c > 1$. Then the following guarantee holds : 
     $$ \E[\Delta(\X,S)] \leq 8(\ln k+2)\Delta_k(\X) + \frac{6k}{k^{\frac{cm}{2\beta(\X)}}-1} \Delta_1(\X) $$
      Here $\beta(\X)$ \footnote{As can be seen from the description, the value of $\beta(\X)$ is not needed to be known by our algorithm} is a parameter such that $\E[\beta(\X)] = \frac{\Delta_1(\X)}{\Delta_k(\X)}$. Moreover, the computational cost of the algorithm includes a single-time preprocessing cost of $\tilde{O}(\nnz(\X))$ \footnote{$\nnz(\X)$ represents the \textit{number of non zero entries} in the dataset $\X$. When $\X$ is sparse, this can be much smaller than $nd$.}, with the cost of performing a single clustering being $O(mk^2d\log k)$.  

 \end{theorem}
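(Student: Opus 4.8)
The starting point is the observation that the rejection sampler is designed so that, conditioned on accepting in a round, it returns a point distributed \emph{exactly} according to the $D^2$-distribution. Concretely, the proposal distribution is $q(x) \propto \Delta(\{x\},\{c_1\})$ (cheap to sample after the $\tilde O(\nnz(\X))$ preprocessing), and a proposed point $x$ is accepted with probability $\Delta(\{x\},S)/\Delta(\{x\},\{c_1\}) \le 1$, where $S$ is the current center set. Since $\{c_1\}\subseteq S$ forces $\Delta(\{x\},S)\le\Delta(\{x\},\{c_1\})$, this ratio is a valid probability, and a one-line computation shows that the law of an accepted point is precisely $\Delta(\{x\},S)/\Delta(\X,S)$, the true $D^2$-distribution. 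Hence the plan is to argue that \rskpp behaves identically to exact $k$-means++ except on the ``all-reject'' events, reducing the theorem to (i) importing the guarantee of \cite{arthur_vassilvitskii_07} for the main term and (ii) controlling the probability and cost of all-reject events for the additive term.

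First I would bound the per-step failure probability. The probability that a single round rejects is $1-\Delta(\X,S)/\Delta(\X,\{c_1\})$, so the probability that all $m'=cm\ln k$ independent rounds of a given step reject is $\big(1-\Delta(\X,S)/\Delta(\X,\{c_1\})\big)^{m'}$. Here I would use the structural bound $\Delta(\X,S)\ge\Delta_k(\X)$ (any set of at most $k$ centers costs at least the optimum) together with the definition $\beta(\X):=\Delta(\X,\{c_1\})/(2\Delta_k(\X))$, which gives a per-round acceptance probability of at least $1/(2\beta(\X))$. A standard $1-t\le e^{-t}$ estimate then yields a per-step failure probability of at most $\big(1-\tfrac{1}{2\beta(\X)}\big)^{cm\ln k}\le k^{-cm/(2\beta(\X))}$, which is exactly the quantity appearing in the denominator of the error term. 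The parallel-axis identity $\E_{c_1}[\Delta(\X,\{c_1\})]=2\Delta_1(\X)$ over a uniform first center also confirms $\E[\beta(\X)]=\Delta_1(\X)/\Delta_k(\X)$, the claimed interpretation of $\beta$.

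Next I would couple \rskpp (call its output $A$) with a run $B$ of exact $k$-means++ driven by the same proposals, so that $A$ and $B$ coincide until the first step at which rejection sampling fails. Writing $\E[\Delta(\X,A)] = \E[\Delta(\X,B)] + \E[\Delta(\X,A)-\Delta(\X,B)]$, the first term is at most $8(\ln k +2)\Delta_k(\X)$ by \cite{arthur_vassilvitskii_07}. The difference vanishes unless some step fails, and on such an event it is crudely bounded by the worst-case cost $\Delta(\X,A)\le\Delta(\X,\{c_1\})$ (costs only decrease as centers are added). Conditioning on $c_1$ and summing the per-step failure contributions as a geometric series in $p^\star:=k^{-cm/(2\beta(\X))}$ produces the factor $\tfrac{p^\star}{1-p^\star}=\tfrac{1}{k^{cm/(2\beta(\X))}-1}$; accounting for the $k-1$ sampling steps and replacing $\Delta(\X,\{c_1\})$ by its scale $\Delta_1(\X)$ (via $\E_{c_1}[\Delta(\X,\{c_1\})]=2\Delta_1(\X)$) absorbs into the factor $k$ and the constant $6$, giving the additive term $\tfrac{6k}{k^{cm/(2\beta(\X))}-1}\Delta_1(\X)$.

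I expect the main obstacle to be the error-accounting step: turning the coupling into the precise $\tfrac{p^\star}{1-p^\star}$ geometric factor while simultaneously invoking the Arthur--Vassilvitskii bound on the coupled run requires care, since that analysis is organized around optimal clusters rather than sampling steps, and one must handle $\beta(\X)$ as a random quantity rather than merely its expectation. The runtime claim is comparatively routine: computing and storing the proposal distribution $q$ costs $\tilde O(\nnz(\X))$ once, after which each of the $k-1$ steps runs $m'=O(m\log k)$ rounds, each evaluating $\Delta(\{x\},S)$ against the at most $k$ current centers in $O(kd)$ time, for a total of $O(mk^2 d\log k)$ per clustering.
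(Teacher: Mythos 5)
Your core argument takes a genuinely different route from the paper, and (for the approximation guarantee) it is essentially sound. The paper never couples \rskpp with an exact run and never invokes the end-to-end guarantee of \cite{arthur_vassilvitskii_07} as a black box. Instead it proves (Lemma~\ref{lem: bounded oversampling lemma}) that the truncated sampler with uniform fallback outputs a point whose law is \emph{exactly} the mixture $(1-\delta)D^2(\X,S)+\delta\,\U[\X]$ with $\delta\le e^{-m'/\tau}$, introduces the abstract algorithm \dkpp in which every step samples from this mixture, and then re-derives the whole potential-function analysis of \cite{dasgupta_13} for the perturbed distribution (Lemmas~\ref{lem:bounds}--\ref{lem:combined}), tracking how the $\tfrac{\delta}{1-\delta}$ error terms propagate through covered and uncovered clusters; this gives Theorem~\ref{thm:delta-k-means} with additive term $\tfrac{6k\delta}{1-\delta}\Delta_1(\X)$, into which $\delta\le k^{-cm/(2\beta(\X))}$ is substituted. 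Your coupling — run $A$ and an exact run $B$ on the same proposal/acceptance stream, note they agree unless some step of $A$ exhausts its $m'$ rounds, bound the difference on that event by $\Delta(\X,\{c_1\})$, and union-bound the failure probability by $(k-1)k^{-cm/(2\beta(\X))}$ — is valid: the accepted sample is exactly $D^2$-distributed both conditioned on acceptance within $m'$ rounds and conditioned on failure (continuing the loop), so $B$ is marginally exact $k$-means++ while agreeing with $A$ off the failure event. This is considerably shorter, and applied to the fixed-$\delta$ abstraction it would even give $2(k-1)\delta\Delta_1(\X)$, a better constant than the paper's; what the paper's longer route buys is a standalone robustness theorem for perturbed $D^2$-sampling (contrasted with \cite{bhattacharya_20,grunau_23}) and per-iteration control of the error. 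Two caveats on your accounting step. First, the ``geometric series'' yielding $\tfrac{p^\star}{1-p^\star}$ is not a real mechanism in your argument; the union bound gives $(k-1)p^\star$, which suffices since $p^\star\le\tfrac{p^\star}{1-p^\star}$ (the $\tfrac{1}{1-\delta}$ in the paper arises from the conditional-probability bounds of Lemma~\ref{lem:bounds}, not from anything in your coupling). Second, you cannot ``replace $\Delta(\X,\{c_1\})$ by its scale $\Delta_1(\X)$'': the failure indicator and $\Delta(\X,\{c_1\})$ are both increasing in $\beta(\X)$, hence positively correlated, so the honest conclusion of your argument is the additive term $2(k-1)\Delta_k(\X)\,\E\bigl[\beta(\X)\,k^{-cm/(2\beta(\X))}\bigr]$, not $(k-1)k^{-cm/(2\beta)}\cdot 2\Delta_1(\X)$. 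Since the theorem itself places the random quantity $\beta(\X)$ inside a bound on an expectation, and the paper's own proof substitutes this random $\delta$ into a fixed-$\delta$ theorem without comment, your treatment matches the paper's level of rigor here — and you correctly flag it as the delicate point.

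The one genuine gap is in the runtime half of the statement. Your proposal distribution $q(x)\propto\|x-c_1\|^2$ depends on $c_1$, which is redrawn in every clustering run, so it cannot be served by a data structure built in a \emph{single-time} preprocessing step; rebuilding it per run costs $\tilde{O}(\nnz(\X))$ per clustering, contradicting the claim that a single clustering costs only $O(mk^2d\log k)$. This is exactly why the paper samples instead from $D_2(x)\propto\|x\|^2+\|c_1\|^2$, which is a two-component mixture of the preprocessed distribution $D_\X$ and the uniform distribution (Procedure~\ref{alg:sample D_2}), both available in $O(\log|\X|)$ time from the one-time tree, and pays for the mismatch with the factor $\tfrac12$ in the acceptance ratio $\rho(x)=\tfrac12\tfrac{\Delta(x,S)}{\|x\|^2+\|c_1\|^2}$, justified by $\|x-c_1\|^2\le 2(\|x\|^2+\|c_1\|^2)$. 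Note that after the centering step of $\mathtt{preprocess}$ one has $\Delta(\X,\{c_1\})=\|\X\|^2+|\X|\|c_1\|^2$, so switching to the paper's $D_2$ costs only a factor $2$ in the per-round acceptance probability and leaves the rest of your argument (and your definition of $\beta(\X)$, up to that constant) intact.
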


 To the best of our knowledge, such trade-offs were not known before this work. The approximation guarantee can be seen to be composed of two terms. The first term is the standard $O(\log k )$ guarantee of $k$-\means++, while the second term can be thought of as an additive, scale-invariant term representing the variance of the dataset. Note that as $m$ grows, the second term diminishes rapidly. Indeed, this exponentially decreasing dependence  of $k^{-\Omega(m / \beta(\X))}$ improves on a similar result by \cite{bachem_16a} who instead get a linearly decreasing dependence of $O(1/m)$ , although through a significantly different approach.  \\

\textbf{Correct number of iterations. } Whenever we have such trade-offs, a natural question to ask is : for which value of $m$ can we get $O(\log k)$ competitive solutions like those of $k$-\means++ ? For example, we require $m = \Omega\left(\frac{\Delta_1(\X)}{\Delta_k(\X)}\right)$ in \cite{bachem_16a}'s algorithm. But this means that we would some how need to get an estimate for $\Delta_k(\X)$, which involves solving the $k$-\means problem itself ! Fortunately, Algorithm \ref{alg:rskpp} can \textit{``discover''} the value of $\beta(\X)$ as it executes. We state this as follows : 

\begin{theorem} \label{thm:second}
Let $\epsilon \in (0,1)$ and $k \in \N$ be the number of clusters. Let $\X \subset \R^d$ be any dataset of $n$ points and $S$ be the output of $\mathtt{RS\text{-}k\text{-}means}$++ $(\X,k,\infty)$. Then the following guarantee holds : 
     $$ \E[\Delta(\X,S)] \leq 8(\ln k+2)\Delta_k(\X) $$ 
     Moreover, the computational cost of the algorithm includes a single-time preprocessing cost of $\tilde{O}(\nnz(\X))$ with the cost of performing a single clustering being bounded by $O(\beta(\X)k^2d\log (k/\epsilon))$ with probability atleast $1 - \epsilon$.  Here, $\beta(\X)$ is a parameter such that  $\E[\beta(\X)] = \frac{\Delta_1(\X)}{\Delta_k(\X)}$.
\end{theorem}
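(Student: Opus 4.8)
The plan is to treat the two assertions --- solution quality and running time --- separately, since different mechanisms control them. For the quality bound I would use the fact that with the cap set to $m=\infty$ the rejection sampler is never truncated: in each round it keeps proposing and testing candidate points until one is accepted. A standard rejection-sampling identity then shows that the accepted point is distributed \emph{exactly} as the target $D^2$ distribution $p(x)\propto\Delta(\{x\},S)$ for the current center set $S$. Hence the seeding $S$ returned by $\rskmeans(\X,k,\infty)$ has the same law as the output of ordinary $k$-means++ seeding, so the Arthur--Vassilvitskii analysis \cite{arthur_vassilvitskii_07} applies verbatim and yields $\E[\Delta(\X,S)]\le 8(\ln k+2)\Delta_k(\X)$. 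Equivalently, this is the $m\to\infty$ limit of Theorem \ref{thm:main}, in which the additive term $\frac{6k}{k^{cm/(2\beta(\X))}-1}\Delta_1(\X)$ vanishes.

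For the running time, the central quantity is the per-round acceptance probability of the sampler. I would fix the proposal distribution, once $c_1$ is drawn, to be $q(x)\propto\|x-c_1\|^2=\Delta(\{x\},\{c_1\})$, and accept a proposed $x$ with probability $\Delta(\{x\},S)/\|x-c_1\|^2$; this lies in $[0,1]$ because $c_1\in S$ forces $\Delta(\{x\},S)\le\|x-c_1\|^2$. Summing acceptance against $q$ telescopes the per-point ratios and gives a per-round acceptance probability of exactly $\Delta(\X,S)/\Delta(\X,\{c_1\})$, while the accepted point indeed has law $\propto\Delta(\{x\},S)$, confirming the exactness used above. Since $|S|\le k$ in every round, monotonicity of the optimal cost gives $\Delta(\X,S)\ge\Delta_k(\X)$, so each round accepts with probability at least $\Delta_k(\X)/\Delta(\X,\{c_1\})$. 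Setting $\beta(\X)=\Delta(\X,\{c_1\})/(2\Delta_k(\X))$ makes this lower bound $1/(2\beta(\X))$, and averaging over the uniformly chosen $c_1$ via the identity $\E[\Delta(\X,\{c_1\})]=2\Delta_1(\X)$ recovers $\E[\beta(\X)]=\Delta_1(\X)/\Delta_k(\X)$.

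The high-probability cost bound then reduces to a geometric tail estimate. Conditioned on $c_1$, the number of trials $T_i$ in round $i$ is dominated by a geometric random variable with success probability at least $1/(2\beta(\X))$, so $\Pr[T_i>2\beta(\X)\ln(k/\epsilon)]\le e^{-\ln(k/\epsilon)}=\epsilon/k$. A union bound over the $k$ rounds shows that, with probability at least $1-\epsilon$, every round uses $O(\beta(\X)\log(k/\epsilon))$ trials. Charging $O(id)$ to each trial in round $i$ (the cost of evaluating $\Delta(\{x\},S)$ against the $i$ current centers) and summing $\sum_{i=1}^{k}i=O(k^2)$ yields the stated clustering cost $O(\beta(\X)k^2d\log(k/\epsilon))$; the $\tilde{O}(\nnz(\X))$ preprocessing is the one-time expense of computing the weights $\|x-c_1\|^2$ and building the sampler over them.

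I expect the main difficulty to be the high-probability accounting rather than the expectation bound. The clean version needs three ingredients to line up: (i) the per-round acceptance probability must be identified as the exact telescoped ratio $\Delta(\X,S)/\Delta(\X,\{c_1\})$; (ii) the deterministic inequality $\Delta(\X,S)\ge\Delta_k(\X)$ is what makes the geometric success probability uniform across all rounds and independent of which centers happened to be sampled, so that a single tail bound applies everywhere; and (iii) the union bound must be arranged so that the $\log(k/\epsilon)$ factor enters once rather than compounding across rounds. A secondary subtlety is that $\beta(\X)$ is itself random through the choice of $c_1$: the tail estimate is conditional on $c_1$, and the statement is cleanest if $\beta(\X)$ is carried as a data-dependent random variable whose expectation equals $\Delta_1(\X)/\Delta_k(\X)$, exactly as the theorem records.
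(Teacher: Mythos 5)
Your probabilistic skeleton is sound and, in structure, matches the paper's: with $m=\infty$ the accepted point is distributed exactly according to $D^2(\X,S)$, so the Arthur--Vassilvitskii guarantee applies verbatim; the per-round trial count is geometric, the $\log(k/\epsilon)$ factor enters once through a union bound over the $k-1$ sampling rounds, and charging $O(td+\log|\X|)$ per trial and summing over rounds gives the $O(\beta(\X)k^2d\log(k/\epsilon))$ clustering cost. Your identity for the per-round acceptance probability, $\Delta(\X,S)/\Delta(\X,\{c_1\})$, and the resulting $\E[\beta(\X)]=\Delta_1(\X)/\Delta_k(\X)$ via $\E[\Delta(\X,\{c_1\})]=2\Delta_1(\X)$ are also correct.

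The gap is in your choice of proposal distribution. You take $q(x)\propto\|x-c_1\|^2$ and build the sampling structure over these weights, but $c_1$ is drawn only when the clustering run begins, so this structure cannot be prepared in a single-time preprocessing phase: every clustering run draws a fresh $c_1$ and must rebuild the sampler at cost $\tilde{O}(\nnz(\X))$. This defeats the cost decomposition the theorem asserts --- a one-time $\tilde{O}(\nnz(\X))$ preprocessing after which \emph{each} clustering costs only $O(\beta(\X)k^2d\log(k/\epsilon))$ --- and it also means you are analyzing a different sampler from the $\rskmeans(\X,k,\infty)$ named in the statement (your exactness argument does transfer the quality bound, since both are exact $D^2$ samplers, but the cost claim is about the paper's procedure). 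The paper's device for exactly this problem is: (i) center the data in preprocessing, so that $\|\X\|^2=\Delta_1(\X)$; (ii) replace your proposal by $D_2(x)\propto\|x\|^2+\|c_1\|^2$, justified by the chain $\Delta(x,S)\le\|x-c_1\|^2\le 2\left(\|x\|^2+\|c_1\|^2\right)$; and (iii) observe that $D_2$ is a mixture of the \emph{static} distribution $D_\X(x)=\|x\|^2/\|\X\|^2$ and the uniform distribution, with mixing weight $\|\X\|^2/(\|\X\|^2+|\X|\|c_1\|^2)$ computable on the fly --- so a single $c_1$-independent tree over the values $\|x\|^2$ suffices to sample from $D_2$ in $O(\log|\X|)$ time for whatever $c_1$ is drawn. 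The price is only a factor $2$ in the oversampling parameter, and centering then gives $\E[\tau]\le 4\Delta_1(\X)/\Delta_k(\X)$, i.e. $\E[\beta(\X)]\le\Delta_1(\X)/\Delta_k(\X)$, after which your tail estimate and cost accounting go through unchanged.
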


\textbf{Experimental results.} We evaluate our algorithms experimentally on several data sets as described in Section ~\ref{appendix:experiments}.

\subsection{Overview of Our Techniques}
\textbf{Algorithm.} Our main algorithm is outlined in Algorithm~\ref{alg:rskpp}. It consists of a light-weight pre-processing step followed by choosing new centers according to the procedure $\mathtt{D^2\text{-}sample}$. This procedure consists of two parts : the first part is a rejection sampling loop, which generates samples distributed according to the  $D^2$ distribution using samples generated from a specific distribution which is \textit{easy to sample from}, being setup during the pre-processing itself. In case no sample is generated in $m$ iterations, the  second part consists of  choosing the next center uniformly at random. \\

\textbf{Proof intuition.} To analyze the expected solution quality of \rskpp, we study a variant of \kpp which we call $\delta$-\kpp. In this variant , instead of sampling the next center from the $D^2$ distribution $p(x) = \frac{\Delta(x,S)}{\Delta(\X,S)}$, we sample from a different distribution defined by $$p'(x) =(1 - \delta)\frac{\Delta(x,S)}{\Delta(\X,S)} + \delta\frac{1}{|\X|} $$

The parameter $\delta$ can be thought of as representing the probability that $\mathtt{sampled = False}$ after the \textbf{repeat} loop is executed. If this event happens, we choose a center uniformly at random.  Consider the case when $\delta = 0$ : this means that we get $O(\log k)$ competitive solutions since we sample exactly from the $D^2$ distribution. Now consider the case when $\delta = 1$. This corresponds to choosing all centers uniformly at random. It can be seen \footnote{The cost considering all centers is upper bounded by the cost considering only the first center. Since it is chosen uniformly at random , we can use Lemma 3.1 of \cite{arthur_vassilvitskii_07}.} that in this case, we have $\E[\Delta(\X,S)] \leq 2 \Delta_1(\X)$. So, we expect that $\delta \in (0,1)$ leads to a trade-off between these two terms.  The technical analysis of error propagation due to the use of a slightly perturbed distribution may be of independent interest.

\begin{algorithm}[ht]
   \caption{$\mathtt{RS\text{-}k\text{-}means}$++ $(\X,k,m)$}
   \label{alg:rskpp}
   \textbf{Input :} dataset $\X \subset \R^d$, number of clusters $k \in \N $ and the upper bound on number of iterations $m \in \N$ \\
   \textbf{Output :} $S = \{ c_1, \dots, c_k\} \subset \X$
\begin{algorithmic}[1]
   \STATE $\mathtt{preprocess}(\X)$
   \STATE Choose $c_1 \in \X$ uniformly at random and set $S \gets \{c_1\}$
   \FOR{$i \in \{2,\dots,k\}$}
    \STATE $c_i \gets \mathtt{D}^2\text{-}\mathtt{sample}(\X,S,m)$
   \STATE $S \gets S \cup\{c_i\}$
   \ENDFOR
   \STATE {\bfseries return} $S$
\end{algorithmic}
\end{algorithm}

\begin{algorithm}[ht]
    \floatname{algorithm}{Procedure}
   \caption{$\mathtt{preprocess}(\X)$}
   \label{proc:preproc}
   \textbf{Input :} dataset $\X \subset \R^d$\\
   \textbf{Ensure :} $\X$ is centered
\begin{algorithmic}[1]

   \STATE Compute the mean $\mu(\X)$ of the dataset $\X$ and perform $x \gets x - \mu(\X)$ for every $x \in \X$
   \STATE Setup the sample and query access data structure to enable sampling from the distribution $D_\X(x) = \frac{\|x\|^2}{\|\X\|^2}$
\end{algorithmic}
\end{algorithm}

\begin{algorithm}[ht]
    \floatname{algorithm}{Procedure}
   \caption{$\mathtt{D}^2\text{-}\mathtt{sample}(\X,S,m)$} 
   \label{proc:sample}
   \textbf{Input :} dataset $\X \subset \R^d$, currently chosen centers $S \subset \X$ and upper bound on number of iterations $m \in \N$\\
   \textbf{Output :} next center $c \in \X$
   
\begin{algorithmic}[1]
 \STATE $\mathtt{iter} \gets0$ and $\mathtt{sampled} \gets \mathtt{False}$
   \REPEAT
        \STATE $\mathtt{iter} \gets \mathtt{iter}+1$
        \STATE $r \sim [0,1]$
        \STATE Choose $x \in \X$ with probability$\frac{\|x\|^2 + \|c_1\|^2}{\|\X\|^2 + |\X|\|c_1\|^2}$ \label{line:5}
        \STATE Compute $\rho(x) = \frac{1}{2} \frac{\Delta(x,S)}{\|x\|^2 + \|c_1\|^2}$
        \IF{$r \leq \rho(x)$}
            \STATE Set $c$ to be $x$ and $\mathtt{sampled = True}$
        \ENDIF
   \UNTIL{$\mathtt{sampled = True}$ or $\mathtt{iter} > m$ }
   \IF{$\mathtt{sampled = False}$}
        \STATE Choose $c \in \X$ uniformly at random 
   \ENDIF
   \STATE \textbf{return} $c$

\end{algorithmic}
\end{algorithm}

\subsection{Advantages of our approach}
\textbf{Fast data updates.} Rejection sampling essentially involves converting samples from a distribution which is $\textit{``easy to sample from"}$ to a required distribution. 
The single time pre-processing 
sets up a simple binary tree data structure \footnote{We were inspired by \cite{tang_19} which introduced a randomized linear algebra based framework for efficient simulation of \textit{quantum machine learning} algorithms. } for sampling from an appropriate distribution. This structure supports addition and update of a data point in $O(\log 
|\X|)$ time while taking up only $O(\nnz(\X))$ additional space. The details are given in Section~\ref{subsec:data structure}.\\

\textbf{Parallel setting.} The simplicity of our approach extends easily to parallel and distributed settings. 
We briefly discuss implementing the procedure $\mathtt{D^2\text{-}sample}$ in such settings. 
 We assume that the dataset $\X$ is on a single machine which has $M$ cores. Suppose that the probability that a sample is output in a single round of the \textbf{repeat} loop is $p$. Recall that we have $p \geq \frac{\Delta_k}{2 \Delta_1} $. The expected number of rounds that one must wait for a sample to be generated is atmost $2 \Delta_1 / \Delta_k$. Also notice that each round is independent of other rounds. So we can utilize all $M$ cores to perform rejection sampling until one of them outputs a sample. Hence, the probability that a sample is generated in a round now becomes $1 - (1 - p)^M \geq 1-e^{-pM}$. Hence the number of rounds needed to get a sample is atmost $\frac{e^{pM}}{e^{pM}-1}$ in expectation, which decreases drastically as $M$ increases.

\subsection{Comparison with Related Work} \label{subsec:compare} 

In this section we compare our results for $k$-\means++ with other fast implementations having theoretical guarantees.

\textbf{MCMC methods.} The line of work \cite{bachem_16b,bachem_16a} uses the Monte-Carlo-Markov-Chain based Metropolis-Hastings algorithm \cite{hastings_70} to approximate the $D^2$-distribution in $k$-\means++. This involves setting up a markov chain of length $m$ to generate samples from the $D^2$ distribution $p(\cdot)$ using samples from a proposal distribution $q(\cdot)$. \cite{bachem_16b} used $q(\cdot)$ as the uniform distribution. To bound the solution quality of their method, they introduce the following parameters :

$$ \alpha(\X) \coloneqq \max_{x \in \X} \frac{\Delta(x,\mu(\X))}{\Delta_1(\X)} \quad \beta(\X) \coloneqq \frac{\Delta_1(\X)}{\Delta_k(\X)}, $$

and show that $\alpha(\X) \in O(\log^2n)$ and $\beta(\X) \in O(k)$ under some assumptions on the data distribution that is natural, but $\mathtt{NP}$-hard to check. 
By doing so, they bound the required chain length $m \in O(\alpha(\X)\beta(X) \log k \beta(\X)) \in O(k^3d\log^2n\log k)$ to achieve $O(\log k)$ competitive solutions. This was improved upon by \cite{bachem_16a} by using a more suitable proposal distribution which needs $O(nd)$ pre-computation time. By doing so, they get rid of dependence on $\alpha(\X)$ while showing a tradeoff between computational cost and approximation guarantee (see Table \ref{tab:compare}) without any data assumptions. They incur an additional $O(1/m) \Delta_1(\X)$ error for a runtime $ \in O(mk^2d\log k)$. Our rejection sampling approach  has the advantage of being independent of $\alpha(\X)$, providing a stronger guarantee with only $k^{-\Omega\left(\frac{m}{\beta(\X)}\right)}\Delta_1(\X)$ additive error and being easy to extend to the parallel setting. On the other hand, MCMC methods are generally viewed to be inherently sequential \footnote{Note that the pre-processing step of \cite{bachem_16a} is easily parallelized.}. \\

\textbf{Tree embeddings and ANNS.} \cite{cohen-addad_20} introduced an algorithmically sophisticated approach to speeding up $k$-\means++, focusing on the large $k$ regime. They use $\mathtt{MultiTree}$ embeddings with $O(d)$ expected distance distortions to update the $D^2$ distribution efficiently. They then use locality-sensitive hashing-based data structures for approximate nearest neighbor search to speed up their algorithm. This adds a significant layer of complexity in implementation. Their runtime also depends on the aspect ratio $\eta$, which may be quite large in case there are points in the dataset which are very close to each other. It has better dependence on $k$ but additional $n^{O(1)}, \log^{O(1)} \eta $ factors and cubic dependence on $d$ \footnote{\cite{cohen-addad_20} recommend using dimension reduction techniques such as  the Johnson-Lindenstrauss transformation \cite{johnson_lindenstrauss_84}, which adds to the complexity of their approach.}. Moreover, their algorithm is advantageous only for large $k \sim 10^3$. Note that they also use rejection sampling to take into account the distance distortions, which is different from our use of rejection sampling. Our approach provides improved trade-offs while being simple. \\

\textbf{1-D projections. } \cite{charikar_23} proposed an efficient method to perform the $k$-\means++ seeding in 1 dimension in $O(n \log n)$ time with high probability. For a general $d$-dimensional dataset, they first project it on a randomly chosen $d$- dimensional gaussian vector followed by an application of the 1-D method. This allows them to get an extremely fast runtime of $O(\nnz(\X) + n\log n)$. However, they only get $O(k^4 \log k)$ competitive solutions, which shows up in their experimental evaluations as well. They show how to get $O(\log k)$ competitive solutions by using coresets, but end up with an additional high degree ${O}(k^5 d \log k \log (k \log k))$ \footnote{\cite{charikar_23} denote the size of the coreset as $s \in \Omega\left( \eps^{-2}k\gamma d\log (k \gamma) \right)$ where $\gamma$ is the approximation ratio of the 1-d method i.e, $\gamma \in O(k^4 \log k)$ . This is only required for the theoretical guarantee of being $O(\log k)$ competitive to hold true. The coreset size can be treated as a hyper-paramter for trade-off between runtime and solution quality as well. } dependence. This may be restrictive even for moderate values of $k$, while our algorithm only has $O(k^2)$ dependence. \\

\textbf{Other related works.} \cite{bachem_17b} showed similar trade-offs for the $k$-\means$\mathtt{||}$ algorithm of \cite{bahmani_12} in the distributed setting. They also get an additive scale-invariant factor in the approximation guarantee which diminishes with increase in  the number of rounds and the oversampling factor of $k$-\means$\mathtt{||}$. In contrast, we present a new rejection sampling based algorithm for $k$-\means++ with improved trade-offs. More recently, \cite{jaiswal_24} proposed an algorithm for performing the $k$-\means++ seeding in $\tilde{O}(nd + \eta^2k^2d)$ by using the framework of \cite{tang_19} through a data structure similar to the one used by us in the pre-processing step. 

\begin{table*}[ht]
\caption{Comparison of computational complexity and approximation guarantee of various approaches to speed up $k$-\means++. Here, $\Delta$ is the clustering cost for the centers returned by the algorithm and $\Delta_k$ is the optimal $k$-\means cost}
\label{tab:compare}
\vskip 0.15in
\begin{center}
\begin{scriptsize}

\begin{tabular}{p{1.5cm}p{4cm}p{4.5cm}p{4cm}}
\toprule

{\sc Approach}  & {\sc Comp.  Complexity} & {\sc Approx. Guarantee}  & { \sc Remarks} \\

\midrule

\cite{bachem_16b} & $O(k^3d \log^2n \log k)$ & $\E[\Delta] \leq 8(\ln k +2)\Delta_k$ & The analysis only holds when the dataset satisfies certain assumptions which are $\mathtt{NP}$-hard to check \\

\midrule

\cite{bachem_16a}  & $O(nd) + O(mk^2d\log k)$ & $\E[\Delta] \leq 8(\ln k +2)\Delta_k + O\left(\frac{1}{m}\right) \Delta_1  $ & $m$ is the markov chain length used \\

\midrule

{\bf Our}  & $O(\nnz(\X)) + O(mk^2d\log k)$ & $\E[\Delta] \leq 8(\ln k+2)\Delta_k + 6k^{-\Omega(m/\beta) }\Delta_1$ & $\nnz(\X)$ represents the input sparsity. The bound on number of iterations for rejection sampling is $O(m\log k)$. $\E[\beta] = \Delta_1 / \Delta_k$\\

\midrule

\cite{cohen-addad_20} & $O\left(n(d+\log n) \log (\eta d)\right) + O\left(\eps^{-1}kd^3 \log \eta (n \log \eta )^{O(\eps)}\right)$ & $\E[\Delta] \leq 8\eps^{-3}(\ln k +2)\Delta_k  $ & $\eps \in (0,1)$ is a sufficiently small error factor for the LSH data structure . $\eta$ is the aspect ratio i.e, $\eta = \frac{\max_{x,y \in \X} \|x-y\|}{\min_{x,y \in \X} \|x-y\|}$  \\

\midrule

\cite{charikar_23}  & $O(\nnz(\X)) + O(n\log n)$ & $\E[\Delta] \leq 51k^4(\ln k+2)\Delta_k$ & $\nnz(\X)$ represents the input sparsity. The exact constant is upper bounded by $8\sqrt{24\sqrt{e}} \simeq 50.3$ \\

\midrule

\cite{charikar_23}  & $O(\nnz(\X)) + O(n\log n) + O(\eps^{-2}k^5 d \log k \log (k \log k)$ & $\E[\Delta] \leq 8(\ln k+2)(1 + \eps)\Delta_k$ & $\nnz(\X)$ represents the input sparsity. The high polynomial factor in $k$ is due to coreset constructions \\

\bottomrule
\end{tabular}

\end{scriptsize}
\end{center}
\vskip -0.1in
\end{table*}

\section{Preliminaries}

For any two points $p,q \subset \R^d$, $\|p-q\|$ denotes their Euclidean distance. Throughout the paper, we denote the $d$ dimensional dataset to be clustered by $\X \subset \R^d$ with $|\X| = n$. For a set of points $\P \subset \R^d$, The number of non-zero elements in $\P$ is denoted by $\nnz(\P)$. Note that when all points in $\P$ are distinct, we have $|\P| \leq \nnz(\P)$. We define the \textit{norm} of the set  $\P$ to be the quantity $\|\P\| = \sqrt{\sum_{p \in \P} \|p\|^2}$. The $\mathtt{k}\text{-}\mathtt{means}$ clustering cost of $\P$ with respect to a set of centers $C$ is denoted by : 

$$ \Delta(\P,C) = \sum_{p \in \P} \min_{c \in C}\|p -c\|^2$$
When either $\P$ or $C$ is a singleton set, we use expressions like $\Delta(p,C)$ or $\Delta(\P,c)$ instead of  $\Delta(\{p\},C)$ or $\Delta(\P,\{c\})$ respectively. The $D^2$ distribution over $\P$ with respect to $C$ is denoted by $D^2(\P,C)$ where the probability of a point $p \in \P$ being chosen is $\frac{\Delta(p,C)}{\Delta(\P,C)}$. $D_\P$ denotes the distribution over $\P$ defined as $D_\P(p) = \frac{\|p\|^2}{\|\P\|^2}$ for each $p \in \P$. For a set $\P$ and a probability distribution $D$ over $\P$, $p \sim D$ denotes sampling a point $p \in \P$ with probability $D(p)$.

\subsection{Data Dependent Parameter} The computation-cost vs. solution-quality trade-off of our algorithm depends on a data-dependent parameter which is bounded  by $ \beta(\X) \coloneqq \Delta_1(\X) / \Delta_k(\X)$. Without any assumptions on $\X$,  this parameter is unbounded (for example, if the data set had only $k$ points, then $\beta(\X) = \infty$, but as \cite{bachem_16b} point out, what is the point of clustering such a dataset if the solution is trivial ?). Indeed, if we assume that $\X$ is generated from some probability distribution over $\R^d$, this parameter becomes independent of $|\X|$, as $|\X|$ grows larger \cite{pollard_81}. Moreover \cite{bachem_16b} showed that for a wide variety of commonly used distributions\footnote{These include the uni-variate and
multivariate Gaussian, the Exponential and the Laplace distributions along with their mixtures. For the exact assumptions made on the dataset, see section 5 of \cite{bachem_16b}} $\beta(\X) \in O(k)$. In the experimental section, we shall also see that on many practical datasets, this parameter does not take on values which are prohibitively large \footnote{Also see the estimated values this parameter for other datasets in Table 1 of \cite{bachem_16b}}.

\section{Rejection Sampling} \label{appendix:rejection sampling}

Given the dataset  $\X\subset\R^d$ and a set of already chosen centers $S\subset \X$, our goal is to obtain a sample from $\X$ according to the $D^2(\X,S)$ distribution. Recall that we defined the distribution $D_\X$ over $\X$ by $D_\X(x) = \frac{\|x\|^2}{\|\X\|^2}$. The main ingredient of our algorithm is a rejection sampling procedure which allows us convert samples from $D_\X$ to a sample from $D^2(\X,S)$. 

We shall  pre-process our dataset so that we can efficiently sample from $D_\X$, and then convert samples from $D_\X$ to samples from $D^2(\X,S)$. Choosing the first center uniformly at random from $\X$ and repeating this procedure for $k-1$ times is precisely our algorithm for performing the $k$-\means++ seeding. 

\begin{definition} \label{def:oversample}

Suppose $D_1$, $D_2$ define probability distributions over $\X$. The distribution $D_2$ is said to $\tau$-\textit{oversample} $D_1$ for $\tau > 0$ if  $D_1(x) \leq \tau D_2(x)$ for each $x \in \X$.  
    
\end{definition}

\begin{algorithm}
    \caption{$\mathtt{RejectionSample}$}
    \label{alg:rej-sample}
    \textbf{Input: } Samples generated from $D_2$ \\
    \textbf{Output: } A sample generated from $D_1$ \\ 
    \vspace{-12pt}
    \begin{algorithmic}[1]
    \STATE $\mathtt{sampled = False}$
    \REPEAT
        \STATE $x \sim D_2$ , $r \sim [0,1]$
        \STATE Compute $\rho(x) = \frac{D_1(x)}{\tau D_2(x)}$
        \IF{$r \leq \rho(x)$}
            \STATE \textbf{output} $x$ and set $\mathtt{sampled= True} $
        \ENDIF
    \UNTIL{$\mathtt{sampled = True}$}

    \end{algorithmic}
\end{algorithm}

Consider Algorithm~\ref{alg:rej-sample} which takes samples generated from $D_2$ as input and outputs a sample generated from $D_1$.
\begin{lemma}\label{lem:oversampling lemma}
    Let $D_1, D_2$ be probability distributions over $\X$ such that $D_2$ $\tau$-\textit{oversamples} $D_1$. The expected number of samples from $D_2$ required by \rejsamp \ to output a single sample from $D_1$ is $\tau$. Moreover, for any $\eps \in (0,1)$ the probability that more than $\tau \ln \frac{1}{\eps}$ samples are required is atmost $\eps$. 
\end{lemma}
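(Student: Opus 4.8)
The plan is to analyze the rejection sampling loop as a sequence of independent Bernoulli trials and compute the relevant statistics directly.

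The plan is to analyze the rejection sampling loop as a sequence of independent Bernoulli trials and compute the relevant statistics directly. First I would compute the acceptance probability of a single iteration. Fix an iteration; it draws $x \sim D_2$ and $r \sim [0,1]$ uniformly, then accepts when $r \leq \rho(x) = \frac{D_1(x)}{\tau D_2(x)}$. The oversampling hypothesis $D_1(x) \leq \tau D_2(x)$ guarantees $\rho(x) \in [0,1]$, so it is a genuine probability and the conditional acceptance probability given $x$ is exactly $\rho(x)$. Averaging over $x$,
$$ \Pr[\text{accept}] = \sum_{x \in \X} D_2(x)\,\rho(x) = \sum_{x \in \X} D_2(x)\cdot \frac{D_1(x)}{\tau D_2(x)} = \frac{1}{\tau}\sum_{x \in \X} D_1(x) = \frac{1}{\tau}, $$
where the last equality uses that $D_1$ is a probability distribution. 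The key observation is that this acceptance probability is a clean constant $p = 1/\tau$, independent of the particular shape of $D_1$ and $D_2$.

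Next I would use independence of the iterations. Since each draw of $(x,r)$ is independent and accepts with the same probability $p = 1/\tau$, the number of samples $N$ required to produce one output is geometrically distributed with parameter $p$. The expected-value claim then follows immediately from $\E[N] = 1/p = \tau$. For the tail bound, I would write $\Pr[N > t] = (1-p)^{\lfloor t \rfloor}$, the probability that the first $\lfloor t \rfloor$ trials all fail, and then apply the standard inequality $1 - p \leq e^{-p}$ to obtain, with $t = \tau \ln(1/\eps)$,
$$ \Pr\!\left[N > \tau \ln\tfrac{1}{\eps}\right] \leq \left(1 - \tfrac{1}{\tau}\right)^{\tau \ln(1/\eps)} \leq e^{-\frac{1}{\tau}\cdot \tau \ln(1/\eps)} = e^{-\ln(1/\eps)} = \eps, $$
which is exactly the desired statement.

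Strictly speaking the proof is almost entirely a routine computation, so there is no deep obstacle; the only step requiring care is the first one, namely verifying that the per-iteration acceptance probability collapses to $1/\tau$ by correctly integrating the uniform variable $r$ against $\rho(x)$ and then summing over $x$. This is where the oversampling definition is used in two ways at once: to ensure $\rho(x)$ is a valid probability and to make the telescoping sum equal $1/\tau$. I would also remark in passing that the same computation shows the output is distributed exactly as $D_1$, since conditioned on acceptance the probability of emitting $x$ is $\frac{D_2(x)\rho(x)}{1/\tau} = D_1(x)$; although the lemma only concerns the sample count, this correctness check confirms the procedure does what Section~\ref{appendix:rejection sampling} requires.
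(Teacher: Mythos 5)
Your proposal is correct and follows essentially the same route as the paper: compute the per-round acceptance probability $\sum_{x\in\X} D_2(x)\rho(x) = 1/\tau$, observe that the number of rounds is geometric with parameter $1/\tau$ (giving expectation $\tau$), and bound the tail via $(1-1/\tau)^t \le e^{-t/\tau}$ with $t = \tau\ln(1/\eps)$. The one blemish --- your chain $\Pr[N>t]=(1-p)^{\lfloor t\rfloor}\le (1-p)^{t}$ goes the wrong way for non-integer $t$ --- is the same integer-rounding informality present in the paper's own proof and does not affect the substance.
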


\begin{proof}
    Let $T$ be the random variable denoting the number of rounds required for a sample to be output by \rejsamp. Let  $\output$ denote the event that a sample is output in a particular round. We have 
    \begin{align*}
        \Pr[\output] = \sum_{x \in \X} D_2(x) \rho (x) = \tau^{-1} \sum_{x \in \X} D_1(x) = \tau^{-1} 
    \end{align*}
    Given that a sample is generated, it is easy to see that it is distributed according to $D_1$.  It takes exactly $t$ rounds for a sample to be generated if no sample is generated in the first $t-1$ iterations and a sample is generated in the last iteration. Hence,
    $$\Pr[T = t] = \left(\Pr[\neg \output] \right)^{t-1} \Pr[\output] = \tau^{-1}(1 - \tau^{-1})^{t-1}$$

    which means that $T$ is a geometric random variable with parameter $\tau^{-1}$, so that $E[T] = \tau$.  It is easy to see that $T$ has exponentially diminishing tails : 
    \begin{align*}
    \Pr[T > t] &= \sum_{j = t+1}^{\infty} \tau^{-1}(1 - \tau^{-1})^{j-1} 
    = (1 - \tau^{-1})^t \leq e^{-t/\tau}
    \end{align*}
    from which the lemma follows. 
\end{proof}

\begin{remark}
    Note that the algorithm does not require any estimate on the value of $\tau$, computing which may be non-trivial. It only requires the ability to compute the ratio $\rho(x) = \frac{D_1(x)}{\tau D_2(x)}$ for each $x \in \X$. 
\end{remark}

In the current form, Algorithm \ref{alg:rej-sample} does not have any control over the number of samples from $D_2$ which it may need to examine. However, a bound on the number of samples to be examined can be used if we are content with sampling from a slightly perturbed distribution. Suppose we have another distribution $D_3$ over $\X$. This time we are allowed to use samples coming from $D_2$ and $D_3$ and instead of a sample from $D_1$, we are content with obtaining a sample generated by a hybrid distribution $D(x) = (1-\delta)D_1(x) + \delta D_3(x)$ for some small enough $\delta \in (0,1)$. For this we can modify Algorithm ~\ref{alg:rej-sample} to Algorithm~\ref{alg:rej-sample-m} as follows :

\begin{algorithm}
    \caption{$\mathtt{RejectionSample}(m)$}
    \label{alg:rej-sample-m}
    \textbf{Input: } Samples generated from $D_2, D_3$ \\
    \textbf{Output: } A sample $x$ with probability  $D(x) = (1 - \delta)D_1(x) + \delta D_3(x)$ where $\delta \leq e^{-m/\tau}$ \\ 
    \begin{algorithmic}[1]
    \STATE $\mathtt{sampled = False}$ , $\mathtt{iter} = 0$
    \REPEAT
        \STATE $\mathtt{iter = iter+1}$
        \STATE $x \sim D_2$ , $r \sim [0,1]$
        \STATE Compute $\rho(x) = \frac{D_1(x)}{\tau D_2(x)}$
        \IF{$r \leq \rho(x)$}
            \STATE \textbf{output} $x$ and set $\mathtt{sampled= True} $
        \ENDIF
    \UNTIL{$\mathtt{sampled = True}$ \textbf{or} $\mathtt{iter} > m$}
    \IF{$\mathtt{sampled = False}$}
        \STATE \textbf{output} $x \sim D_3$
    \ENDIF

    \end{algorithmic}
\end{algorithm}

\begin{lemma}\label{lem: bounded oversampling lemma}
    Let $m > 0$ be the upper bound on the number of rounds in \rejsamp. The output samples come from a distribution $D$ which can be expressed as $D(x) =(1 - \delta)D_1(x) + \delta D_3(x)$ where $\delta \leq e^{-m/\tau}$.  
\end{lemma}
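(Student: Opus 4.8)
The plan is to track the output distribution round by round and decompose it into the two mutually exclusive ways the algorithm can terminate: either some round within the first $m$ accepts a sample, or all $m$ rounds reject and the algorithm falls back to $D_3$. The key observation, already implicit in the proof of Lemma~\ref{lem:oversampling lemma}, is that the per-round acceptance probability is exactly $\tau^{-1}$ and, crucially, that the \emph{shape} of the accepted sample's distribution is $D_1$ independently of which round the acceptance occurs in. Hence truncating the loop at $m$ rounds only rescales the mass placed on the $D_1$ component; it does not distort it.

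First I would recall from the computation in Lemma~\ref{lem:oversampling lemma} that in any single round the joint probability of accepting and of the accepted point being $x$ equals $D_2(x)\rho(x) = D_1(x)/\tau$, so the total per-round acceptance probability is $p \coloneqq \tau^{-1}$. The event that the first acceptance happens in round $t$ with sampled point $x$ requires rounds $1,\dots,t-1$ to reject (each independently with probability $1-p$) and round $t$ to accept $x$, giving probability $(1-p)^{t-1} D_1(x)/\tau$. Summing this geometric series over $t = 1, \dots, m$ and using $\tau p = 1$, the probability that the algorithm outputs $x$ via acceptance within $m$ rounds collapses to $D_1(x)\bigl(1 - (1-p)^m\bigr)$.

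Next I would handle the fallback branch: the algorithm outputs a fresh sample $x \sim D_3$ precisely when all $m$ rounds reject, an event of probability $(1-p)^m$, contributing $(1-p)^m D_3(x)$ to the mass at $x$. Setting $\delta \coloneqq (1-p)^m = (1-\tau^{-1})^m$ and adding the two contributions yields, for every $x \in \X$,
\[
D(x) = (1-\delta)D_1(x) + \delta D_3(x),
\]
which is exactly the claimed hybrid form. Finally, the bound on $\delta$ follows from the elementary inequality $1 - y \leq e^{-y}$ applied with $y = \tau^{-1}$, giving $\delta = (1 - \tau^{-1})^m \leq e^{-m/\tau}$; this is the same exponential tail already established in Lemma~\ref{lem:oversampling lemma}.

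I do not anticipate a substantive obstacle here, since the argument is essentially a bookkeeping of the geometric structure of the loop. The only point requiring care is to justify that the conditional distribution of the accepted sample does not depend on the round of acceptance, so that the sum over $t$ factors cleanly as $D_1(x)$ times a round-independent scalar; this is what lets the truncation affect only the weight $\delta$ and not the component $D_1$ itself.
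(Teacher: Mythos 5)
Your proof is correct and follows essentially the same route as the paper's: decompose the output distribution according to whether acceptance occurs within the first $m$ rounds (contributing $(1-\delta)D_1(x)$) or the fallback to $D_3$ fires (contributing $\delta D_3(x)$), with $\delta = (1-\tau^{-1})^m \leq e^{-m/\tau}$. The only difference is presentational — the paper invokes the tail bound and the conditional distribution from Lemma~\ref{lem:oversampling lemma} directly, whereas you re-derive them by summing the geometric series explicitly, which makes the round-independence of the accepted sample's distribution more transparent but is the same argument.
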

\begin{proof}
    A point is sampled from $D_3$ if and only if no sample is generated in $m$ rounds of rejection sampling. Hence, the probability of sampling a point $x \in \X$ is :
    \begin{align*}
    &\Pr[x \sim \mathtt{RejectionSample}(m)] \\
    &= \Pr[x | T \leq m] \Pr[T \leq m] + \Pr[x|T > m] \Pr[T > m] \\ 
    &=  (1 - \delta)D_1(x) + \delta D_3(x)
    \end{align*}
    where $\delta = \Pr[T > m] \leq e^{-m / \tau}$  from the proof of Lemma~\ref{lem:oversampling lemma}.
\end{proof}
\subsection{Application to \texorpdfstring{$\mathtt{RS}\text{-}\mathtt{k\text{-}means}$++}{rskmeans}}

Recall that our goal in \kpp is to sample the centers from the distribution $D^2(\X,S)$ over $\X$ given by $D^2(\X,S) = D_1(x) = \frac{\Delta(x,S)}{\Delta(\X,S)}$. In this work we present two methods, one which samples the centers from the $D^2$ distribution and another which samples from a slightly `perturbed' distribution $D(x) = (1-\delta)D_1(x) + \delta D_3(x)$ where $D_3(x) = \frac{1}{|\X|}$ is simply the uniform distribution over $\X$. We will use the $\mathtt{RejectionSample}$ and $\mathtt{RejectionSample}(m)$ for these methods respectively. In both cases we need to find a suitable distribution $D_2$ over $\X$ that $\tau$-\textit{oversamples} $D_1$ (for a suitable $\tau$) and for which we have an efficient method to obtain samples. 

\begin{lemma}\label{lem:overample D^2}
    Let $S = \{c_1,\ldots,c_t\} \subset \X$ be chosen according to the $D^2$ distribution (In particular, $c_1$ is a uniformly random point in $\X$). Let $D_2$ be the distribution over $\X$ defined by 
    \begin{align*}
        D_2(x) &= \frac{\|x\|^2 + \|c_1\|^2}{\|X\|^2 + |\X|\|c_1\|^2}
    \end{align*}
    Then $D_2$ $\tau$-oversamples $D^2(\X,S)$ for $\tau = 2\frac{\|X\|^2 + |\X|\|c_1\|^2}{\Delta(\X,S)}$.
\end{lemma}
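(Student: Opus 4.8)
The plan is to verify the defining inequality of $\tau$-oversampling from Definition~\ref{def:oversample} directly, namely that $D^2(\X,S)(x) \le \tau\, D_2(x)$ holds for every $x \in \X$ with the claimed value of $\tau$. First I would confirm that $D_2$ is a genuine probability distribution over $\X$: summing the numerator gives $\sum_{x\in\X}(\|x\|^2 + \|c_1\|^2) = \|\X\|^2 + |\X|\,\|c_1\|^2$, using the definition $\|\X\|^2 = \sum_{x\in\X}\|x\|^2$ and the fact that $\|c_1\|^2$ is a constant summed $|\X|$ times. Hence the masses sum to $1$, matching the normalizing constant in the denominator.

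Next, substituting $\tau = 2(\|\X\|^2 + |\X|\,\|c_1\|^2)/\Delta(\X,S)$ into $\tau\, D_2(x)$ cancels the common factor $\|\X\|^2 + |\X|\,\|c_1\|^2$. Since $D^2(\X,S)(x) = \Delta(x,S)/\Delta(\X,S)$, the target inequality reduces, after multiplying both sides by $\Delta(\X,S)$, to the clean pointwise bound
$$\Delta(x,S) \le 2\left(\|x\|^2 + \|c_1\|^2\right).$$

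The core of the argument is establishing this last inequality. Since $c_1 \in S$, discarding all the other centers can only increase the distance, so $\Delta(x,S) = \min_{c\in S}\|x-c\|^2 \le \|x - c_1\|^2$. Then the elementary inequality $\|x - c_1\|^2 \le 2\|x\|^2 + 2\|c_1\|^2$ — which follows from $(\|x\|+\|c_1\|)^2 \le 2(\|x\|^2+\|c_1\|^2)$ together with the triangle inequality, or equivalently from the parallelogram law $\|a-b\|^2 + \|a+b\|^2 = 2\|a\|^2 + 2\|b\|^2$ — completes the chain and yields exactly the required bound.

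I do not expect a genuine obstacle here; the one point worth being careful about is that the upper bound on $\Delta(x,S)$ must be taken through $c_1$ specifically, since $\|c_1\|$ is precisely the quantity appearing in $D_2$. This is what makes $D_2$ a valid proposal: it is built only from the point norms and $\|c_1\|$ — and is therefore samplable after the light preprocessing of Procedure~\ref{proc:preproc} — yet still dominates the full $D^2(\X,S)$ distribution, because the later centers $c_2,\dots,c_t$ only decrease $\Delta(x,S)$ and thus preserve the inequality.
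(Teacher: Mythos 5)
Your proposal is correct and follows essentially the same route as the paper: the key step in both is the chain $\Delta(x,S) \le \|x-c_1\|^2 \le 2(\|x\|^2 + \|c_1\|^2)$, which after dividing by $\Delta(\X,S)$ gives exactly the claimed $\tau$-oversampling bound (the paper attributes the elementary inequality to Cauchy--Schwarz, while you invoke the parallelogram law/triangle inequality --- the same fact). Your additional check that $D_2$ normalizes to $1$ is a harmless extra that the paper leaves implicit.
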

\begin{proof}
    \begin{align*}
        \Delta(x,S) &= \min_{c\in S}\|x-c\|^2 \leq \|x-c_1\|^2 \leq 2(\|x\|^2+\|c_1\|^2)
    \end{align*}
    where the final inequality is obtained via the Cauchy-Schwarz inequality.
    Multiplying both sides by $\frac{1}{\Delta(\X,S)}$ gives the required result.
\end{proof}

An immediate issue is: how do we actually obtain samples from $D_2$? We will deal with this issue in a bit; for now, assume that we can efficiently obtain such samples after a preprocessing step. 

With this, we can apply Algorithm~\ref{alg:rej-sample} for $D_1$ being the required $D^2(\X,S)$ distribution and $D_2$ as in Lemma~\ref{lem:overample D^2}. It can be seen that for these distributions, Algorithm~\ref{alg:rej-sample} is equivalent to Procedure\ref{proc:sample} where $m=\infty$. Thus Lemma~\ref{lem:oversampling lemma} gives the following Corollary.

\begin{corollary}\label{cor:D^2 time}
Let $\epsilon \in (0,1)$ and $\X \subset \R^d$ be any dataset of $n$ points and $S = \{c_1,\ldots,c_t\} \subset \X$ such that $c_1$ is a uniformly random point in $\X$. Assume that we can obtain a sample from the following distribution over $\X$ in $O(\log{|\X|})$ time:
    \begin{align*}
        D_2(x) &= \frac{\|x\|^2 + \|c_1\|^2}{\|X\|^2 + |\X|\|c_1\|^2}
    \end{align*}
Then Procedure~\ref{proc:sample} outputs a sample $c\in\X$ according to the distribution $D^2(\X,S)$. Moreover, the computational cost of the algorithm is bounded by $O(\beta(\X)\cdotp (td + \log{|\X|})\cdotp\log{(1/\epsilon)})$ with probability atleast $1 - \epsilon$.  Here, $\beta(\X)$ is a parameter such that  $\E[\beta(\X)] \leq \frac{\Delta_1(\X)}{\Delta_k(\X)}$.
\end{corollary}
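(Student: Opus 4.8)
The plan is to establish the two assertions separately, leaning on the machinery already assembled. For correctness, I would first confirm that Procedure~\ref{proc:sample} run with $m = \infty$ is literally an instance of Algorithm~\ref{alg:rej-sample} with $D_1 = D^2(\X,S)$ and $D_2$ as in Lemma~\ref{lem:overample D^2}. The only thing to verify is that the acceptance ratio computed on line~6, namely $\rho(x) = \tfrac{1}{2}\tfrac{\Delta(x,S)}{\|x\|^2 + \|c_1\|^2}$, coincides with the ratio $\tfrac{D_1(x)}{\tau D_2(x)}$ prescribed by Algorithm~\ref{alg:rej-sample} for $\tau = 2\tfrac{\|\X\|^2 + |\X|\|c_1\|^2}{\Delta(\X,S)}$. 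Substituting the three quantities, the common factors $\Delta(\X,S)$ and $\|\X\|^2 + |\X|\|c_1\|^2$ cancel and the two expressions agree. With this identification in place, the first claim is immediate from the correctness argument inside the proof of Lemma~\ref{lem:oversampling lemma}: conditioned on a sample being accepted in a given round, it is distributed exactly according to $D_1 = D^2(\X,S)$.

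For the running time, I would invoke the tail bound of Lemma~\ref{lem:oversampling lemma}: with probability at least $1 - \epsilon$, the \textbf{repeat} loop halts after at most $\tau \ln(1/\epsilon)$ rounds. It then remains to account for the per-round cost. Drawing $x \sim D_2$ costs $O(\log|\X|)$ by assumption, drawing $r$ is $O(1)$, and evaluating $\rho(x)$ requires computing $\Delta(x,S) = \min_{c \in S}\|x - c\|^2$, i.e. $t$ distance evaluations in $\R^d$, for a cost of $O(td)$. Hence each round costs $O(td + \log|\X|)$, and the overall cost is $O(\tau\ln(1/\epsilon)(td + \log|\X|))$ with probability at least $1 - \epsilon$.

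The remaining work is to replace $\tau$ by the claimed data-dependent quantity. I would set $\beta(\X) := \tfrac{\|\X\|^2 + |\X|\|c_1\|^2}{2\,\Delta(\X,S)}$, so that $\tau = 4\beta(\X)$ and the cost bound becomes $O(\beta(\X)(td + \log|\X|)\log(1/\epsilon))$, as stated. To bound $\E[\beta(\X)]$ I would first use that $S$ contains at most $k$ points, so $\Delta(\X,S) \geq \Delta_{|S|}(\X) \geq \Delta_k(\X)$ holds deterministically; pushing this into the denominator gives $\beta(\X) \leq \tfrac{\|\X\|^2 + |\X|\|c_1\|^2}{2\Delta_k(\X)}$. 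Since the preprocessing has centered the data, $\mu(\X) = 0$ and $\|\X\|^2 = \sum_{x \in \X}\|x\|^2 = \Delta_1(\X)$; moreover, as $c_1$ is uniform in $\X$, $\E[|\X|\|c_1\|^2] = \sum_{x \in \X}\|x\|^2 = \Delta_1(\X)$. Combining, $\E[\beta(\X)] \leq \tfrac{\Delta_1(\X) + \Delta_1(\X)}{2\Delta_k(\X)} = \tfrac{\Delta_1(\X)}{\Delta_k(\X)}$.

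I expect the last step to be the only delicate point. Because the centers $c_2,\dots,c_t$, and hence $\Delta(\X,S)$, are correlated with the random first center $c_1$, one cannot take the expectation of the ratio factor by factor. The clean way around this, which I would emphasize, is to discharge the denominator with the deterministic inequality $\Delta(\X,S) \geq \Delta_k(\X)$ \emph{before} taking any expectation; afterwards only the uniform randomness of $c_1$ enters, and the elementary centering identity $\E[|\X|\|c_1\|^2] = \|\X\|^2 = \Delta_1(\X)$ closes the bound. Everything else is bookkeeping already licensed by Lemmas~\ref{lem:oversampling lemma} and~\ref{lem:overample D^2}.
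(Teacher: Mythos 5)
Your proposal is correct and follows essentially the same route as the paper's proof: identify Procedure~\ref{proc:sample} (with $m=\infty$) with Algorithm~\ref{alg:rej-sample} via Lemma~\ref{lem:overample D^2}, invoke the expectation and tail bounds of Lemma~\ref{lem:oversampling lemma}, and bound $\E[\tau]$ by first applying the deterministic inequality $\Delta(\X,S) \geq \Delta_k(\X)$ and then using the uniformity of $c_1$ together with the centering identity $\|\X\|^2 = \Delta_1(\X)$, setting $\beta(\X) = \tau/4$. Your write-up is somewhat more explicit than the paper's (verifying the acceptance ratio and the $O(td+\log|\X|)$ per-round cost, and flagging the correlation between $c_1$ and $\Delta(\X,S)$ that makes the order of operations matter), but the underlying argument is identical.
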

\begin{proof}
    By Lemma~\ref{lem:overample D^2} we know that $D_2$ $\tau$-oversamples $D^2(\X,S)$ for $\tau = 2\frac{\|X\|^2 + |\X|\|c_1\|^2}{\Delta(\X,S)} \leq 2\frac{\|X\|^2 + |\X|\|c_1\|^2}{\Delta_k(\X)}$. So
    \begin{align*}
        \E[\tau] &\leq 2\frac{\|X\|^2 + |\X|\E[\|c_1\|^2]}{\Delta_k(\X)}\\
        &= 2\frac{\|X\|^2 + |\X|\cdotp \frac{1}{|\X|}\sum_{x\in\X}\|x\|^2}{\Delta_k(\X)}\\
        &= \frac{4\|\X\|^2}{\Delta_k(\X)}\ = \frac{4\Delta_1(\X)}{\Delta_k(\X)}
    \end{align*}
    where the last inequality follows from the fact that $\X$ is translated so that its centroid is at the origin.

    Thus applying Lemma~\ref{lem:oversampling lemma} gives us the result for $\beta(\X) = \tau/4$.
\end{proof}

We can apply this $D^2$ sampling technique $k$ times to obtain the centers according to \kpp. This is what $\rskmeans(\X,k,\infty)$ does and it can be seen that Theorem~\ref{thm:second} follows from Corollary~\ref{cor:D^2 time}; in particular the approximation guarantee of the sampled centers follows from \cite{arthur_vassilvitskii_07}.
 
In our second approach, we replace $\mathtt{RejectionSample}$ by $\mathtt{RejectionSample}(m)$ which only repeats the rejection sampling loop $m$ times for a suitable choice of $m$. In particular, notice that Procedure~\ref{proc:sample} is essentially Algorithm~\ref{alg:rej-sample-m} for $D_1 = D^2(\X,S)$, $D_3$ being the uniform distribution over $\X$ and $D_2$ as in Lemma~\ref{lem:overample D^2}. 

This gives us the following corollary whose proof can be argued analogous to \ref{cor:D^2 time}:
\begin{corollary}\label{cor:delta D^2 time}
Let $m\in\N$ be a parameter, $\X \subset \R^d$ be any dataset of $n$ points and $S = \{c_1,\ldots,c_t\} \subset \X$ such that $c_1$ is a uniformly random point in $\X$. Assume that we can obtain a sample from the following distribution over $\X$ in $O(\log{|\X|})$ time:
    \begin{align*}
        D_2(x) &= \frac{\|x\|^2 + \|c_1\|^2}{\|X\|^2 + |\X|\|c_1\|^2}
    \end{align*}
Then Procedure~\ref{proc:sample} with input $(\X,S,m\log t)$ outputs a sample $c\in\X$ according to the distribution $(1-\delta)\cdotp D^2(\X,S) + \delta\cdotp\U[\X]$ for $\delta \leq e^{-m/\beta(\X)}$.  Moreover, the computational cost of the algorithm is bounded by $O(m\cdotp (td + \log{|\X|})\cdotp\log{t})$ with probability at least $1 - \epsilon$.  Here, $\beta(\X)$ is a parameter such that  $\E[\beta(\X)] \leq \frac{\Delta_1(\X)}{\Delta_k(\X)}$.
\end{corollary}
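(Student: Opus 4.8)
The plan is to reduce the claim to the two rejection-sampling lemmas already established, namely Lemma~\ref{lem: bounded oversampling lemma} (which characterises the output distribution of a capped rejection-sampling loop) and Lemma~\ref{lem:overample D^2} (which supplies the oversampling factor), and then to repeat essentially verbatim the expectation computation from the proof of Corollary~\ref{cor:D^2 time}. First I would observe that Procedure~\ref{proc:sample}, run with iteration cap $M = m\log t$, is literally Algorithm~\ref{alg:rej-sample-m} instantiated with $D_1 = D^2(\X,S)$, with $D_2$ the distribution of Lemma~\ref{lem:overample D^2}, and with $D_3 = \U[\X]$ the uniform distribution: line~\ref{line:5} draws $x\sim D_2$, the acceptance test $r \le \rho(x)$ with $\rho(x)=\tfrac12\,\Delta(x,S)/(\|x\|^2+\|c_1\|^2)$ is exactly $r \le D_1(x)/(\tau D_2(x))$ for the $\tau$ of Lemma~\ref{lem:overample D^2} (one checks $\tau D_2(x) = 2(\|x\|^2+\|c_1\|^2)/\Delta(\X,S)$, so the ratio collapses to $\rho$), and on exhausting the cap the procedure falls back to a uniform draw. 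Given this identification, Lemma~\ref{lem: bounded oversampling lemma} immediately yields that the returned point is distributed as $(1-\delta)\,D^2(\X,S) + \delta\,\U[\X]$ with $\delta \le e^{-M/\tau}$.

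It then remains to control $\tau$ and to convert the cap $M=m\log t$ into the stated $\delta \le e^{-m/\beta(\X)}$. Here I would recycle the calculation in Corollary~\ref{cor:D^2 time}: Lemma~\ref{lem:overample D^2} gives $\tau = 2(\|\X\|^2 + |\X|\,\|c_1\|^2)/\Delta(\X,S) \le 2(\|\X\|^2+|\X|\,\|c_1\|^2)/\Delta_k(\X)$, and since $c_1$ is uniform in $\X$ and $\X$ has been centred at its centroid, $\E[\|c_1\|^2] = \|\X\|^2/|\X|$ and $\|\X\|^2 = \Delta_1(\X)$, whence $\E[\tau] \le 4\Delta_1(\X)/\Delta_k(\X)$. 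Defining $\beta(\X)$ as the appropriate multiple of $\tau$ makes $\E[\beta(\X)] \le \Delta_1(\X)/\Delta_k(\X)$ and turns $e^{-M/\tau}$ into the claimed $e^{-m/\beta(\X)}$ once the $\log t$ factor carried by the cap is absorbed into the normalisation; this is the step where the constants must be tracked, but it is otherwise routine.

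Finally, for the running-time claim I would note that the cap $M=m\log t$ bounds the number of loop iterations deterministically, so (unlike in Corollary~\ref{cor:D^2 time}) no tail bound is required: each iteration costs $O(\log|\X|)$ to draw $x\sim D_2$ plus $O(td)$ to evaluate $\Delta(x,S)$ against the $t$ current centres, giving a per-clustering cost of $O(m(td+\log|\X|)\log t)$. The point requiring care — rather than a genuine obstacle — is the bookkeeping of the previous paragraph: both $\delta$ and $\beta(\X)$ are random, depending on the realised $c_1$ through $\tau$, so the bound $\delta \le e^{-m/\beta(\X)}$ should be read per realisation, and one must fix the normalisation of $\beta(\X)$ together with the $\log t$ in the cap consistently so that the exponent comes out as $m/\beta(\X)$ while the expectation bound $\E[\beta(\X)]\le\Delta_1(\X)/\Delta_k(\X)$ is preserved. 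Everything else is a direct instantiation of Lemmas~\ref{lem:oversampling lemma}--\ref{lem: bounded oversampling lemma}.
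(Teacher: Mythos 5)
Your proposal is correct and is essentially the paper's own argument: the paper's entire ``proof'' of this corollary is the remark that it ``can be argued analogous to Corollary~\ref{cor:D^2 time}'', and your instantiation of Algorithm~\ref{alg:rej-sample-m} (checking $\rho(x)=D_1(x)/(\tau D_2(x))$ for the $\tau$ of Lemma~\ref{lem:overample D^2}), followed by Lemma~\ref{lem: bounded oversampling lemma} and the same $\E[\tau]\le 4\Delta_1(\X)/\Delta_k(\X)$ computation, is precisely that analogy spelled out. The constant-bookkeeping tension you flag is real but is inherited from the paper's statement itself (taking $\beta(\X)=\tau/4$ to get $\E[\beta(\X)]\le\Delta_1(\X)/\Delta_k(\X)$ yields exponent $m\log t/(4\beta(\X))$, which matches the claimed $m/\beta(\X)$ only if the $\log t$ cap absorbs the factor $4$), and your observation that the $m\log t$ cap makes the runtime bound deterministic is, if anything, cleaner than the paper's vestigial ``with probability at least $1-\epsilon$'' clause.
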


Again we can apply this sampling technique $k$ times to obtain $k$ centers. Note that this sampling is from a `perturbed' distribution from $D^2(\X,S)$, so the approximation guarantee no longer follows directly from \cite{arthur_vassilvitskii_07}. However we analyse this in Section~\ref{appendix:delta-k-means++} to get the following:

\begin{theorem}

Let $\X \subset \R^d$ be any dataset which is to be partitioned into $k$ clusters.  Let $S$ be the set of centers returned by $\delta\text{-}\mathtt{k\text{-}means\text{++}}(\X,k,\delta)$ for any $\delta \in (0,0.5)$ . The following approximation guarantee holds : 
$$ \E[\Delta(\X,S)] \leq 8(\ln k+2)\Delta_k(\X) + \frac{6k \delta}{1 - \delta} \Delta_1(\X) $$
\end{theorem}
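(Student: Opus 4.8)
The plan is to retrace the classical potential argument of \cite{arthur_vassilvitskii_07}, tracking the additional error introduced by replacing the $D^2$-distribution $D_1(x) = \Delta(x,S)/\Delta(\X,S)$ with the perturbed distribution $p'(x) = (1-\delta)D_1(x) + \delta/|\X|$ at every seeding step. Fix an optimal clustering $\{A_1,\dots,A_k\}$ and, as in \cite{arthur_vassilvitskii_07}, call an optimal cluster \emph{covered} once $S$ contains a point from it and \emph{uncovered} otherwise. The two ingredients I would reuse verbatim are (i) the uniform bound $\E[\Delta(A,c)] \le 2\Delta_1(A)$ for $c$ chosen uniformly from a cluster $A$ (Lemma 3.1 of \cite{arthur_vassilvitskii_07}), and (ii) the $D^2$ bound $\E[\Delta(A, S\cup\{c\})] \le 8\Delta_1(A)$ for $c$ drawn from $D_1$ restricted to $A$ (Lemma 3.2 of \cite{arthur_vassilvitskii_07}).

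The first step is a single-step lemma showing that the within-cluster sampling is unharmed by the perturbation. Conditioned on the next center landing in a fixed uncovered cluster $A$, its law under $p'$ is a convex combination $\lambda\cdot (D_1|_A) + (1-\lambda)\cdot(\mathrm{Unif}|_A)$ of the $D^2$-restricted and uniform-restricted distributions. Applying (ii) to the first part and (i) to the second, convexity gives $\E[\Delta(A,S\cup\{c\})\mid c\in A] \le 8\Delta_1(A)$ exactly as in the unperturbed case. Thus the only place the perturbation bites is in the \emph{branching probabilities}: the chance of drawing the new center from a covered cluster (a \emph{wasted} step) is inflated from $\Delta(\X_c,S)/\Delta(\X,S)$ to $(1-\delta)\Delta(\X_c,S)/\Delta(\X,S) + \delta|\X_c|/|\X|$, where $\X_c$ denotes the covered points, while the chance of hitting a given uncovered cluster $A$ is still at least $(1-\delta)\Delta(A,S)/\Delta(\X,S)$.

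The core step is a perturbed analogue of Lemma 3.3 of \cite{arthur_vassilvitskii_07}, proved by the same induction on the number $t$ of centers still to be added and the number $u$ of uncovered clusters, but now carrying an extra additive error term that accounts for the inflated waste probability. At each level of the recursion the useful (cluster-covering) transition probabilities are scaled down by a factor $(1-\delta)$ while up to an extra $\delta$ of probability mass is diverted into wasted steps; a wasted step leaves the corresponding residual uncovered cost untouched, and this residual is ultimately controlled by the cost of the single-center clustering, whose expectation is at most $2\Delta_1(\X)$ by (i). Propagating this through the induction and specializing to $t=u=k-1$ with the initial clustering $\{c_1\}$ reproduces the $8(\ln k+2)\Delta_k(\X)$ main term from the $(1-\delta)$-weighted $D^2$ part and collects the accumulated perturbation error into a term of the form $\tfrac{6k\delta}{1-\delta}\Delta_1(\X)$; the restriction $\delta\in(0,0.5)$ enters when summing the geometric accumulation of these per-step errors.

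The step I expect to be the main obstacle is exactly this modified induction: in \cite{arthur_vassilvitskii_07} the inductive step already balances the covered and uncovered branches against the harmonic numbers $H_t$ using a delicate convexity (power-mean) estimate, and I must thread the extra $\delta$-error through that estimate without it compounding multiplicatively across the $k-1$ levels. Keeping the error additive — so that it sums to $O(k\delta/(1-\delta))\Delta_1(\X)$ rather than blowing up — and pinning down the constant $6$ is the delicate part; everything else is a routine re-run of the classical argument.
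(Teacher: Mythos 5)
Your first step is correct, and in fact sharper than the corresponding lemma in the paper: conditioned on the next center landing in a fixed cluster $A$, the perturbed law is \emph{exactly} the mixture $\lambda\,(D^2|_A) + (1-\lambda)\,\mathrm{Unif}_A$ with $\lambda = \frac{(1-\delta)\Delta(A,S)/\Delta(\X,S)}{(1-\delta)\Delta(A,S)/\Delta(\X,S) + \delta|A|/|\X|}$, so convexity together with Lemmas 3.1 and 3.2 of \cite{arthur_vassilvitskii_07} gives $\E[\Delta(A,S\cup\{c\})\mid c\in A]\le 8\Delta_1(A)$ with \emph{no} error term; the paper's Lemma~\ref{lem:cluster-bound}, which goes through the one-sided bound of Lemma~\ref{lem:bounds}, carries an extra additive $\frac{\delta}{1-\delta}\frac{|\C_i|}{|\X|}\Delta(\X,S)$. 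You have also correctly identified that the perturbation only damages the branching probabilities between covered and uncovered clusters.

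However, your argument stops exactly where the theorem's difficulty begins, and this is a genuine gap rather than a deferred detail. The ``perturbed analogue of Lemma 3.3'' is never formulated: in the induction of \cite{arthur_vassilvitskii_07} the hypothesis must hold at \emph{every} intermediate configuration, so the error you wish to carry cannot be the scalar $O(k\delta/(1-\delta))\Delta_1(\X)$ you want in the final answer --- per step, the damage of a wasted draw is state-dependent (of order $\Delta(\U,S)/|U|$, involving the current uncovered cost and the number of uncovered clusters), so the inductive invariant must itself be state-dependent, and one must then verify that the recursion, with waste probability inflated by $\delta$ and covering probabilities deflated by $(1-\delta)$, reproduces that invariant without the error term getting multiplied by the harmonic factors $(1+H_t)$ that scale the main term. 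Asserting that the error stays additive and that the constant comes out to $6$ is assuming precisely what needs to be proved. The paper sidesteps this entirely by abandoning the AV-style induction in favor of the potential function of \cite{dasgupta_13}, $\Psi_t = \frac{W_t}{|U_t|}\Delta^t(\U_t)$: wasted steps are then paid for \emph{locally} (Lemma~\ref{lem:covered} charges at most $\Delta^t(\U_t)/(k-t)$ per wasted step), covering steps are handled by Lemma~\ref{lem:uncovered} (whose proof needs the lower bound in Lemma~\ref{lem:bounds} plus Cauchy--Schwarz, an ingredient absent from your outline), and the theorem follows from a telescoping sum in which $\sum_{t}t/(k-t-1)^2 \le \frac{\pi^2}{6}k$ produces the $\frac{6k\delta}{1-\delta}\Delta_1(\X)$ term. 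To complete your route you would need to write down and close the state-dependent invariant; alternatively, adopting the potential-function route lets your cleaner mixture lemma slot in directly, and would even simplify the paper's Lemma~\ref{lem:cost-H}.
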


which will finally prove Theorem~\ref{thm:main} after substituting value of the \textit{failure probability} $\delta$. 

In the following section we show how to obtain samples from $D_2$.

\subsection{Sampling from \texorpdfstring{$D_2$}{d2} via a Preprocessed Data Structure}\label{subsec:data structure}

Given $\X \subset \R^d$, consider the vector $v_{\X} \in \R^{|\X|}$ given by $v_{\X}(x) = \|x\|$. Define $D_\X(x) = \frac{\|x\|^2}{\|\X\|^2}$ as a distribution over $\X$. We will use a (complete) binary tree data structure to sample from $D_\X$. The leaves of the binary tree correspond to the entries of $v_\X$ and store weight $v_\X(x)^2$ along with the sign of $v_\X(x)$. The internal nodes also store a weight that is equal to the sum of weights of its children. To sample from $D_\X$, we traverse the tree, choosing either to go left or right at each node with probability proportional to the weight of its two children until reaching the leaves. The binary tree similarly supports querying and updating the entries of $v_\X$. 


\begin{figure}[ht]
\centering
\begin{tikzpicture}[
  every node/.style = {font=\small},
  level 1/.style = {sibling distance=40mm, level distance = 10mm},
  level 2/.style = {sibling distance=20mm, level distance = 10mm},
  level 3/.style = {sibling distance=10mm, level distance = 10mm},
  edge from parent/.style = {draw, -latex}
]

\node {$\| v \|^2$}
  child {node {$v(1)^2 + v(2)^2$}
    child {node {$v(1)^2$}
      child {node {$\operatorname{sign}(v(1))$}}
    }
    child {node {$v(2)^2$}
      child {node {$\operatorname{sign}(v(2))$}}
    }
  }
  child {node {$v(3)^2 + v(4)^2$}
    child {node {$v(3)^2$}
      child {node {$\operatorname{sign}(v(3))$}}
    }
    child {node {$v(4)^2$}
      child {node {$\operatorname{sign}(v(4))$}}
    }
  };

\end{tikzpicture}

\caption{Data structure for sampling from a vector $v \in \R^4$}

\end{figure}
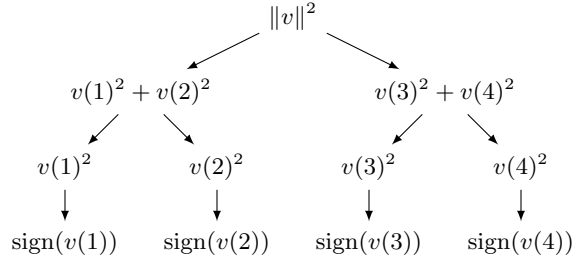

We state this formally following \cite{tang_19}, in which such data structures, called  \textit{sample and query access data structures} were introduced. 

\begin{lemma} \label{lem:bst} (Lemma 3.1 in \cite{tang_19})
There exists a data structure storing a vector $v \in \R^n$ with $\nu$ nonzero entries in $\ O(\nu \log n)$ space which supports the following operations:
\begin{itemize}
    \item Reading and updating an entry of $v$ in $O(\log n)$ time
    \item Finding $\|v\|^2$ in $O(1)$ time
    \item Generating an independent sample $i \in \{1,\dots,n\}$ with probability $\frac{v(i)^2}{\sum_{j = 1}^n v(j)^2}$ in $O(\log n)$ time. 
\end{itemize}

\end{lemma}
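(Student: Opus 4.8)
The plan is to realize $v$ by a (conceptually) complete binary tree and to store only the part of it that is actually needed, so that both the space bound and all three operations fall out of elementary tree traversals. Concretely, I would fix a complete binary tree $\mathcal{T}$ with $n$ leaves, indexed $1,\dots,n$ from left to right, so that its depth is $\lceil \log_2 n\rceil = O(\log n)$. Each leaf $i$ is to store the weight $v(i)^2$ together with $\operatorname{sign}(v(i))$ (exactly as in the figure), and each internal node is to store a single number equal to the sum of the weights of its two children. An immediate bottom-up induction then gives the \emph{subtree-sum invariant}: the number stored at any node $u$ equals $\sum_{i\in L(u)} v(i)^2$, where $L(u)$ is the set of leaves below $u$; in particular the root stores $\|v\|^2$.

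To obtain the $O(\nu\log n)$ space bound rather than $O(n)$, I would materialize only those nodes of $\mathcal{T}$ that are ancestors of some nonzero leaf, treating every absent node as having weight $0$ (its entire subtree is zero). Since each of the $\nu$ nonzero entries contributes a single root-to-leaf path of $O(\log n)$ nodes, the total number of materialized nodes is at most $\nu\lceil\log_2 n\rceil = O(\nu\log n)$, and storing a constant number of words per node gives the claimed space.

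The three operations are then handled as follows. Reading $\|v\|^2$ is a single lookup of the root's stored value, hence $O(1)$. To read or update entry $i$, I descend from the root to leaf $i$ along the path dictated by the binary expansion of $i$, which has length $O(\log n)$; for an update to a new value $a$ I set the leaf weight to $a^2$, store $\operatorname{sign}(a)$, and then add the increment $\delta = a^2 - (\text{old leaf weight})$ to the stored number at every node on the path up to the root. By additivity this restores the subtree-sum invariant, materializing any absent nodes on the path (and optionally pruning when $a=0$) in $O(\log n)$ time while keeping the materialized set within the $O(\nu\log n)$ bound. For sampling, I start at the root and repeatedly move to a child chosen with probability proportional to that child's stored weight — to the left child with probability (left weight)$/$(current node weight), otherwise to the right — until a leaf is reached, which I output; this takes time proportional to the depth, $O(\log n)$.

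Finally I would verify correctness of the sampler. Writing the root-to-leaf path to $i$ as $u_0=\text{root},u_1,\dots,u_\ell=i$, the probability of producing $i$ is the product $\prod_{j=0}^{\ell-1} w(u_{j+1})/w(u_j)$ of the branching probabilities, where $w(\cdot)$ denotes the stored weight; this telescopes to $w(i)/w(\text{root}) = v(i)^2/\|v\|^2$, exactly the target distribution, and draws are independent since fresh randomness is used per query. The only point requiring care is the space accounting together with the bookkeeping under updates that flip entries between zero and nonzero: one must check that restricting to ancestor paths of nonzero leaves still supports every descent (absent nodes are precisely the zero-weight subtrees, which the sampler never selects and which a read correctly reports as $0$) and that each update touches only the $O(\log n)$ nodes on one path. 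All remaining steps are routine tree traversals.
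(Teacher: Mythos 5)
Your proposal is correct and takes essentially the same approach as the paper: the lemma is imported verbatim from Tang (2019), and the paper's own description of the structure --- leaves storing $v(i)^2$ together with $\operatorname{sign}(v(i))$, internal nodes storing the sum of their children's weights, sampling by weighted root-to-leaf descent, and updates by propagating increments along one path --- coincides with your construction. The details you supply beyond that sketch (materializing only ancestors of nonzero leaves to get $O(\nu\log n)$ space, the telescoping-product argument for the sampling distribution, and pruning when entries become zero) are exactly the routine verifications the paper delegates to the citation.
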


Note that if $n$ is not a perfect power of $2$ then we can find a $n' \in \N$ which is a perfect power of 2 such that $n' < n < 2n' $. We can then set the remaining $2n' - n$ data points to have zero norm and use this dataset instead to construct the complete binary tree. Thus the following corollary is immediate.

\begin{corollary}\label{cor:preprocessing time}
    There is a data structure that can be prepared in $\tilde{O}(\nnz (\mathcal{X}))$ time which enables generating a sample from $D_\X$ in $O(\log |\X|)$ time.
\end{corollary}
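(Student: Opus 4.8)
The plan is to obtain Corollary~\ref{cor:preprocessing time} as a direct instantiation of Lemma~\ref{lem:bst} applied to the norm vector $v_\X \in \R^{|\X|}$ defined by $v_\X(x) = \|x\|$. First I would observe that the sampling operation guaranteed by Lemma~\ref{lem:bst} for this vector produces an index $x$ with probability $\frac{v_\X(x)^2}{\sum_{y \in \X} v_\X(y)^2} = \frac{\|x\|^2}{\|\X\|^2} = D_\X(x)$, which is exactly the target distribution, and does so in $O(\log |\X|)$ time as claimed. So the only substantive task is to verify that the structure can be prepared within the advertised budget.

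It remains to check that the data structure can be built in $\tilde{O}(\nnz(\X))$ time. Let $\nu$ denote the number of nonzero entries of $v_\X$; this equals the number of data points with $\|x\| \neq 0$. Since every such nonzero point contributes at least one nonzero coordinate to $\X$, we have $\nu \leq \nnz(\X)$. Computing the norms $\|x\|$ for all $x \in \X$ takes time proportional to the total number of nonzero coordinates, i.e. $O(\nnz(\X))$, and by Lemma~\ref{lem:bst} the resulting tree occupies $O(\nu \log |\X|) = \tilde{O}(\nnz(\X))$ space and can be assembled within the same time bound. When $|\X|$ is not a power of two I would pad the dataset with zero-norm points as in the remark preceding the corollary; this at most doubles the number of leaves and leaves $\nu$ unchanged, so none of the asymptotics are affected.

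There is no genuinely hard step here: the corollary is essentially a bookkeeping consequence of Lemma~\ref{lem:bst}. The only point deserving care is the distinction between the number $\nu$ of nonzero entries of the one-dimensional norm vector $v_\X$ and the input sparsity $\nnz(\X)$ of the full dataset; establishing the inequality $\nu \leq \nnz(\X)$ together with the $O(\nnz(\X))$ cost of the norm computation is precisely what lets us replace the naive $O(|\X|)$ preprocessing bound by the sparsity-aware $\tilde{O}(\nnz(\X))$ one.
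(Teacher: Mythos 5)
Your proposal is correct and follows exactly the paper's route: the paper also obtains the corollary as an immediate instantiation of Lemma~\ref{lem:bst} on the norm vector $v_\X$, padding with zero-norm points when $|\X|$ is not a power of two. The only difference is that you spell out the bookkeeping (the bound $\nu \leq \nnz(\X)$ and the $O(\nnz(\X))$ cost of computing the norms) that the paper leaves implicit by declaring the corollary ``immediate.''
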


We will now show how to sample from $D_2$ in $O(\log{|\X|})$ time by Procedure~\ref{alg:sample D_2}.

\begin{algorithm}
    \floatname{algorithm}{Procedure}
    \caption{$\mathtt{SampleDistribution}$}
    \label{alg:sample D_2}
    \textbf{Input: } A center $c \in \X$ \\
    \textbf{Output: } A sample according to the distribution $D_2$ defined as $D_2(x) = \frac{\|x\|^2 + \|c\|^2}{\|X\|^2 + |\X|\|c\|^2}$ \\
    \vspace{-12pt}
    \begin{algorithmic}[1]
    \STATE Generate $r \sim \calU[0,1]$
    \IF{$r \leq \frac{\|\X\|^2}{\|\X\|^2 + |\X|\|c\|^2 }$}
        \STATE Generate a sample $x \sim D_{\X}$ using the data structure from Section~\ref{subsec:data structure}
    \ELSE
        \STATE Generate a sample $x \sim \calU[\X]$
    \ENDIF
    \STATE \textbf{output} $x$
    \end{algorithmic}
\end{algorithm}

\begin{lemma}\label{lem:sample D_2}
    Procedure~\ref{alg:sample D_2} produces a sample from $\X$ according to the distribution $D_2$ as defined in Lemma~\ref{lem:overample D^2}. Moreover it takes $O(\log |\X|)$ time after a one time preprocessing time of $\tilde{O}(\nnz (\mathcal{X}))$.
\end{lemma}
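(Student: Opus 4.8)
The plan is to separate the claim into correctness of the output distribution and the running-time bound, treating them independently. For correctness, I would observe that Procedure~\ref{alg:sample D_2} is nothing more than sampling from a two-component mixture: with probability $p \coloneqq \frac{\|\X\|^2}{\|\X\|^2 + |\X|\|c\|^2}$ it draws $x \sim D_\X$, and with the complementary probability $1-p = \frac{|\X|\|c\|^2}{\|\X\|^2 + |\X|\|c\|^2}$ it draws $x \sim \calU[\X]$. Hence the probability of outputting any fixed $x \in \X$ is $p\, D_\X(x) + (1-p)\tfrac{1}{|\X|}$, and I would substitute $D_\X(x) = \frac{\|x\|^2}{\|\X\|^2}$. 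The two denominators cancel, leaving $\frac{\|x\|^2}{\|\X\|^2 + |\X|\|c\|^2} + \frac{\|c\|^2}{\|\X\|^2 + |\X|\|c\|^2} = \frac{\|x\|^2 + \|c\|^2}{\|\X\|^2 + |\X|\|c\|^2} = D_2(x)$, exactly the distribution of Lemma~\ref{lem:overample D^2}. This is a short algebraic verification with no real difficulty.

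For the running time, I would charge costs to a one-time preprocessing phase and a per-sample phase. In preprocessing I build the sample-and-query data structure of Corollary~\ref{cor:preprocessing time} on the centered vector $v_\X$ in $\tilde{O}(\nnz(\X))$ time; by Lemma~\ref{lem:bst} this simultaneously supports $O(\log|\X|)$-time sampling from $D_\X$ and $O(1)$-time access to $\|\X\|^2$. For each invocation, drawing $r$ is $O(1)$, and the branching draw is either $x \sim D_\X$ in $O(\log|\X|)$ (Corollary~\ref{cor:preprocessing time}) or $x \sim \calU[\X]$ in $O(1)$; the remaining ingredient is the branching threshold $\frac{\|\X\|^2}{\|\X\|^2 + |\X|\|c\|^2}$.

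The one mild subtlety — and the step I would be most careful about — is ensuring that evaluating this threshold does not secretly cost $\Theta(d)$ per call through the computation of $\|c\|^2$, which would break the $O(\log|\X|)$ bound. I would resolve this by noting that $c$ is always a point already present in $\X$ (in the application it is the fixed first center $c_1$), so its squared norm is stored at the corresponding leaf of the tree and can be read in $O(\log|\X|)$ time via Lemma~\ref{lem:bst}; alternatively $\|c\|^2$ is computed once and reused across all $k-1$ rounds. Either way the threshold, and hence the whole procedure, runs in $O(\log|\X|)$ per sample after the $\tilde{O}(\nnz(\X))$ preprocessing, which together with the correctness argument establishes the lemma.
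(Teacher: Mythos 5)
Your proof is correct and takes essentially the same approach as the paper: the identical two-component mixture decomposition and algebraic cancellation establish that the output distribution is $D_2$, with the running time charged to the preprocessed data structure of Corollary~\ref{cor:preprocessing time}. Your additional care about evaluating the threshold without paying $\Theta(d)$ per call (reading $\|c\|^2$ from the tree, or computing it once and reusing it across rounds) is a valid point the paper glosses over by simply citing the corollary.
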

\begin{proof}
    The probability of a sampled point is as follows:
    \begin{align*}
        \Pr[x] &= \Pr\left[r \leq \frac{\|\X\|^2}{\|\X\|^2 + n\|c_1\|^2 } \right]\Pr[x \sim D_{\X}]\\
        &\ \ \ \ + \Pr \left[ r > \frac{\|\X\|^2}{\|\X\|^2 + |\X|\|c_1\|^2} \right]\Pr[x \sim \calU[\X]] \\
 &= \frac{\|\X\|^2}{\|V\|^2 + |\X|\|c_1\|^2 } \frac{\|x\|^2}{\|\X\|^2}+ \frac{|\X|\|c_1\|^2}{\|\X\|^2 + |\X|\|c_1\|^2 } \frac{1}{|\X|} \\
 &= \frac{2(\|v_i\|^2 + \|c_1\|^2)}{2(\|\X\|^2 + |\X|\|c_1\|^2)} = D_{2}(x)
    \end{align*} 
    The time complexity follows from \ref{cor:preprocessing time}.
\end{proof}

\section{Analysis of \texorpdfstring{$\delta$-$k$-$\mathtt{means}$++}{deltakmeanspp}}\label{appendix:delta-k-means++}

In order to analyze the solution quality of \rskpp, we  examine an \textit{abstract} variant of \kpp which we call $\delta$-\kpp. Instead of sampling from the $D^2$-distribution as in \kpp, we sample from a distribution which is a weighted average of the $D^2$-distribution and the uniform distribution with weights $(1 - \delta)$ and $\delta$ respectively. We refer to this distribution on $\X$ as $D^2_\delta(\X,S)$ for some set of centers $S \subset \X$ . When clear from context, we just use $D^2_\delta$.

\begin{algorithm}[h]
   \caption{$\delta\text{-}\mathtt{k\text{-}means\text{++}}(\X,k,\delta)$}
   \label{alg:delta-kpp}
   \textbf{Input :} dataset $\X \subset \R^d$, number of clusters $k \in \N $ and parameter $\delta \in (0,1)$ \\ 
   \textbf{Output :} $S = \{ c_1, \dots, c_k\} \subset \X$
\begin{algorithmic}[1]
   \STATE Choose $c_1 \in \X$ uniformly at random and set $S \gets \{c_1\}$
   \FOR{$t \in \{2,\dots,k\}$}
    \STATE Chose a point $x \in \X$ with prob.  $(1 - \delta) \frac{\Delta(x,S)}{\Delta(\X,S)} + \delta \frac{1}{|\X|}$
    \STATE $c_t \gets x$
   \STATE $S \gets S \cup\{c_t\}$
   \ENDFOR
   \STATE {\bfseries return} $S$
\end{algorithmic}
\end{algorithm}

The main objective of this section is to prove the following :

\begin{theorem} \label{thm:delta-k-means}

Let $\X \subset \R^d$ be any dataset which is to be partitioned into $k$ clusters.  Let $S$ be the set of centers returned by $\delta\text{-}\mathtt{k\text{-}means\text{++}}(\X,k,\delta)$ for any $\delta \in (0,0.5)$ . The following approximation guarantee holds : 
$$ \E[\Delta(\X,S)] \leq 8(\ln k+2)\Delta_k(\X) + \frac{6k \delta}{1 - \delta} \Delta_1(\X) $$
    
\end{theorem}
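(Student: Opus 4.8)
The plan is to recognize that the sampling distribution decomposes as a per-step mixture $D^2_\delta(\X,S) = (1-\delta)\,D^2(\X,S) + \delta\,\U[\X]$, so that at each of the $k-1$ seeding steps we perform an honest $D^2$-sampling step with probability $1-\delta$ and a uniform step with probability $\delta$. This matches the two-term shape of the claimed bound: the $8(\ln k+2)\Delta_k(\X)$ term is exactly the guarantee of \kpp from \cite{arthur_vassilvitskii_07}, and the additive $\frac{6k\delta}{1-\delta}\Delta_1(\X)$ term should be the accumulated cost of the perturbation. Accordingly, I would prove the theorem by re-deriving the inductive argument of \cite{arthur_vassilvitskii_07} (their Lemma 3.3) with the distribution $D^2$ replaced by $D^2_\delta$, carrying an explicit additive error through the recursion.

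The first ingredient I would establish is that the perturbation does \emph{not} degrade the quality of covering a previously uncovered optimal cluster. Fix the current center set $C$ and an optimal cluster $A$. Conditioned on the $D^2_\delta$-sampled point $x$ landing in $A$, the conditional law of $x$ on $A$ is a convex combination of the $D^2$-restricted-to-$A$ distribution and the uniform-on-$A$ distribution, since writing $a=(1-\delta)\frac{\Delta(A,C)}{\Delta(\X,C)}$ and $b=\delta\frac{|A|}{|\X|}$ gives $\Pr[x\mid x\in A]=\frac{a}{a+b}\cdot\frac{\Delta(x,C)}{\Delta(A,C)}+\frac{b}{a+b}\cdot\frac{1}{|A|}$. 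By Lemma~3.2 of \cite{arthur_vassilvitskii_07} the $D^2$ component yields $\E[\Delta(A,C\cup\{x\})]\le 8\,\Delta(A,\opt)$, and by Lemma~3.1 the uniform component yields $\le 2\,\Delta(A,\opt)$; hence the convex combination is still at most $8\,\Delta(A,\opt)$. Thus the ``coverage quality'' used in the inductive step survives unchanged, and only the \emph{sampling probabilities} differ from the unperturbed analysis.

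With this in hand, I would run the induction on the number of centers still to be added and the number $u$ of uncovered clusters, mirroring \cite{arthur_vassilvitskii_07}. The branch probabilities now read $p_c=(1-\delta)\frac{\Delta(\X_c,C)}{\Delta(\X,C)}+\delta\frac{|\X_c|}{|\X|}$ (landing in the covered region) and $p_A=(1-\delta)\frac{\Delta(A,C)}{\Delta(\X,C)}+\delta\frac{|A|}{|\X|}$ (covering $A$). The $(1-\delta)$-weighted pieces reassemble exactly into the harmonic recursion of the original proof, reproducing the factor $1+H_t$ with $H_t=\sum_{i=1}^{t}\frac1i$ and hence the $8(\ln k+2)\Delta_k(\X)$ term after setting $u=k-1$, $t=k-1$ and averaging over the uniformly chosen first center $c_1$ (so that $\E[\Delta(\X_c,\{c_1\})]\le 2\Delta_1(\X)$ via Lemma~3.1). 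The $\delta$-weighted uniform pieces do not carry the harmonic structure and must instead be collected into an additive error term that I propagate alongside the \cite{arthur_vassilvitskii_07} bound.

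The main obstacle is precisely controlling this additive error. Each step leaks probability mass $\delta$ into a uniform draw that cannot be charged against the optimal cost, so I expect a per-step error proportional to $\delta$ times the current clustering cost; since adding centers only decreases cost, every intermediate cost is dominated by the initial cost $\Delta(\X,\{c_1\})$, whose expectation is at most $2\Delta_1(\X)$. Summing these contributions over the $k-1$ steps, and accounting for the repeated $(1-\delta)$ reweighting through the recursion, should produce a geometric factor $\frac{1}{1-\delta}$ and yield the claimed bound $\frac{6k\delta}{1-\delta}\Delta_1(\X)$; pinning down the constant $6$ and verifying that the power-mean/convexity step of \cite{arthur_vassilvitskii_07} still closes in the presence of the extra uniform terms is the delicate part. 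Substituting $\delta\le e^{-m/\beta(\X)}$ from Corollary~\ref{cor:delta D^2 time} into this statement then gives Theorem~\ref{thm:main}.
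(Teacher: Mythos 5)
Your proposal takes a genuinely different route from the paper, and the strategy is workable. The paper never re-opens the induction of \cite{arthur_vassilvitskii_07}: it follows the potential-function analysis of \cite{dasgupta_13}, tracking $\Psi_t = \frac{W_t}{|U_t|}\Delta^t(\U_t)$ across wasted iterations and telescoping its increments (Lemmas~\ref{lem:uncovered}--\ref{lem:combined}). All of the paper's error terms enter through two-sided perturbation bounds on the conditional $D^2_\delta$ probabilities (Lemma~\ref{lem:bounds}), which is where its $\tfrac{\delta}{1-\delta}$ factors originate; even its analogue of AV07's Lemma 3.2 (Lemma~\ref{lem:cluster-bound}) carries an additive error term. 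Your per-step mixture decomposition is cleaner on both counts. Your observation that, conditioned on the sample landing in an optimal cluster $A$, the law is an exact convex combination of $D^2$-restricted-to-$A$ and uniform-on-$A$ gives coverage quality $\le 8\Delta(A,\mu(A))$ with no error term at all, which is strictly tighter than Lemma~\ref{lem:cluster-bound}. Moreover, the step you flag as delicate does close: the power-mean inequality is needed only inside the $(1-\delta)$-weighted $D^2$ branch, where AV07's algebra applies verbatim, while the $\delta$-weighted uniform branch can be bounded crudely --- use $\E_{a \sim \calU[A]}[\Delta(A,a)] = 2\Delta(A,\mu(A))$ when a new cluster is covered, drop the power mean entirely, and a short computation shows the excess of this branch over the inductive bound is at most $\Delta^t(\U_t)/|U_t|$ per step. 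Propagating this through the recursion (costs only decrease, and $|U_t|$ is at least the number of remaining steps) accumulates to an additive error of order $\delta(1+\ln k)\Delta(\X,\{c_1\})$, i.e.\ $O(\delta \log k)\,\Delta_1(\X)$ in expectation. This is in fact sharper than the paper's $\frac{6k\delta}{1-\delta}\Delta_1(\X)$, and no $\frac{1}{1-\delta}$ factor arises anywhere --- your expectation that a geometric $\frac{1}{1-\delta}$ emerges from ``repeated $(1-\delta)$ reweighting'' is unnecessary, since nothing in this route ever divides by $1-\delta$. In short: the paper's route buys a flat, lemma-by-lemma argument with no nested induction; yours buys an error-free coverage lemma and a tighter additive term, at the price of re-threading AV07's more intricate induction with an error parameter.

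One slip to fix in your final assembly: the bound you need for the covered part is $\E[\Delta(A_1,\{c_1\}) \mid c_1 \in A_1] = 2\Delta(A_1,\mu(A_1))$ for the single optimal cluster $A_1$ containing the first center --- not $\E[\Delta(\X_c,\{c_1\})] \le 2\Delta_1(\X)$. The covered portion must be charged against the optimal cost of its own cluster, so that $\Delta(\X_c,\{c_1\}) + 8\sum_{A \text{ uncovered}}\Delta(A,\mu(A)) \le 8\Delta_k(\X)$ in expectation and the first term collapses to $8(\ln k+2)\Delta_k(\X)$. Charging it against $2\Delta_1(\X)$ as written would place a $\Delta_1(\X)\ln k$ term inside the harmonic factor, and the claimed bound would then fail to hold as $\delta \to 0$.
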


Our analysis closely follows the potential based approach of \cite{dasgupta_13}. Since we sample from a different distribution as compared to the standard \kpp, its analysis does not directly carry over to \dkpp . Indeed, it is known that the \kpp procedure is quite sensitive to even small changes in the $D^2$ distribution. This was first studied by \cite{bhattacharya_20} who were able to show only a $O(\log^2 k)$ guarantee when the centers are sampled from a distribution which is $\eps$-close  \footnote{Let $p(\cdot)$ and $p'(\cdot)$ represent probability mass functions over $\X$. $p'$ is said to be $\eps$-close to $p$ if $|p'(x) - p(x)| \leq \eps p(x) \, \forall \, x \in \X$. }  to the exact $D^2$ distribution for a sufficiently small constant $\eps$  . This result was  recently improved upon by \cite{grunau_23} who recover the tight $O(\log k)$ guarantee of \kpp.  In their analysis, \cite{grunau_23} incur a very large constant multiplicative blow-up \footnote{The constant blow-up of \cite{grunau_23} is bounded above by $\sum_{\ell = 1}^{\infty} 90 \ell e^{-\frac{\ell-1}{40}} = \frac{90}{(1 - e^{-1/40})^2 } \simeq 1.48 \times 10^5$ } in the approximation guarantee and leave it as an open problem to show whether the true approximation guarantee can be bounded by a multiplicative  factor of $1 + O(\eps)$.  In contrast, we show that the approximation guarantee of \dkpp consists of  an additive  scale invariant variance factor proportional to $\delta$  in addition to the standard guarantee of \kpp with the same constants.

\subsection{Some Useful Lemmas}

In this section, we state some crucial lemmas that shall be helpful in the analysis. Throughout our work, the centroid of a set of points $\mathcal{P} \subset \R^d$ is denoted by $\mu(\mathcal{P}) = \frac{1}{|\mathcal{P}|} \sum_{p \in \mathcal{P}} p$. 

The following folklore lemma is analogous to the bias-variance decomposition in machine learning.

\begin{lemma} \label{lem:bias-variance} 
For any set of points $\P \subset\R^d$ and any point $z \in \R^d$ (possibly not in $\P$), the following holds : 

$$ \Delta(\P,z) = \Delta(\P,\mu(\P)) + |\P|\|z - \mu(\P)\|^2 $$
    
\end{lemma}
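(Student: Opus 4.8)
The plan is to prove this by the classical ``add and subtract the centroid'' trick, exploiting the fact that the centroid $\mu(\P)$ is precisely the point that makes the linear (cross) term vanish. Writing $\mu = \mu(\P)$ for brevity, I would start from the definition $\Delta(\P,z) = \sum_{p \in \P} \|p - z\|^2$ and rewrite each summand by inserting $\mu$:
\begin{align*}
\|p - z\|^2 = \|(p - \mu) + (\mu - z)\|^2 = \|p - \mu\|^2 + 2\langle p - \mu,\, \mu - z\rangle + \|\mu - z\|^2.
\end{align*}

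Next I would sum this identity over all $p \in \P$ and split the right-hand side into three sums. The first sum is exactly $\Delta(\P,\mu)$, and the third sum contributes $|\P|\,\|\mu - z\|^2 = |\P|\,\|z - \mu\|^2$ since the summand does not depend on $p$. The crucial step is the middle term: pulling the constant vector $\mu - z$ out of the inner product leaves $2\,\langle \sum_{p \in \P}(p - \mu),\, \mu - z\rangle$, and here I would invoke the definition $\mu = \frac{1}{|\P|}\sum_{p\in\P} p$ to conclude that $\sum_{p \in \P}(p - \mu) = \left(\sum_{p\in\P} p\right) - |\P|\mu = 0$. Hence the cross term drops out entirely.

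Collecting the two surviving terms yields $\Delta(\P,z) = \Delta(\P,\mu(\P)) + |\P|\,\|z - \mu(\P)\|^2$, which is the claim. There is no genuine obstacle here; the only point requiring care is the vanishing of the cross term, which is exactly where the hypothesis that $\mu(\P)$ is the centroid (rather than an arbitrary point) is used, so I would make sure to highlight that this is what singles out the centroid in the decomposition.
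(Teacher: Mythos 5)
Your proof is correct: the expansion $\|p-z\|^2 = \|p-\mu\|^2 + 2\langle p-\mu,\,\mu-z\rangle + \|\mu-z\|^2$ together with the vanishing of the cross term $\sum_{p\in\P}(p-\mu)=0$ is exactly the standard argument, and your emphasis on where the centroid property is used is the right point to highlight. Note that the paper itself states this as a folklore lemma and gives no proof at all, so your write-up simply supplies the canonical argument the authors omit.
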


This shows that the solution for the $\mathtt{1\text{-}means}$ problem is the centroid of the data set i.e, $\Delta_1(\P) = \Delta(\P,\mu(\P))$. The above lemma can be easily used to show the following. 

\begin{lemma} \label{lem:2-approx}
(Lemma 3.1 in \cite{arthur_vassilvitskii_07}) For any set of points $\P \subset \R^d$, if a point $z \in \P$ is chosen uniformly at random, then the following holds : 

$$ \E[\Delta(\P,z)] = 2 \Delta(\P,\mu(\P)) $$
    
\end{lemma}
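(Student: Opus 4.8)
The plan is to derive this directly from the bias–variance decomposition already established in Lemma~\ref{lem:bias-variance}, treating the uniform choice of $z$ as an averaging operation. The key observation is that Lemma~\ref{lem:bias-variance} holds for \emph{every} point $z \in \R^d$, in particular for each $z \in \P$, so I can apply it pointwise and then take the expectation over the uniform distribution.

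Concretely, I would first instantiate Lemma~\ref{lem:bias-variance} to write, for an arbitrary $z \in \P$,
$$ \Delta(\P,z) = \Delta(\P,\mu(\P)) + |\P|\,\|z - \mu(\P)\|^2. $$
Taking the expectation over $z$ drawn uniformly from $\P$, the first term is constant and the second becomes
$$ \E[\Delta(\P,z)] = \Delta(\P,\mu(\P)) + |\P|\,\E\big[\|z - \mu(\P)\|^2\big]. $$
The next step is to recognize that the expected squared distance to the centroid is exactly the $\mathtt{1\text{-}means}$ cost rescaled: since $z$ is uniform,
$$ \E\big[\|z - \mu(\P)\|^2\big] = \frac{1}{|\P|}\sum_{z \in \P} \|z - \mu(\P)\|^2 = \frac{1}{|\P|}\,\Delta(\P,\mu(\P)), $$
where the final equality is just the definition of $\Delta(\P,\mu(\P))$ as a sum of squared distances to the single center $\mu(\P)$. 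Substituting this into the displayed identity, the factor $|\P|$ cancels and the two copies of $\Delta(\P,\mu(\P))$ combine to give the claimed $2\Delta(\P,\mu(\P))$.

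There is essentially no obstacle here: the result is an immediate corollary of the preceding lemma, and the only point requiring any care is noting that the uniform average of $\|z-\mu(\P)\|^2$ coincides with $\Delta(\P,\mu(\P))/|\P|$, which is true by the definition of the centroid as the minimizer and the defining formula for $\Delta$. I would keep the proof to these three displayed lines, emphasizing that the cross term in the decomposition vanishes precisely because $\mu(\P)$ is the mean of $\P$, so that the entire contribution of the variance term exactly reproduces the bias term.
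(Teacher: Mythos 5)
Your proposal is correct and follows exactly the route the paper intends: the paper states this lemma as a citation to \cite{arthur_vassilvitskii_07} and remarks only that Lemma~\ref{lem:bias-variance} ``can be easily used'' to derive it, which is precisely your argument of applying the bias--variance decomposition pointwise, averaging over the uniform choice of $z$, and observing that $\E\big[\|z-\mu(\P)\|^2\big] = \Delta(\P,\mu(\P))/|\P|$ so the two terms combine to $2\Delta(\P,\mu(\P))$. One small note: that last identity needs only the definition of $\Delta$ with the single center $\mu(\P)$, not the minimizer property of the centroid, so your parenthetical appeal to minimization is unnecessary (the mean property is already used inside Lemma~\ref{lem:bias-variance} itself).
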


We now state some useful bounds on the probability that a point is chosen from the $\D^2_\delta$ distribution with respect to some centers $S$ conditioned on it coming from a particular subset of points. For a point $z \in \R^d$ (possibly chosen randomly from some probability distribution) and a set of points $\P \subset \R^d$, we denote the event $\{z \in \P\}$ by $\chi_{\P}(z)$. 

\begin{lemma} \label{lem:bounds} Let $\P \subset\R^d$ be a set of points with  $\Q \subset \P$ being an arbitrary subset of $\P$ such that $|\Q| \neq 0$. Let $S \subset \P$ be a set of cluster centers. For any point $z \in \Q$ and parameter $\delta \in (0,1)$, the following hold :  
\begin{enumerate}
    \item $\Pr[z \sim \D^2_\delta | \chi_\Q(z)] \leq \frac{\Delta(z)}{\Delta(\Q)} + \frac{\delta}{1 - \delta} \frac{1}{|\P|} \frac{\Delta(\P)}{\Delta(\Q)}$
    \item $\Pr[z \sim \D^2_\delta | \chi_\Q(z)] \geq \frac{\Delta(z)}{\Delta(\Q)} - \frac{\delta}{1 - \delta} \frac{|\Q|}{|\P|} \frac{\Delta(z) \Delta(\P)}{\Delta(\Q)^2}$
\end{enumerate}
Here,  $\Delta(\cdot)$ denotes $\Delta(\cdot, S)$ for simplicity.

\end{lemma}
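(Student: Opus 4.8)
The plan is to unfold the definition of the sampling distribution and reduce both bounds to elementary manipulations of a single ratio. Recall that $\D^2_\delta$ places mass $(1-\delta)\frac{\Delta(p)}{\Delta(\P)} + \delta\frac{1}{|\P|}$ on each $p \in \P$, where I abbreviate $\Delta(\cdot) = \Delta(\cdot,S)$. The conditional probability of drawing the specific point $z \in \Q$ given that the sample lands in $\Q$ is the ratio of $\D^2_\delta(z)$ to the total mass $\sum_{q \in \Q}\D^2_\delta(q)$. Using additivity $\sum_{q\in\Q}\Delta(q) = \Delta(\Q)$ in the denominator, I would write
$$ \Pr[z \sim \D^2_\delta \mid \chi_\Q(z)] = \frac{(1-\delta)\frac{\Delta(z)}{\Delta(\P)} + \delta\frac{1}{|\P|}}{(1-\delta)\frac{\Delta(\Q)}{\Delta(\P)} + \delta\frac{|\Q|}{|\P|}}. $$
Clearing the common factor $\frac{1-\delta}{\Delta(\P)}$ from numerator and denominator, and writing $\gamma = \frac{\delta}{1-\delta}$ and $A = \frac{\Delta(\P)}{|\P|}$ for readability, this becomes the clean form
$$ \Pr[z \sim \D^2_\delta \mid \chi_\Q(z)] = \frac{\Delta(z) + \gamma A}{\Delta(\Q) + \gamma|\Q|A}. $$

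For the upper bound (part 1), the extra term $\gamma|\Q|A$ in the denominator is nonnegative, so dropping it can only increase the fraction; hence
$$ \Pr[z \sim \D^2_\delta \mid \chi_\Q(z)] \leq \frac{\Delta(z)+\gamma A}{\Delta(\Q)} = \frac{\Delta(z)}{\Delta(\Q)} + \gamma\frac{A}{\Delta(\Q)}. $$
Substituting back $\gamma A = \frac{\delta}{1-\delta}\frac{\Delta(\P)}{|\P|}$ reproduces part 1 verbatim.

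For the lower bound (part 2) I would first discard the nonnegative $\gamma A$ from the numerator to get $\Pr \geq \frac{\Delta(z)}{\Delta(\Q)+\gamma|\Q|A}$, and then control the enlarged denominator by the first-order (tangent-line) inequality for the convex function $1/x$, namely $\frac{1}{x+B} \geq \frac{1}{x} - \frac{B}{x^2}$ for $x,B>0$, applied with $x = \Delta(\Q)$ and $B = \gamma|\Q|A$. Multiplying through by $\Delta(z)$ and substituting $A$ yields exactly
$$ \Pr[z \sim \D^2_\delta \mid \chi_\Q(z)] \geq \frac{\Delta(z)}{\Delta(\Q)} - \frac{\delta}{1-\delta}\frac{|\Q|}{|\P|}\frac{\Delta(z)\Delta(\P)}{\Delta(\Q)^2}. $$
The argument is essentially algebraic bookkeeping, so I do not expect a genuine obstacle; the only point requiring care is the \emph{direction} of the lower bound. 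Dropping the numerator term loses mass, so a naive estimate overshoots, and the perturbation $\gamma|\Q|A$ sitting in the denominator must be absorbed through the convexity bound rather than by a crude linearization. Verifying that bound reduces, after cross-multiplication, to the trivial inequality $B^2 \geq 0$, confirming it holds for all admissible $\delta$.
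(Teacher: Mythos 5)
Your proposal is correct and follows essentially the same route as the paper: the same expression for the conditional probability, the same dropping of the $\delta$-term in the denominator for the upper bound, and for the lower bound the same two steps (discard the $\delta$-term in the numerator, then apply a first-order bound on the inflated denominator). Your tangent-line inequality $\frac{1}{x+B} \geq \frac{1}{x} - \frac{B}{x^2}$ is just the paper's $(1+u)^{-1} \geq 1-u$ in unnormalized form, so the arguments coincide.
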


\begin{proof}
     The probability that a chosen point belongs to $\Q$ is 
    \begin{align*}
     \Pr[z \sim \D^2_\delta \cap \chi_\Q(z)] &= \sum_{q \in \Q} \left( (1 - \delta) \frac{\Delta(q)}{\Delta(\P)} + \delta \frac{1}{|\P|} \right) 
 \\ &= (1 - \delta) \frac{\Delta(\Q)}{\Delta(\P)} + \delta \frac{|\Q|}{|\P|} 
     \end{align*}
Hence the required conditional probability is 
\begin{align*}
    \Pr[z \sim \D^2_\delta | \chi_\Q(z)] &= \frac{(1 - \delta) \frac{\Delta(z)}{\Delta(\P)} + \delta \frac{1}{|\P|}}{(1 - \delta) \frac{\Delta(\Q)}{\Delta(\P)} + \delta \frac{|\Q|}{|\P|}}
\end{align*}
For 1. we have : 
\begin{align*}
    \frac{(1 - \delta) \frac{\Delta(z)}{\Delta(\P)} + \delta \frac{1}{|\P|}}{(1 - \delta) \frac{\Delta(\Q)}{\Delta(\P)} + \delta \frac{|\Q|}{|\P|}} &\leq \frac{(1 - \delta) \frac{\Delta(z)}{\Delta(\P)} + \delta \frac{1}{|\P|}}{(1 - \delta) \frac{\Delta(\Q)}{\Delta(\P)}} \\ 
    &= \frac{\Delta(z)}{\Delta(\Q)} + \frac{\delta}{1 - \delta} \frac{1}{|\P|} \frac{\Delta(\P)}{\Delta(\Q)}
\end{align*}
For 2. we have : 
\begin{align*}
    \frac{(1 - \delta) \frac{\Delta(z)}{\Delta(\P)} + \delta \frac{1}{|\P|}}{(1 - \delta) \frac{\Delta(\Q)}{\Delta(\P)} + \delta \frac{|\Q|}{|\P|}} &= \frac{\Delta(z)}{\Delta(\Q)} \left( \frac{1 + \frac{\delta}{1 - \delta} \frac{\Delta(\P)}{|\P| \Delta(z)}}{ 1 + \frac{\delta}{1 - \delta} \frac{|\Q| \Delta(\P)}{|\P| \Delta(Q)}} \right) \\ &\geq \frac{\Delta(z)}{\Delta(\Q)} \left( { 1 + \frac{\delta}{1 - \delta} \frac{|\Q| \Delta(\P)}{|\P| \Delta(Q)}} \right)^{-1} \\ 
    &\geq \frac{\Delta(z)}{\Delta(\Q)} \left( { 1 - \frac{\delta}{1 - \delta} \frac{|\Q| \Delta(\P)}{|\P| \Delta(Q)}} \right)
\end{align*}

where in the last step we used the fact that $ (1 + x)^{-1} \geq 1 - x$ for any $x \geq 0$. This completes the proof of the lemma. 
\end{proof}

Recall that $\opt_k = \{ c_1^*, \dots c_k^*\}$ represented the optimal set of centers of the $\mathtt{k}\text{-}\mathtt{means}$ problem for the dataset $\X$. For a center $c_i \in \opt_k$, let us denote the set of all points in $\X$ closer to $c_i$ than any other center in $\opt_k$ by $\C_i$. Note  $c_i^* = \mu(\C_i)$ from Lemma~\ref{lem:bias-variance}. 

Next, we show a bound on the expected cost of a cluster $\C_i$ of $\opt_k$ when a point is added to the current set of clusters from $\C_i$ through the $D^2_\delta$ distribution. The following is  analogous to Lemma 3.2 of \cite{arthur_vassilvitskii_07} with an additional factor depending on $\delta$. 

\begin{lemma} \label{lem:cluster-bound}
Let $\X \subset \R^d$ be any dataset. Suppose we have a set of already chosen cluster centers $S$ and  a new center $z$ is added to $S$ from  the set of points $\C_i$ in the cluster corresponding to some $c_i \in \opt_k$ through the $D^2_\delta(\X,S)$ distribution. The following holds : 
\begin{align*}
    \E[\Delta(\C_i,S\cup\{z\}) | S, \chi_{\C_i}(z)] \leq 8 \Delta(\C_i,\mu(\C_i)) + \frac{\delta}{1 - \delta} \frac{|\C_i|}{|\X|} \Delta(\X,S)
\end{align*}
\end{lemma}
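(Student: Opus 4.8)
The plan is to condition on the identity of the chosen center $z = a_0 \in \C_i$ and expand the conditional expectation as
\[
\E[\Delta(\C_i, S \cup \{z\}) \mid S, \chi_{\C_i}(z)] = \sum_{a_0 \in \C_i} \Pr[z = a_0 \mid \chi_{\C_i}(z)]\, \Delta(\C_i, S \cup \{a_0\}),
\]
where $\Delta(\C_i, S \cup \{a_0\}) = \sum_{a \in \C_i} \min(\Delta(a, S), \|a - a_0\|^2)$. The first move is to apply the upper bound of Lemma~\ref{lem:bounds}(1) with $\Q = \C_i$ and $\P = \X$, which replaces $\Pr[z = a_0 \mid \chi_{\C_i}(z)]$ by the pure $D^2$ weight $\frac{\Delta(a_0, S)}{\Delta(\C_i, S)}$ plus a correction $\frac{\delta}{1-\delta}\frac{1}{|\X|}\frac{\Delta(\X, S)}{\Delta(\C_i, S)}$. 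Since every $\Delta(\C_i, S\cup\{a_0\})$ is nonnegative, upper bounding each coefficient upper bounds the whole sum, decomposing it into a pure $D^2$ sum and a correction sum that I would then bound separately.

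For the pure $D^2$ sum I would reproduce the classical potential argument of \cite{arthur_vassilvitskii_07} (their Lemma~3.2). Writing $D(a) = \sqrt{\Delta(a, S)}$, the triangle inequality gives $D(a_0) \le D(a) + \|a - a_0\|$, and squaring with $(u+v)^2 \le 2u^2 + 2v^2$ yields $\Delta(a_0, S) \le 2\Delta(a, S) + 2\|a - a_0\|^2$; averaging over $a \in \C_i$ bounds $\Delta(a_0, S)$ by $\frac{2}{|\C_i|}\sum_{a}\left(\Delta(a,S) + \|a - a_0\|^2\right)$. Substituting this into the first sum and bounding $\min(\Delta(a,S),\|a-a_0\|^2)$ by $\|a-a_0\|^2$ in one half and by $\Delta(a,S)$ in the other, both halves collapse to $\frac{2}{|\C_i|}\sum_{a, a_0 \in \C_i}\|a - a_0\|^2$. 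Finally, the identity $\sum_{a, a_0 \in \C_i}\|a - a_0\|^2 = 2|\C_i|\,\Delta(\C_i, \mu(\C_i))$, which is immediate from Lemma~\ref{lem:bias-variance} by summing $\Delta(\C_i,a_0)$ over $a_0$, turns each half into $4\Delta(\C_i, \mu(\C_i))$, giving the $8\Delta(\C_i, \mu(\C_i))$ term.

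For the correction sum the key observation is that adding a center can only decrease the cost, so $\Delta(\C_i, S \cup \{a_0\}) \le \Delta(\C_i, S)$ for every $a_0$. Hence $\sum_{a_0 \in \C_i}\frac{\delta}{1-\delta}\frac{1}{|\X|}\frac{\Delta(\X,S)}{\Delta(\C_i,S)}\,\Delta(\C_i, S\cup\{a_0\})$ is at most $\frac{\delta}{1-\delta}\frac{|\C_i|}{|\X|}\Delta(\X, S)$, which is exactly the claimed additive term; adding the two bounds then finishes the proof.

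The hard part will be the factor-$8$ analysis of the pure $D^2$ sum: getting the triangle-inequality and min-splitting bookkeeping right so that both halves reduce to the same pairwise-distance sum $\sum_{a,a_0}\|a-a_0\|^2$. By contrast, the new $\delta$-dependent term is disposed of cheaply through the monotonicity bound $\Delta(\C_i, S\cup\{a_0\}) \le \Delta(\C_i, S)$, which is precisely why the perturbation contributes only an additive, rather than a multiplicative, penalty to the per-cluster guarantee.
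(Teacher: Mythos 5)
Your proposal is correct and takes essentially the same route as the paper: the same expansion of the conditional expectation over candidate centers in $\C_i$, the same application of Lemma~\ref{lem:bounds}(1) with $\Q = \C_i$ and $\P = \X$ to split off a pure $D^2$ sum plus a $\delta$-correction, and the same monotonicity bound $\Delta(\C_i, S\cup\{a_0\}) \leq \Delta(\C_i,S)$ to dispose of the correction term. The only difference is cosmetic: you reproduce the factor-$8$ argument of Lemma~3.2 of \cite{arthur_vassilvitskii_07} inline (correctly), whereas the paper invokes that lemma as a black box.
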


\begin{proof}
    When the new center $z$ is added, each point $x \in \C$ contributes $$ \Delta(x, S \cup\{z\}) = \min(\|x-z\|^2, \Delta(x,S))$$ to the overall cost. The expected cost of the cluster $\C$ can hence be written as : 
    \begin{align*}
        &\E[\Delta(\C_i,S\cup\{z\}) | S, \chi_{\C_i}(z)]  \\ &= \sum_{z \in \C_i} \Pr[z \sim \D^2_\delta | S, \chi_{\C_i}(z)] \sum_{x \in \C_i}\Delta(x, S \cup\{z\}) \\ 
        &\leq \sum_{z \in \C_i}  \frac{\Delta(z,S)}{\Delta(\C_i,S)} \sum_{x \in \C_i} \min(\|x-z\|^2, \Delta(x,S))\\ 
        &+ \sum_{z \in \C_i} \left(\frac{\delta}{1 - \delta} \frac{1}{|\X|} \frac{\Delta(\X,S)}{\Delta(\C_i,S)}  \right) \sum_{x \in \C_i} \min(\|x-z\|^2, \Delta(x,S))
    \end{align*}
    where in the last step we used item (i) of Lemma~\ref{lem:bounds}. From Lemma 3.2 of \cite{arthur_vassilvitskii_07}, the first term is bounded above by $8 \Delta(\C_i,\mu(\C_i))$. Let us focus on the second term. Noting that 
    $$ \sum_{x \in \C_i} \min(\|x-z\|^2, \Delta(x,S)) \leq \sum_{x \in \C_i}\Delta(x,S) = \Delta(\C_i,S)  $$
    the second term can be  bounded above by the following : 
    $$ \sum_{z \in \C_i} \left(\frac{\delta}{1 - \delta} \frac{1}{|\X|} \frac{\Delta(\X,S)}{\Delta(\C_i,S)}  \right)\Delta(\C_i,S)  = \frac{\delta}{1 - \delta} \frac{|\C_i|}{|\X|} \Delta(\X,S
)  $$
from which the lemma follows.     
\end{proof}

\subsection{Main Analysis}

Before getting into the proof, let us set up some notation. 

\textbf{Notation.} Let $t \in \{1,\dots.k\}$ denote the number of centers already chosen by \dkpp. Let $S_t \coloneqq \{c_1, \dots, c_t\}$ be the set of centers after iteration $t$ . We say that a cluster $\C_i$ of $\opt_k$ is \textit{covered} by $S_t$ if at least one of its centers is in $\C_i$. If not, then it is \textit{uncovered}. We denote $$H_t = \{i \in \{1,\dots,k\} : \C_i \cap S_t \neq \emptyset \} ,\, U_t = \{1,\dots,k\} \backslash H_t$$ Similarly, the dataset $\X$ can be partitioned in two parts : $\H_t \subset \X$ being the points belonging to \textit{covered} clusters and $\U_t = \X \backslash \H_t $ being the points belonging to \textit{uncovered} clusters. Let $W_t = t - |H_t|$ denote the number of \textit{wasted} iterations so far i.e, the number of iterations in which no new cluster was covered. Note that we always have $ |H_t| \leq t$ and hence $|U_t| \geq k-t$. For any $\P \subset \X$, we use the notation $\Delta^t(\P) := \Delta(\P, S_t)$ for brevity. 

The total cost can be decomposed as : 

$$ \Delta^t(\X) = \Delta^t(\H_t) + \Delta^t(\U_t) $$

We can use Lemma~\ref{lem:cluster-bound} to bound the first term directly. 

\begin{lemma} \label{lem:cost-H}
    For each $t \in \{1,\dots,k\}$ the following holds : 
    $$ \E[\Delta^t(\H_t)] \leq 8 \Delta_k(\X) + \frac{2\delta}{1 - \delta} \Delta_1(\X) $$
\end{lemma}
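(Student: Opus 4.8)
The plan is to bound the cost of each covered cluster by the cost it carried at the very moment it was first covered, and then apply Lemma~\ref{lem:cluster-bound} at each such covering step. First I would write $\Delta^t(\H_t) = \sum_{i \in H_t} \Delta^t(\C_i)$. The structural observation that drives everything is that the per-cluster cost $\Delta^s(\C_i) = \Delta(\C_i, S_s)$ is non-increasing in $s$, since enlarging the center set can only bring each point closer to its nearest center. Hence if $\tau_i \le t$ denotes the (random) iteration at which $\C_i$ first receives a center, then $\Delta^t(\C_i) \le \Delta^{\tau_i}(\C_i)$, so it suffices to control the expected cost of $\C_i$ immediately after it is covered.

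Next I would condition on the history $S_{\tau_i - 1}$ just before the covering step together with the event $\chi_{\C_i}(c_{\tau_i})$ that the newly added center lands in $\C_i$. For $\tau_i \ge 2$ the center is drawn from $D^2_\delta(\X, S_{\tau_i-1})$, so Lemma~\ref{lem:cluster-bound} applies verbatim and gives
$$ \E[\Delta^{\tau_i}(\C_i) \mid S_{\tau_i-1}, \chi_{\C_i}(c_{\tau_i})] \le 8\Delta(\C_i, \mu(\C_i)) + \frac{\delta}{1-\delta}\frac{|\C_i|}{|\X|}\Delta(\X, S_{\tau_i-1}). $$
The only special case is $\tau_i = 1$, i.e. the cluster covered by the uniformly chosen first center $c_1$; conditioned on $c_1 \in \C_i$ the point $c_1$ is uniform on $\C_i$, so Lemma~\ref{lem:2-approx} yields $\E[\Delta(\C_i, c_1) \mid c_1 \in \C_i] = 2\Delta(\C_i, \mu(\C_i)) \le 8\Delta(\C_i, \mu(\C_i))$, which is dominated by the same bound and carries no additive term.

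I would then sum over the covered clusters and take total expectation. The optimal-cost term sums to $\sum_{i=1}^{k} 8\Delta(\C_i, \mu(\C_i)) = 8\Delta_k(\X)$, using that the $\C_i$ partition $\X$ and that $\mu(\C_i) = c_i^*$. For the additive term, which is present only for clusters covered after the first step, monotonicity of the global cost gives $\Delta(\X, S_{\tau_i-1}) \le \Delta(\X, \{c_1\})$ since $S_{\tau_i-1} \supseteq \{c_1\}$, so these contributions sum to at most
$$ \frac{\delta}{1-\delta}\frac{\Delta(\X, \{c_1\})}{|\X|}\sum_{i=1}^{k}|\C_i| = \frac{\delta}{1-\delta}\Delta(\X, \{c_1\}). $$
Taking the expectation over the uniform choice of $c_1$ and invoking Lemma~\ref{lem:2-approx} once more, $\E[\Delta(\X, \{c_1\})] = 2\Delta_1(\X)$, which yields exactly the claimed $\frac{2\delta}{1-\delta}\Delta_1(\X)$.

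The main obstacle is the careful bookkeeping of the conditioning: Lemma~\ref{lem:cluster-bound} is stated conditionally on a fixed history and on the covering event, whereas the covering time $\tau_i$ and the history $S_{\tau_i-1}$ are themselves random and correlated across clusters. I would handle this via the tower property, integrating the per-cluster conditional bounds against the distribution of covering patterns, and using that the pointwise inequality $\Delta(\X, S_{\tau_i-1}) \le \Delta(\X, \{c_1\})$ holds on every run so that it survives the un-conditioning before the final expectation over $c_1$. A minor but worth-stating point is that clusters left uncovered at time $t$ contribute nothing to $\H_t$, so unlike the bound on $\Delta^t(\U_t)$ no control on the number of covered clusters is needed here.
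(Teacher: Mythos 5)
Your proof is correct and follows essentially the same route as the paper: decompose $\Delta^t(\H_t)$ over the optimal clusters, apply Lemma~\ref{lem:cluster-bound} to each covered cluster, and bound the resulting $\frac{\delta}{1-\delta}\Delta(\X,\cdot)$ contribution by $\E[\Delta(\X,\{c_1\})] = 2\Delta_1(\X)$ using monotonicity of the cost and Lemma~\ref{lem:2-approx} for the uniformly chosen first center. Your explicit bookkeeping of the covering times $\tau_i$, the monotonicity of per-cluster cost, and the separate $\tau_i = 1$ case spells out rigorously the conditioning details that the paper's terse three-line proof leaves implicit.
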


\begin{proof}
    \begin{align*}
        \E[\Delta^t(\H_t)] &= \E[\sum_{i \in H_t} \Delta^t(\C_i)] \leq \sum_{i = 1}^k \E[\Delta^t(\C_i)] \\ 
        &\leq 8 \Delta_k(\X) + \frac{\delta}{1 - \delta} \E[\Delta^t(\X)] \\ 
        &\leq 8 \Delta_k(\X) + \frac{2\delta}{1 - \delta} \Delta_1(\X)
    \end{align*}
    Where in the last line we used Lemma~\ref{lem:2-approx}  and the fact that the first center is chosen uniformly at random from $\X$.
\end{proof}

\textbf{Potential function.}
To bound the second term i.e, the cost of the uncovered clusters we use the potential function introduced in \cite{dasgupta_13} : 

$$ \Psi_t = \frac{W_t}{|U_t|} \Delta^t(\U_t) $$

Instead of \textit{paying} the complete clustering cost $\Delta^k(\X)$ at once, we make sure that at the end of iteration $t$, we have payed an amount of atleast $\Psi_t$ .  Observe that when $t = k$, we have $W_t = |U_t|$ so the potential becomes $\Delta^k(\U_k)$, which is indeed the total cost of the uncovered clusters returned by \rskpp. We now show how to bound the expected increase in the potential i.e, $\Psi_{t+1} - \Psi_t$. To do this, we shall analyze the error propagation due to using the $D^2_\delta$ distribution instead of the $D^2$ distribution on the way. 

\textbf{Bounding the Increments.} Suppose $t$ centers have been chosen. The next center $c_{t+1}$ is chosen which belongs to some optimal cluster $\C_i$. We consider two cases : the first case is when $i \in U_t$ i.e, a new cluster is covered and the the second case is when $i \in H_t$ i.e, the center is chosen from an already covered cluster. We shall denote all the set of all random variables after the end of iteration $t$ by $\F_t$.

\begin{lemma} \label{lem:uncovered}
For any $t \in \{1,\dots,k-1\}$, the following holds : 
\begin{align*}
 \E[\Psi_{t+1} - \Psi_t &| \F_t, \chi_{U_t}(i)] \\ &\leq  \frac{2\delta}{1 - \delta} \frac{t}{\max(1,k-t-1)^2} \Delta_1(\X) 
\end{align*}
\end{lemma}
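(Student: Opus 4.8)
The plan is to track how the three ingredients of the potential $\Psi_t = \frac{W_t}{|U_t|}\Delta^t(\U_t)$ change when the $(t{+}1)$-th center lands in a previously uncovered cluster $\C_i$, i.e.\ under the event $\chi_{U_t}(i)$. First I would record the bookkeeping: covering a new cluster is not a wasted iteration, so $W_{t+1} = W_t$; the number of uncovered clusters drops, $|U_{t+1}| = |U_t|-1$; and the uncovered point set shrinks to $\U_{t+1} = \U_t\setminus\C_i$. Since adding a center only decreases cost, $\Delta^{t+1}(\U_{t+1}) \le \Delta^t(\U_t) - \Delta^t(\C_i)$, and a short algebraic manipulation yields the pointwise bound
$$\Psi_{t+1} - \Psi_t \le \frac{W_t}{|U_t|-1}\left(\frac{\Delta^t(\U_t)}{|U_t|} - \Delta^t(\C_i)\right).$$

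Next I would take the conditional expectation over which cluster is covered. Writing $a = \Delta^t(\U_t)$, $x_j = \Delta^t(\C_j)$ for $j \in U_t$, and $\sigma = \sum_{j\in U_t} x_j^2$, I would apply item~(2) of Lemma~\ref{lem:bounds} with $\Q = \U_t$ and $\P = \X$ to lower bound $\E[\Delta^t(\C_i)\mid \F_t,\chi_{U_t}(i)]$; regrouping the point-sums cluster-by-cluster turns both of them into $\sigma$, giving $\E[\Delta^t(\C_i)\mid\cdots] \ge \frac{\sigma}{a} - \frac{\delta}{1-\delta}\frac{|\U_t|}{|\X|}\frac{\Delta^t(\X)}{a^2}\sigma$. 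The leading term reflects the $\delta = 0$ behaviour of exact $D^2$-sampling, and by the power-mean (Cauchy-Schwarz) inequality $\sigma \ge a^2/|U_t|$, so $\frac{a}{|U_t|} - \frac{\sigma}{a} \le 0$: in the ideal process covering a new cluster never increases the potential. The crucial step is to \emph{not} discard this non-positive term but to balance it against the $\delta$-error term. The resulting upper bound on $\frac{a}{|U_t|} - \E[\Delta^t(\C_i)\mid\cdots]$ is affine in $\sigma$, and for the mild $\delta$ considered its slope is negative, so its maximum over the feasible range $[a^2/|U_t|,\,a^2]$ is attained at $\sigma = a^2/|U_t|$; there the leading terms cancel and an extra factor $1/|U_t|$ is extracted, leaving
$$\E[\Psi_{t+1} - \Psi_t \mid \F_t,\chi_{U_t}(i)] \le \frac{W_t}{|U_t|(|U_t|-1)}\cdot\frac{\delta}{1-\delta}\cdot\frac{|\U_t|}{|\X|}\cdot\Delta^t(\X).$$

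Finally I would bound the combinatorial and cost factors: $W_t \le t$; since at most $t$ clusters are covered, $|U_t| \ge k-t$, hence $|U_t|(|U_t|-1) \ge \max(1,k-t-1)^2$; trivially $|\U_t|/|\X| \le 1$; and $\Delta^t(\X) \le \Delta(\X,c_1)$, whose expectation is $2\Delta_1(\X)$ by Lemma~\ref{lem:2-approx}, supplying the factor $2$ and converting the running cost into $\Delta_1(\X)$. Assembling these gives the claimed $\frac{2\delta}{1-\delta}\frac{t}{\max(1,k-t-1)^2}\Delta_1(\X)$. I expect the main obstacle to be exactly the balancing step: naively dropping the non-positive leading term leaves only a single power of $(k-t-1)$ in the denominator and misses the squared term, so one must quantify the Cauchy-Schwarz slack against the perturbation error and verify that the extremal $\sigma$ governs the worst case for all admissible $\delta \in (0,0.5)$ (equivalently, that the perturbation stays small relative to the cost ratio $\Delta^t(\X)/\Delta^t(\U_t)$), together with a check of the boundary case $t = k-1$.
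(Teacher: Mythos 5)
Your outline follows the paper's proof almost step for step: the same bookkeeping bound $\Psi_{t+1}-\Psi_t \le \frac{W_t}{|U_t|-1}\left(\frac{\Delta^t(\U_t)}{|U_t|}-\Delta^t(\C_i)\right)$, the same application of item (2) of Lemma~\ref{lem:bounds} with $\Q=\U_t$, the same Cauchy--Schwarz step, and the same final substitutions $W_t\le t$, $|U_t|\ge k-t$, $\E[\Delta^t(\X)]\le 2\Delta_1(\X)$. However, the two items you defer as ``checks'' are where the real content lies, and the way you propose to discharge the first one would fail. The slope of your affine-in-$\sigma$ bound is \emph{not} negative for all admissible $\delta$: it is negative iff $\frac{\delta}{1-\delta}\frac{|\U_t|}{|\X|}\Delta^t(\X) < \Delta^t(\U_t)$, and the ratio $\Delta^t(\X)/\Delta^t(\U_t)$ is unbounded (the uncovered clusters may carry an arbitrarily small fraction of the current cost), so no restriction to $\delta<0.5$ rescues this, and ``verifying the perturbation stays small relative to the cost ratio'' is not possible in general. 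The correct patch is different: when the slope is positive, the target inequality holds trivially, since $\E[\Delta^t(\C_i)\mid\cdots]\ge 0$ and the positive-slope condition itself gives $\frac{\Delta^t(\U_t)}{|U_t|} < \frac{1}{|U_t|}\frac{\delta}{1-\delta}\frac{|\U_t|}{|\X|}\Delta^t(\X)$. (The paper's own write-up silently has the same issue---it multiplies the Cauchy--Schwarz inequality by a possibly negative prefactor---but its concluding displayed inequality survives for this same trivial reason.)

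The boundary case $t=k-1$ is likewise a genuine missing piece, not a routine verification: when $|U_{k-1}|=1$ your claimed inequality $|U_t|(|U_t|-1)\ge\max(1,k-t-1)^2$ is false (the left side is $0$), and the pointwise potential bound itself divides by $|U_{t+1}|=0$. The paper resolves this by conditioning on the event $\allc$ that every optimal cluster ends up covered: on $\allc$ there are no uncovered clusters, so the potential increment is at most $0$; on $\neg\allc$, given $\chi_{U_{k-1}}(i)$, one must have $|U_{k-1}|\ge 2$, so $|U_t|(|U_t|-1)\ge 2$ and the generic argument applies with $\max(1,k-t-1)^2=1$. Without some such case split, the lemma remains unproven exactly at $t=k-1$, which is the case that contributes the dominant $O(k)$ term in the final telescoping sum of Theorem~\ref{thm:delta-k-means}.
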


\begin{proof}
    When $i \in U_t$, we have $W_{t+1} = W_t$, $H_{t+1} = H_t \cup\{i\}$ and $U_{t+1} = U_{t} \backslash\{i\}$. Thus, 
    \begin{align*}
        \Psi_{t+1} = \frac{W_{t+1}}{|U_{t+1}|} \Delta^{t+1}(\U_{t+1}) \leq \frac{W_t}{|U_t| - 1} \left( \Delta^t(\U_t) - \Delta^t(\C_i) \right)
    \end{align*}
    We can use item (ii) of Lemma~\ref{lem:bounds} for getting a lower bound on the second term : 
    \begin{align*}
        &\E[\Delta^t(\C_i) | \F_t , \chi_{U_t}(i)] \\ &\geq  \sum_{j \in U_t} \left( \frac{\Delta^t(\C_j)}{\Delta^t(\U_t)} - \frac{\delta}{1 - \delta} \frac{|\C_j|}{|\U_t|} \frac{\Delta^t(\C_j)\Delta^t(\X)}{\Delta^t(\U_t)^2} \right) \Delta^t(\C_j)  \\ 
        &\geq \left( 1 - \frac{\delta}{1- \delta} \frac{\Delta^t(\X)}{\Delta^t(\U_t)} \right) \sum_{j \in U_t} \frac{\Delta^t(\C_j)^2}{\Delta^t(\U_t)}
    \end{align*}
    Where in the second step we used the fact that $|\C_j| \leq |\U_t|$ for each $j \in U_t$. We can use the cauchy-schwarz \footnote{ for lists of numbers $a_1,\dots,a_m$ and $b_1,\dots, b_m$ we have $\left(\sum_i a_ib_i\right)^2 \leq \left(\sum_i a_i^2\right)\left(\sum_i b_i^2\right)$} inequality to simplify the last expression as follows : 
    $$ |U_t|^2\sum_{j \in U_t} \Delta^t(\C_j)^2 \geq |U_t| \sum_{j \in U_t} \Delta^t(\C_j) = |U_t| \Delta^t(\U_t)  $$
    This shows that 
    $$  \E[\Delta^t(\C_i) | \F_t , \chi_{U_t}(i)]  \geq \frac{\Delta^t(\U_t)}{|U_t|} - \frac{\delta}{1 - \delta} \frac{\Delta^t(\X)}{|U_t|}$$
    Now, 
    \begin{align*}
        \E[\Psi_{t+1} | &\F_t, \chi_{U_t}(i)] \\ &\leq \frac{W_t}{|U_t| - 1}  \left(\Delta^t(\U_t) - \E[\Delta^t(\C_i)| \F_t, \chi_{U_t}(i)] \right) \\ 
        &\leq \frac{W_t}{|U_t| - 1}  \left(\Delta^t(\U_t) - \frac{\Delta^t(\U_t)}{|U_t|} + \frac{\delta}{1 - \delta} \frac{\Delta^t(\X)}{|U_t|}  \right) \\ 
        &= \Psi_t + \frac{\delta}{1 - \delta} \frac{W_t}{|U_t|\left(|U_t|-1\right)} \Delta^t(\X)
    \end{align*}
    Recall that $W_t \leq t$ and $|U_t| \geq k-t$. So for $t \leq k-2$, the following holds after taking expectation : 
    \begin{align*}
        &\E[\Psi_{t+1} - \Psi_t| \F_t, \chi_{U_t}(i)] \\ &\leq \frac{\delta}{1 - \delta} \frac{t}{(k-t-1)^2} \E[ \Delta^t(\X) | \F_t, \chi_{U_t}(i)] \\ 
        &\leq \frac{\delta}{1 - \delta} \frac{t}{(k-t-1)^2} \E[ \Delta^t(\X)] \\ 
        &\leq \frac{2\delta}{1 - \delta} \frac{t}{(k-t-1)^2} \Delta_1(\X) 
    \end{align*}
    Now consider the case when $t= k-1$. We cannot use the above argument directly because it may so happen that $|U_{k}| = 0$. If this happens, the potential of the uncovered clusters is always $0$ . This only happens when a new cluster is covered in each iteration. Let this event be $\allc$ (for \textit{All Clusters} being covered). Denoting $ \mathcal{E} = \F_{k-1}, \chi_{U_{k-1}(i)}$ we have the following : 
    \begin{align*}
        \E[\Psi_{k} - \Psi_{k-1} | \calE] &= \E[\Psi_{k} - \Psi_{k-1} | \calE,\allc ] \Pr[\allc | \calE]\\ 
        &\ \ \ \ + \E[\Psi_{k} - \Psi_{k-1} | \calE, \neg \allc ] \Pr[ \neg \allc | \calE] \\
        &\leq \E[\Psi_{k} - \Psi_{k-1} | \calE, \neg \allc ] \Pr[ \neg \allc | \calE] \\ 
        &\leq \E[\Psi_{k} - \Psi_{k-1} | \calE, \neg \allc ] \\
        &\leq \frac{2\delta t}{1 - \delta} \Delta_1(\X)
    \end{align*}
    Where in the last line we used the fact that $|U_{k-1}| > 1$ if all clusters are not covered. Combining both cases completes the proof.
\end{proof}

\begin{lemma} \label{lem:covered}
For any $t \in \{1,\dots,k-1\}$, the following holds : 
           $$ \E[\Psi_{t+1} - \Psi_t | \F_t, \chi_{H_t}(i)] \leq \frac{\Delta^t(\U_t)}{k-t} $$
\end{lemma}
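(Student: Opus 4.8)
The plan is to observe that this ``covered'' case is considerably simpler than the uncovered case treated in Lemma~\ref{lem:uncovered}: because no new optimal cluster gets covered, the perturbed distribution $D^2_\delta$ never actually enters the bound. I therefore expect the inequality to hold \emph{deterministically} for every realization of $c_{t+1}$ consistent with $\chi_{H_t}(i)$, so that taking the conditional expectation is immediate and no $\delta$-dependent error term appears at all.

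First I would do the bookkeeping on the three quantities defining $\Psi_t = \frac{W_t}{|U_t|}\Delta^t(\U_t)$. Conditioned on $\chi_{H_t}(i)$, the new center $c_{t+1}$ lands in an already covered cluster, so no previously uncovered cluster becomes covered. Hence $H_{t+1} = H_t$ and $U_{t+1} = U_t$, which in particular gives $\U_{t+1} = \U_t$ and leaves $|U_t|$ unchanged. On the other hand, $t$ increases by one while $|H_t|$ stays fixed, so the count of wasted iterations increments: $W_{t+1} = (t+1) - |H_{t+1}| = W_t + 1$.

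Next I would use the monotonicity of the clustering cost. Since $S_{t+1} = S_t \cup \{c_{t+1}\} \supseteq S_t$, every point's squared distance to its nearest center can only decrease, so $\Delta^{t+1}(\U_t) \le \Delta^t(\U_t)$. Combining this with the bookkeeping,
$$ \Psi_{t+1} = \frac{W_t+1}{|U_t|}\,\Delta^{t+1}(\U_{t+1}) \le \frac{W_t+1}{|U_t|}\,\Delta^t(\U_t), $$
and subtracting $\Psi_t = \frac{W_t}{|U_t|}\Delta^t(\U_t)$ telescopes the numerator to give $\Psi_{t+1} - \Psi_t \le \frac{1}{|U_t|}\Delta^t(\U_t)$. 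Finally I would invoke the standing bound $|U_t| \ge k-t$ (recorded in the notation paragraph) to replace $|U_t|$ by $k-t$ in the denominator, yielding $\Psi_{t+1} - \Psi_t \le \frac{\Delta^t(\U_t)}{k-t}$; since this holds pointwise, it holds after conditioning on $\F_t$ and $\chi_{H_t}(i)$.

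The only point requiring care — playing the role of the ``main obstacle'' in an otherwise routine argument — is the direction of the $W_t$ update: one must note that covering no new cluster is precisely what makes the iteration \emph{wasted}, so $W_t$ genuinely rises by one here (in contrast to the uncovered case, where $W_{t+1}=W_t$), and it is this $+1$ in the numerator that produces the single factor $\Delta^t(\U_t)/|U_t|$. Checking that the cost term $\Delta^{t+1}(\U_t)$ moves in the favorable direction under adding a center is what guarantees the absence of any $\delta$-dependent correction in this case.
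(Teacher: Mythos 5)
Your proof is correct and follows essentially the same route as the paper's: the same bookkeeping ($H_{t+1}=H_t$, $U_{t+1}=U_t$, $W_{t+1}=W_t+1$), the monotonicity $\Delta^{t+1}(\U_{t+1})\leq\Delta^t(\U_t)$, and the bound $|U_t|\geq k-t$, yielding a pointwise inequality that survives conditioning. Your explicit remark that the bound is deterministic and hence $\delta$-free matches what the paper's proof does implicitly.
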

\begin{proof}
    When $i \in H_t$, we have $H_{t+1} = H_t$, $W_{t+1} = W_t + 1$ and $U_{t+1} = U_t$. Thus,
    \begin{align*}
        \Psi_{t+1} - \Psi_t &= \frac{W_{t+1}}{|U_{t+1}|} \Delta^{t+1}(\U_{t+1}) - \frac{W_{t}}{|U_{t}|} \Delta^t(\U_{t}) \\ 
        &\leq \frac{W_{t}+1}{|U_{t}|} \Delta^t(\U_{t})-\frac{W_{t}}{|U_{t}|} \Delta^t(\U_{t}) \\ 
        &= \frac{\Delta^t(\U_t)}{|U_t|} \leq \frac{\Delta^t(\U_t)}{k-t}
    \end{align*}
\end{proof}

We can now combine the two cases to get : 

\begin{lemma} \label{lem:combined}
For any $t \in \{1,\dots,k-1\}$, the following holds : 
    \begin{align*}
        &\E[\Psi_{t+1} - \Psi_t | \F_t] \leq  (1 - \delta) \frac{\E[\Delta^t(\H_t)]}{k-t} \\ &+ \delta\left(\frac{2}{k-t} + \frac{2t}{\max(1,k-t-1)^2}\right) \Delta_1(\X)
    \end{align*}
\end{lemma}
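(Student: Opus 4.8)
The plan is to apply the law of total expectation, splitting on the two mutually exclusive events that the freshly sampled center $c_{t+1}$ falls into an already covered optimal cluster, the event $\chi_{H_t}(i)$, or into a hitherto uncovered one, the event $\chi_{U_t}(i)$. This lets me reuse the two conditional increment bounds already in hand, namely Lemma~\ref{lem:covered} for the covered case and Lemma~\ref{lem:uncovered} for the uncovered case. Concretely I would start from
\begin{align*}
\E[\Psi_{t+1}-\Psi_t \mid \F_t] &= \E[\Psi_{t+1}-\Psi_t \mid \F_t,\chi_{H_t}(i)]\,\Pr[\chi_{H_t}(i)\mid\F_t] \\
&\quad + \E[\Psi_{t+1}-\Psi_t \mid \F_t,\chi_{U_t}(i)]\,\Pr[\chi_{U_t}(i)\mid\F_t],
\end{align*}
so that the remaining work is to compute the two event probabilities under $D^2_\delta$ and to multiply them against the conditional bounds.

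The two probabilities come directly from summing the $D^2_\delta$ mass over the relevant point sets, exactly as in the opening computation of Lemma~\ref{lem:bounds}:
\begin{align*}
\Pr[\chi_{H_t}(i)\mid\F_t] = (1-\delta)\frac{\Delta^t(\H_t)}{\Delta^t(\X)} + \delta\frac{|\H_t|}{|\X|}, \qquad \Pr[\chi_{U_t}(i)\mid\F_t] = (1-\delta)\frac{\Delta^t(\U_t)}{\Delta^t(\X)} + \delta\frac{|\U_t|}{|\X|}.
\end{align*}
For the covered contribution I would multiply the bound $\Delta^t(\U_t)/(k-t)$ of Lemma~\ref{lem:covered} by $\Pr[\chi_{H_t}(i)\mid\F_t]$. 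Using $\Delta^t(\U_t) \le \Delta^t(\X)$ the leading piece collapses cleanly to $(1-\delta)\,\Delta^t(\H_t)/(k-t)$, while the $\delta$-weighted piece is controlled by $|\H_t|\le|\X|$ down to $\delta\,\Delta^t(\X)/(k-t)$; taking expectations and invoking Lemma~\ref{lem:2-approx} together with the uniformly random choice of $c_1$ to get $\E[\Delta^t(\X)]\le 2\Delta_1(\X)$ turns this into the $(1-\delta)\,\E[\Delta^t(\H_t)]/(k-t)$ and $\tfrac{2\delta}{k-t}\Delta_1(\X)$ pieces of the claim.

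The uncovered contribution is where the accounting is delicate, and this is the step I expect to be the main obstacle. The bound of Lemma~\ref{lem:uncovered}, in its pre-simplified form $\tfrac{\delta}{1-\delta}\tfrac{W_t}{|U_t|(|U_t|-1)}\Delta^t(\X)$, carries a spurious $\tfrac{1}{1-\delta}$ blow-up; the whole point of multiplying by the true probability $\Pr[\chi_{U_t}(i)\mid\F_t]$, rather than by $1$, is that its leading weight $(1-\delta)$ cancels this factor, leaving a genuinely $\delta$-proportional term $\delta\,\tfrac{W_t}{|U_t|(|U_t|-1)}\Delta^t(\U_t)$. I would then substitute $W_t\le t$, $|U_t|\ge k-t$, $\Delta^t(\U_t)\le\Delta^t(\X)$ and $\E[\Delta^t(\X)]\le 2\Delta_1(\X)$, together with $(k-t)(k-t-1)\ge(k-t-1)^2$, to land on the $\tfrac{2\delta t}{\max(1,k-t-1)^2}\Delta_1(\X)$ term. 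The care needed here is twofold: one must verify that the residual cross term arising from the $\delta\,|\U_t|/|\X|$ part of $\Pr[\chi_{U_t}(i)\mid\F_t]$ is only of higher order in $\delta$ and stays inside the stated budget (using $\delta<1/2$, so that $\tfrac{1}{1-\delta}<2$), and one must check that the boundary case $t=k-1$ — where $|U_k|$ may vanish — needs no separate handling, since it is already absorbed into Lemma~\ref{lem:uncovered} through the $\max(1,\cdot)$ in the denominator and the $\allc$ conditioning. Adding the covered and uncovered contributions then assembles the stated inequality.
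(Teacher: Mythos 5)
Your overall skeleton---the total-expectation split over $\chi_{U_t}(i)$ and $\chi_{H_t}(i)$, and the covered-case computation (multiply the bound of Lemma~\ref{lem:covered} by the explicit probability, then use $\Delta^t(\U_t)\le\Delta^t(\X)$, $|\H_t|\le|\X|$ and $\E[\Delta^t(\X)]\le 2\Delta_1(\X)$)---is exactly the paper's proof. The gap is in the uncovered case, in precisely the step you flagged as delicate: the claimed cancellation is illusory. Multiplying the pre-simplified bound $\frac{\delta}{1-\delta}\frac{W_t}{|U_t|(|U_t|-1)}\Delta^t(\X)$ by $\Pr[\chi_{U_t}(i)\mid\F_t] = (1-\delta)\frac{\Delta^t(\U_t)}{\Delta^t(\X)} + \delta\frac{|\U_t|}{|\X|}$ gives
\[
\delta\,\frac{W_t}{|U_t|(|U_t|-1)}\,\Delta^t(\U_t)\;+\;\frac{\delta^2}{1-\delta}\,\frac{W_t}{|U_t|(|U_t|-1)}\,\frac{|\U_t|}{|\X|}\,\Delta^t(\X),
\]
and in the worst case (uncovered clusters carrying almost all of the cost \emph{and} almost all of the points, so $\Delta^t(\U_t)\approx\Delta^t(\X)$ and $|\U_t|\approx|\X|$) the two pieces sum to $\bigl(\delta+\frac{\delta^2}{1-\delta}\bigr)\frac{W_t}{|U_t|(|U_t|-1)}\Delta^t(\X)=\frac{\delta}{1-\delta}\frac{W_t}{|U_t|(|U_t|-1)}\Delta^t(\X)$---identically what one gets by bounding the probability by $1$. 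The cross term is not ``of higher order inside the budget'': your main term already exhausts the stated budget $\frac{2\delta t}{\max(1,k-t-1)^2}\Delta_1(\X)$ exactly (via $W_t\le t$, $|U_t|(|U_t|-1)\ge(k-t-1)^2$, $\E[\Delta^t(\X)]\le2\Delta_1(\X)$, with no slack left), and the cross term restores precisely the $\frac{1}{1-\delta}$ you set out to remove. There is also a structural inconsistency in your treatment of $t=k-1$: the intermediate inequality you multiply against is only derived (and only meaningful) when $|U_t|\ge 2$, i.e.\ for $t\le k-2$, while the $\max(1,\cdot)$ and the $\allc$ conditioning belong to the \emph{final} statement of Lemma~\ref{lem:uncovered}, which has the $\frac{1}{1-\delta}$ baked in; you cannot reopen that proof to use its intermediate step and simultaneously cite its conclusion to dispose of the boundary case.

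For what it is worth, the paper does not prove the stated constant either: its proof simply bounds $\Pr[\chi_{U_t}(i)]\le 1$, invokes Lemma~\ref{lem:uncovered} as stated, and obtains $\frac{2\delta}{1-\delta}\frac{t}{\max(1,k-t-1)^2}\Delta_1(\X)$ for the uncovered contribution---and it is this $\frac{1}{1-\delta}$-version that the proof of Theorem~\ref{thm:delta-k-means} actually plugs in (note the $\frac{2t}{(1-\delta)\max(1,k-t-1)^2}$ term there). The missing $\frac{1}{1-\delta}$ in the statement of Lemma~\ref{lem:combined} is evidently a typo. So your instinct that the stated bound does not follow from Lemma~\ref{lem:uncovered} as written was correct, but the repair is to carry the $\frac{1}{1-\delta}$ factor through (or absorb it into constants using $\delta<1/2$), not to cancel it: that factor is intrinsic to item (i) of Lemma~\ref{lem:bounds}, from which every bound in this chain descends.
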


\begin{proof}
To compute the overall expectation, we have : 
\begin{align*}
    &\E[\Psi_{t+1}-\Psi_t | \F_t] 
    = \E[\Psi_{t+1}-\Psi_t | \F_t, \chi_{U_t}(i)] \Pr[\chi_{U_t}(i)] 
    \\ &+ \E[\Psi_{t+1}-\Psi_t | \F_t, \chi_{H_t}(i)] \Pr[\chi_{H_t}(i)]
\end{align*}
We can bound the first term using Lemma~\ref{lem:uncovered}
\begin{align*}
    &\E[\Psi_{t+1}-\Psi_t | \F_t, \chi_{U_t}(i)] \Pr[\chi_{U_t}(i)]  \\
    &\leq \E[\Psi_{t+1}-\Psi_t | \F_t, \chi_{U_t}(i)] \\ 
    &\leq \frac{2\delta}{1 - \delta} \frac{t}{\max(1,k-t-1)^2} \Delta_1(\X) 
\end{align*}
and the second term using Lemma~\ref{lem:covered}
\begin{align*}
    &\E[\Psi_{t+1}-\Psi_t | \F_t, \chi_{H_t}(i)] \Pr[\chi_{H_t}(i)] \\
    & \leq \frac{\Delta^t(\U_t)}{k-t} \left( (1 - \delta) \frac{\Delta^t(\H_t)}{\Delta^t(\X)} + \delta \frac{|\H_t|}{|\X|}  \right)\\
    & \leq (1 - \delta)\frac{\Delta^t(\H_t)}{k-t} + \delta \frac{\Delta^t(\X)}{k-t}
\end{align*}

Where in the last step we used $\Delta^t(\U_t) \leq \Delta^t(\X)$ and $|\H_t| \leq |\X|$. Combining both the terms completes the proof.
    
\end{proof}

We are now ready to provide a proof for Theorem~\ref{thm:delta-k-means}, which we state again :  
\begin{theorem} \label{thm:delta-k-means-duplicate}

Let $\X \subset \R^d$ be any dataset which is to be partitioned into $k$ clusters.  Let $S$ be the set of centers returned by $\delta\text{-}\mathtt{k\text{-}means\text{++}}(\X,k,\delta)$ for any $\delta \in (0,0.5)$ . The following approximation guarantee holds : 
$$ \E[\Delta(\X,S)] \leq 8(\ln k+2)\Delta_k(\X) + \frac{6k \delta}{1 - \delta} \Delta_1(\X) $$
    
\end{theorem}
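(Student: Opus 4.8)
The plan is to run the potential-function telescoping argument of \cite{dasgupta_13}, now that Lemmas~\ref{lem:cost-H} and~\ref{lem:combined} have isolated all of the $\delta$-dependent error. First I would decompose the final cost as $\Delta^k(\X) = \Delta^k(\H_k) + \Delta^k(\U_k)$ and observe that at the last step $t=k$ we have $W_k = k - |H_k| = |U_k|$, so the potential collapses to $\Psi_k = \Delta^k(\U_k)$. Hence it suffices to bound $\E[\Delta^k(\H_k)]$ and $\E[\Psi_k]$ separately. The covered part is immediate from Lemma~\ref{lem:cost-H} at $t=k$, contributing $8\Delta_k(\X) + \tfrac{2\delta}{1-\delta}\Delta_1(\X)$.

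For the potential I would first note that the single center chosen at $t=1$ lies in exactly one optimal cluster, so $|H_1|=1$, $W_1=0$, and therefore $\Psi_1 = 0$. Telescoping then gives $\E[\Psi_k] = \sum_{t=1}^{k-1}\E[\Psi_{t+1}-\Psi_t]$, and applying the tower property together with Lemma~\ref{lem:combined} bounds each increment. This splits the sum into a ``$\Delta_k$ part'' and a ``$\delta\Delta_1$ part.''

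For the $\Delta_k$ part I would substitute the uniform estimate $\E[\Delta^t(\H_t)] \le 8\Delta_k(\X) + \tfrac{2\delta}{1-\delta}\Delta_1(\X)$ into the term $(1-\delta)\tfrac{\E[\Delta^t(\H_t)]}{k-t}$, so that the $\Delta_k(\X)$ contribution produces the harmonic sum $\sum_{t=1}^{k-1}\tfrac{1}{k-t} = H_{k-1} \le 1 + \ln k$. Adding the lone $8\Delta_k(\X)$ coming from $\E[\Delta^k(\H_k)]$ yields $8\Delta_k(\X)\,(H_{k-1}+1) \le 8(\ln k + 2)\Delta_k(\X)$, which matches the target constant exactly.

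The crux, and the step I expect to demand the most care, is collecting every $\delta$-weighted contribution and showing their total fits under $\tfrac{6k\delta}{1-\delta}\Delta_1(\X)$. The dominant piece is $\tfrac{2\delta}{1-\delta}\Delta_1(\X)\sum_{t=1}^{k-1}\tfrac{t}{\max(1,k-t-1)^2}$: reindexing by $s = k-1-t$, isolating the $s\in\{0,1\}$ terms (each of size $\Theta(k)$), and controlling the tail by $\sum_{s\ge1}\tfrac{k-1}{s^2} \le (k-1)\tfrac{\pi^2}{6}$ bounds this sum by $3k$ and hence the piece by $\tfrac{6k\delta}{1-\delta}\Delta_1(\X)$. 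The remaining contributions, namely the $\tfrac{2\delta}{1-\delta}\Delta_1(\X)$ from $\E[\Delta^k(\H_k)]$ and the two harmonic-order terms of size $O(\delta\ln k)\,\Delta_1(\X)$, are lower order, and I would absorb them using the slack between the true value of the tail sum and the crude $\tfrac{\pi^2}{6}$ estimate (recalling $\delta < 1/2$). The delicate point is precisely this constant bookkeeping together with the asymmetric $\max(1,k-t-1)$ denominator, which guards against $|U_k|=0$ at $t=k-1$; no idea beyond Lemmas~\ref{lem:cost-H}--\ref{lem:combined} is required.
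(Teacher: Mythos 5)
Your proposal is correct and follows essentially the same route as the paper's own proof: the identical decomposition $\Delta^k(\X)=\Delta^k(\H_k)+\Psi_k$, Lemma~\ref{lem:cost-H} for the covered part, telescoping of the potential via Lemma~\ref{lem:combined}, the harmonic sum yielding the $8(\ln k+2)\Delta_k(\X)$ term, and the Basel-sum ($\pi^2/6$) bookkeeping, with lower-order $O(\delta\ln k)\Delta_1(\X)$ terms absorbed for sufficiently large $k$ exactly as the paper does. The only differences are cosmetic (you start the telescoping at $t=1$ with $\Psi_1=0$ and make the tower-property step explicit, which is if anything slightly more careful than the paper's indexing from $t=0$).
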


\begin{proof}

At the end of $k$ iterations, we have $\Delta(\X,S) = \Delta^t(\H_t) + \Delta^t(\U_t) = \Delta^t(\H_t) + \Psi_k$. The first term can be bound using Lemma~\ref{lem:cost-H}. For the second term, we can express $\Psi_k$ as a telescopic sum : 
\begin{align*}
&\E[\Delta(\X,S)] = \E[\Delta^k(\H_k)] + \sum_{t = 0}^{k-1} \E [\Psi_{t+1} - \Psi_t | \F_t] \\ 
&\leq \E[\Delta^k(\H_k)] +  \sum_{t = 0}^{k-1}  (1 - \delta) \frac{\E[\Delta^t(H_t)]}{k-t} \\ &+  \sum_{t = 0}^{k-1}\delta \left( \frac{2}{k-t} + \frac{2t}{(1 - \delta) \max(1,k-t-1)^2} \right) \Delta_1(\X) \\ 
&\leq 8\Delta_k(\X)\left( 1 + (1 - \delta) \sum_{t = 0}^{k-1} \frac{1}{k-t}\right) + \\ 
& \frac{2\delta}{1 - \delta} \Delta_1(\X) \left(k + 2 \ln k  + \sum_{t=0}^{k-2} \frac{t}{(k-t-1)^2} \right)
\end{align*}

To simplify this, note that  $\sum_{t = 0}^{k-1} \frac{1}{k-t} \leq 1 + \ln k $ ,  $\sum_{t = 0}^{k-2} \frac{t}{(k-t-1)^2} \leq k \sum_{n = 1}^{\infty}n^{-2} = \frac{\pi
^2}{6} k$ and $4 \ln k \leq \left(4 - \frac{\pi^2}{3} \right)k$ for sufficiently large $k$. Using these above we get our final bound : 

$$ \E[\Delta(\X,S)] \leq 8(\ln k+2)\Delta_k(\X) + \frac{6k \delta}{1 - \delta} \Delta_1(\X) $$

This completes the proof of the theorem. 
    
\end{proof}

\section{Experiments} \label{appendix:experiments}

\subsubsection*{Setup}
All the experiments were performed on a personal laptop with an Apple M3 Pro CPU chip, 11 cores and 18GB RAM. No dimensionality reduction was done on the datasets. No multi - core parallelization was used during the experiments. We have included the code for the experiments in the supplementary material.  

\subsubsection*{Datasets}
The data sets used for the experiments were taken from the annual KDD competitions and the UCI Machine Learning Repository. In the case that the data set consists of a train - test split, only the training data set without the corresponding labels was used for perform clustering. We also provide rough estimates of the $\beta$ parameters for the datasets used. These are computed by taking the ratio of the variance of the dataset with the average clustering cost of the solution output by $\rskmeans(\cdot, \cdot, \infty)$. 
\begin{table*}[ht]
\caption{Description of datasets used for experiments}
\label{tab:datasets}

\vskip 0.15in
\begin{centering}

\begin{tabular}{p{7cm}p{2cm}p{1cm}p{1cm}p{1cm}}
\toprule

{$\X$}  & {\sc  $n$} & {\sc $k$}  & { \sc $d$} & $\tilde{\beta}_k(\X)$\\
\midrule 

{\sc Diabetes \cite{diabetes}}   & {\sc $253,680$} & {\sc $50$}  & { \sc $21$} & $\sim 6.5$\\

\midrule 

{\sc Forest \cite{forest}}   & {\sc $581,010$} & {\sc $7$}  & { \sc $54$} & $\sim 3.3$\\

\midrule 

{\sc Protein \cite{protein}}   & {\sc $145,751$} & {\sc $100$}  & { \sc $74$} & $\sim 9.7$ \\

\midrule 

{\sc Poker \cite{poker}}   & {\sc $1,025,010$} & {\sc $50$}  & { \sc $10$} & $\sim 2.4$\\

\midrule

{\sc Cancer \cite{cancer}}   & {\sc $94,730$} & {\sc $100$}  & { \sc $117$} & $\sim 1.9$\\

\bottomrule
\end{tabular}

\end{centering}
\end{table*}

\subsubsection*{Algorithms}

\begin{enumerate}
    \item $\rskmeans$ :  Our approach takes as input the parameter $m$ which is an upper bound on the number of iterations of rejection sampling. This provides a trade-off between computational cost and solution quality. We can also set $m = \infty$ to recover the $O(\log k)$ guarantee of $k$-\means++.
    \item $\afkmc$ : This is the Monte Carlo Markov Chain based approach of \cite{bachem_16a}. It also takes as input a parameter $m$ which is the length of the markov chain used for sampling.
\end{enumerate}

\begin{remark}
    We do not include comparisons with the algorithm of \cite{cohen-addad_20} since their techniques are algorithmically sophisticated including tree embeddings and LSH data structures for approximate nearest neighbor search. This incurs additional poly-logarithmic dependence on the aspect ratio of the dataset  and even $n^{O(1)}$ terms for performing a single clustering. Moreover, a publicly available implementation is not available to the best of our knowledge. Similar reasons are also mentioned in \cite{charikar_23} for not including this algorithm in their experiments as well. As for the algorithm of \cite{charikar_23} called $\prone$, it achieves an $O(k^4\log k)$ guarantee while running in expected time $O(n \log n)$ after $O(\nnz(\X))$ pre-processing. Due to the large approximation factor, \cite{charikar_23} suggest to use $\prone$ in a pipeline for constructing coresets instead of clustering the whole dataset. Moreover, the class of datasets targeted by both \cite{cohen-addad_20} and \cite{charikar_23} include the large $k(\sim 5 \times 10^3)$ regime, while our approach is more suitable for massive datasets where $n \gg k$. This is because the time taken by our algorithm to perform a single clustering is \textit{sublinear} in $n$, much like the results of \cite{bachem_16a}. Hence, we compare our approach with their $\afkmc$ algorithm. 
\end{remark}

\subsection*{Experiment 1}

In this experiment, we compare the performance of the default $\afkmc$ with $m = 200$ (as done by  \cite{bachem_16a} in their implementation) with the performance $\rskmeans$ without setting any upper bound for the number of iterations for the datasets given in Table~\ref{tab:datasets}. Recall that our algorithm does not require an estimate of $\beta$, thus making it free of any extra parameters which require tuning.  The algorithms were run for $20$ iterations for computing the averages and standard deviations. We also study the effect of varying the number of clusters $k \in \{5,10,20,50,100\}$ for each dataset. 

\begin{table*}[t]
\caption{Comparison of $\afkmc(\cdot, \cdot,200)$ with $\rskmeans(\cdot,\cdot, \infty)$}
\label{tab:exp1}
\begin{center}

\begin{scriptsize}

\begin{tabular}{p{2cm}p{1.5cm}p{1.5cm}p{1.5cm}p{1.5cm}p{1.5cm}p{1.5cm}}
\toprule

{\sc Name}  & {\sc $\rskmeans$ Cost} & {\sc $\afkmc$ Cost } &  {\sc $\rskmeans$ Std. Dev.}&  {\sc $\afkmc$ Std. Dev. }&{\sc $\rskmeans$ Time} & {\sc $\afkmc$ Time } \\

\midrule

{\sc Diabetes}  & $7.475 \times 10^6$ & $7.503 \times 10^6$& $3.23 \times 10^5$ & $3.13 \times 10^5$& $5.15 \times 10^{-1}$ & $1.02 \times 10^1$ \\ 
{\sc Forest}  & $7.707 \times 10^{11}$ & $7.748 \times 10^{11}$& $1.31 \times 10^{11}$ & $9.61 \times 10^{10}$& $1.48 \times 10^{-1}$ & $3.51 \times 10^0$ \\ 
{\sc Protein}  & $2.439 \times 10^{11}$ & $2.436 \times 10^{11}$& $4.09 \times 10^{10}$ & $4.44 \times 10^{10}$& $1.06 \times 10^{0}$ & $1.37 \times 10^1$ \\ 

{\sc Poker}  & $3.322 \times 10^7$ & $3.333 \times 10^7$ & $5.55 \times 10^5$ & $5.96 \times 10^5$ & $8.95 \times 10^{-1}$ & $5.43 \times 10^1$ \\

{\sc Cancer}  & $6.067 \times 10^6$ & $6.086 \times 10^6$ & $1.19 \times 10^5$ & $7.18 \times 10^4$ & $3.75 \times 10^{-1}$ & $9.69 \times 10^0$ \\

\bottomrule
\end{tabular}
\end{scriptsize}

\end{center}
\end{table*}

\begin{table*}[t]
\caption{Comparison of $\rskmeans$ and $\afkmc$ for different datasets.}
\label{tab:all_datasets}
\vskip 0.15in
\centering
\begin{scriptsize}
\begin{tabular}{p{1cm} p{0.8cm} p{1.5cm} p{1.5cm} p{1.5cm} p{1.5cm} p{1.5cm} p{1.5cm} p{1.5cm}  }
\toprule
{\sc Dataset} & {\sc $k$} & {\sc $\rskmeans$ Cost} & {\sc $\afkmc$ Cost} & {\sc $\rskmeans$ Std. Dev.} & {\sc $\afkmc$ Std. Dev.} & {\sc $\rskmeans$ Time} & {\sc $\afkmc$ Time} \\
\midrule
\multirow{5}{*}{Diabetes} 
& 5   & $2.847 \times 10^7$ & $3.089 \times 10^7$ & $3.59 \times 10^6$ & $4.92 \times 10^6$ & $2.32 \times 10^{-2}$ & $8.71 \times 10^{-1}$ \\ 
& 10  & $1.768 \times 10^7$ & $1.740 \times 10^7$ & $1.33 \times 10^6$ & $2.23 \times 10^6$ & $4.78 \times 10^{-2}$ & $1.99 \times 10^0$ \\ 
& 20  & $1.174 \times 10^7$ & $1.195 \times 10^7$ & $5.03 \times 10^5$ & $9.98 \times 10^5$ & $1.32 \times 10^{-1}$ & $4.15 \times 10^0$ \\ 
& 50  & $7.401 \times 10^6$ & $7.446 \times 10^6$ & $3.26 \times 10^5$ & $2.29 \times 10^5$ & $5.08 \times 10^{-1}$ & $1.03 \times 10^1$ \\ 
& 100 & $5.515 \times 10^6$ & $5.476 \times 10^6$ & $1.39 \times 10^5$ & $1.25 \times 10^5$ & $1.59 \times 10^0$ & $2.14 \times 10^1$ \\ 
\midrule
\multirow{5}{*}{\sc Forest} 
& 5   & $1.041 \times 10^{12}$ & $1.062 \times 10^{12}$ & $1.69 \times 10^{11}$ & $1.78 \times 10^{11}$ & $1.13 \times 10^{-1}$ & $2.31 \times 10^0$ \\ 
& 10  & $5.941 \times 10^{11}$ & $5.853 \times 10^{11}$ & $8.65 \times 10^{10}$ & $6.37 \times 10^{10}$ & $2.47 \times 10^{-1}$ & $5.41 \times 10^0$ \\ 
& 20  & $3.377 \times 10^{11}$ & $3.373 \times 10^{11}$ & $2.51 \times 10^{10}$ & $2.02 \times 10^{10}$ & $6.97 \times 10^{-1}$ & $1.16 \times 10^1$ \\ 
& 50  & $1.834 \times 10^{11}$ & $1.846 \times 10^{11}$ & $8.35 \times 10^9$ & $6.48 \times 10^9$ & $3.27 \times 10^0$ & $2.98 \times 10^1$ \\ 
& 100 & $1.221 \times 10^{11}$ & $1.221 \times 10^{11}$ & $2.64 \times 10^9$ & $3.41 \times 10^9$ & $1.01 \times 10^1$ & $5.85 \times 10^1$ \\ 
\midrule
\multirow{5}{*}{\sc Protein} 
& 5   & $1.048 \times 10^{12}$ & $1.085 \times 10^{12}$ & $2.88 \times 10^{11}$ & $3.00 \times 10^{11}$ & $2.48 \times 10^{-2}$ & $6.21 \times 10^{-1}$ \\ 
& 10  & $6.394 \times 10^{11}$ & $5.882 \times 10^{11}$ & $9.71 \times 10^{10}$ & $6.15 \times 10^{10}$ & $4.59 \times 10^{-2}$ & $1.40 \times 10^0$ \\ 
& 20  & $4.388 \times 10^{11}$ & $4.434 \times 10^{11}$ & $2.83 \times 10^{10}$ & $3.81 \times 10^{10}$ & $1.26 \times 10^{-1}$ & $2.93 \times 10^0$ \\ 
& 50  & $3.029 \times 10^{11}$ & $3.059 \times 10^{11}$ & $1.20 \times 10^{10}$ & $8.55 \times 10^9$ & $3.96 \times 10^{-1}$ & $7.59 \times 10^0$ \\ 
& 100 & $2.417 \times 10^{11}$ & $2.456 \times 10^{11}$ & $4.73 \times 10^9$ & $5.44 \times 10^9$ & $1.24 \times 10^0$ & $1.47 \times 10^1$ \\ 
\midrule
\multirow{5}{*}{\sc Poker} 
& 5   & $7.81 \times 10^7$ & $8.03 \times 10^7$ & $5.72 \times 10^6$ & $9.10 \times 10^6$ & $4.77 \times 10^{-2}$ & $3.41 \times 10^0$ \\ 
& 10  & $5.88 \times 10^7$ & $6.04 \times 10^7$ & $2.61 \times 10^6$ & $3.41 \times 10^6$ & $1.14 \times 10^{-1}$ & $8.13 \times 10^0$ \\ 
& 20  & $4.58 \times 10^7$ & $4.51 \times 10^7$ & $1.70 \times 10^6$ & $1.16 \times 10^6$ & $2.70 \times 10^{-1}$ & $1.64 \times 10^1$ \\ 
& 50  & $3.31 \times 10^7$ & $3.31 \times 10^7$ & $5.41 \times 10^5$ & $4.68 \times 10^5$ & $8.24 \times 10^{-1}$ & $4.06 \times 10^1$ \\ 
& 100 & $2.68 \times 10^7$ & $2.69 \times 10^7$ & $4.81 \times 10^5$ & $3.89 \times 10^5$ & $2.07 \times 10^0$ & $8.29 \times 10^1$ \\ 
\midrule
\multirow{5}{*}{\sc Cancer}
& $ 5$   & $1.21 \times 10^7$ & $1.23 \times 10^7$ & $1.03 \times 10^6$ & $1.17 \times 10^6$ & $1.96 \times 10^{-2}$ & $3.75 \times 10^{-1}$ \\ 
& $10$  & $1.07 \times 10^7$ & $1.06 \times 10^7$ & $5.96 \times 10^5$ & $7.44 \times 10^5$ & $2.46 \times 10^{-2}$ & $8.40 \times 10^{-1}$ \\ 
& $ 20$  & $8.83 \times 10^6$ & $8.75 \times 10^6$ & $4.02 \times 10^5$ & $4.05 \times 10^5$ & $3.89 \times 10^{-2}$ & $1.84 \times 10^0$ \\ 
& $ 50$  & $7.02 \times 10^6$ & $7.06 \times 10^6$ & $1.46 \times 10^5$ & $1.96 \times 10^5$ & $8.84 \times 10^{-2}$ & $4.77 \times 10^0$ \\ 
& $ 100$ & $6.08 \times 10^6$ & $6.06 \times 10^6$ & $1.06 \times 10^5$ & $7.22 \times 10^4$ & $3.69 \times 10^{-1}$ & $9.49 \times 10^0$ \\ 

\bottomrule
\end{tabular}
\end{scriptsize}

\end{table*}

\subsection*{Experiment 2}

In this experiment we study the convergence properties of $\rskmeans$. We plot the average clustering cost of the solutions output by $\rskmeans$ vs the time taken to compute these solutions and compare these with the baseline $k$-\means++ solution.  We also report $95\%$ confidence intervals in the plots over 40 iterations of the algorithms. The plots are generated by varying the upper bound on the number of rejection sampling iterations from $m \in \{ 5,10,20,50,75,100,125,150\}$. 

\begin{figure}[htbp]
\label{trade-off}
    \centering
    \includegraphics[width=0.45\textwidth]{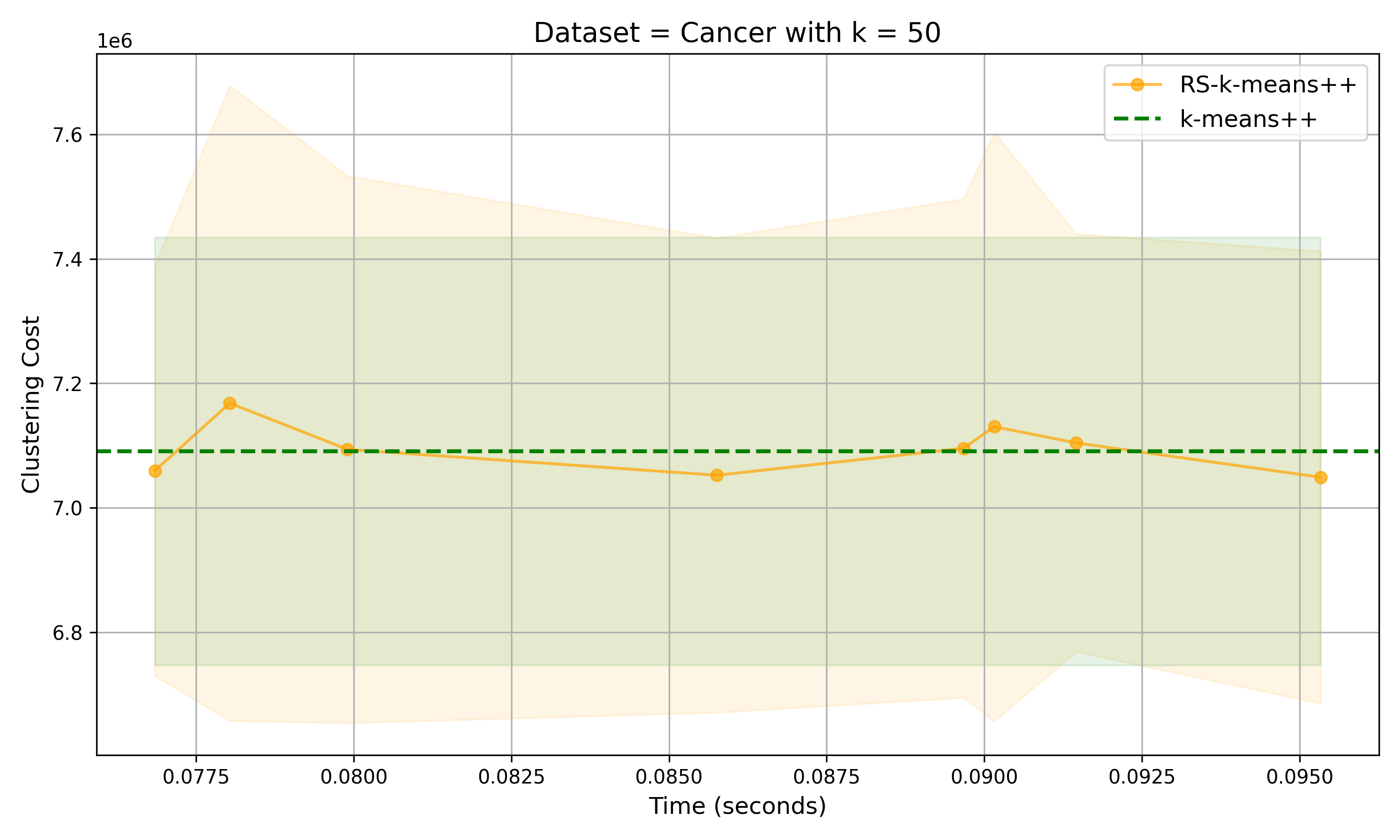}
    \includegraphics[width=0.45\textwidth]{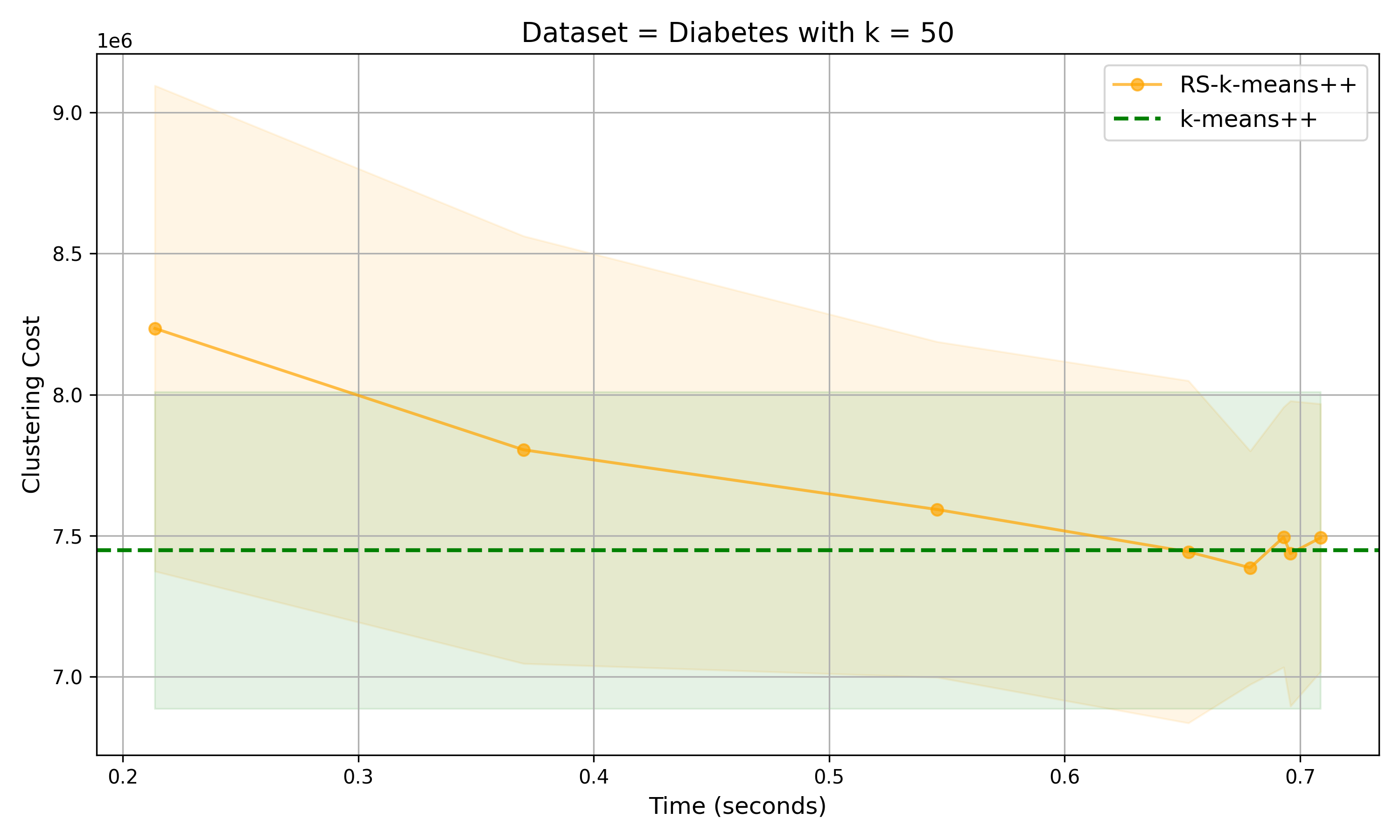}
    \includegraphics[width=0.45\textwidth]{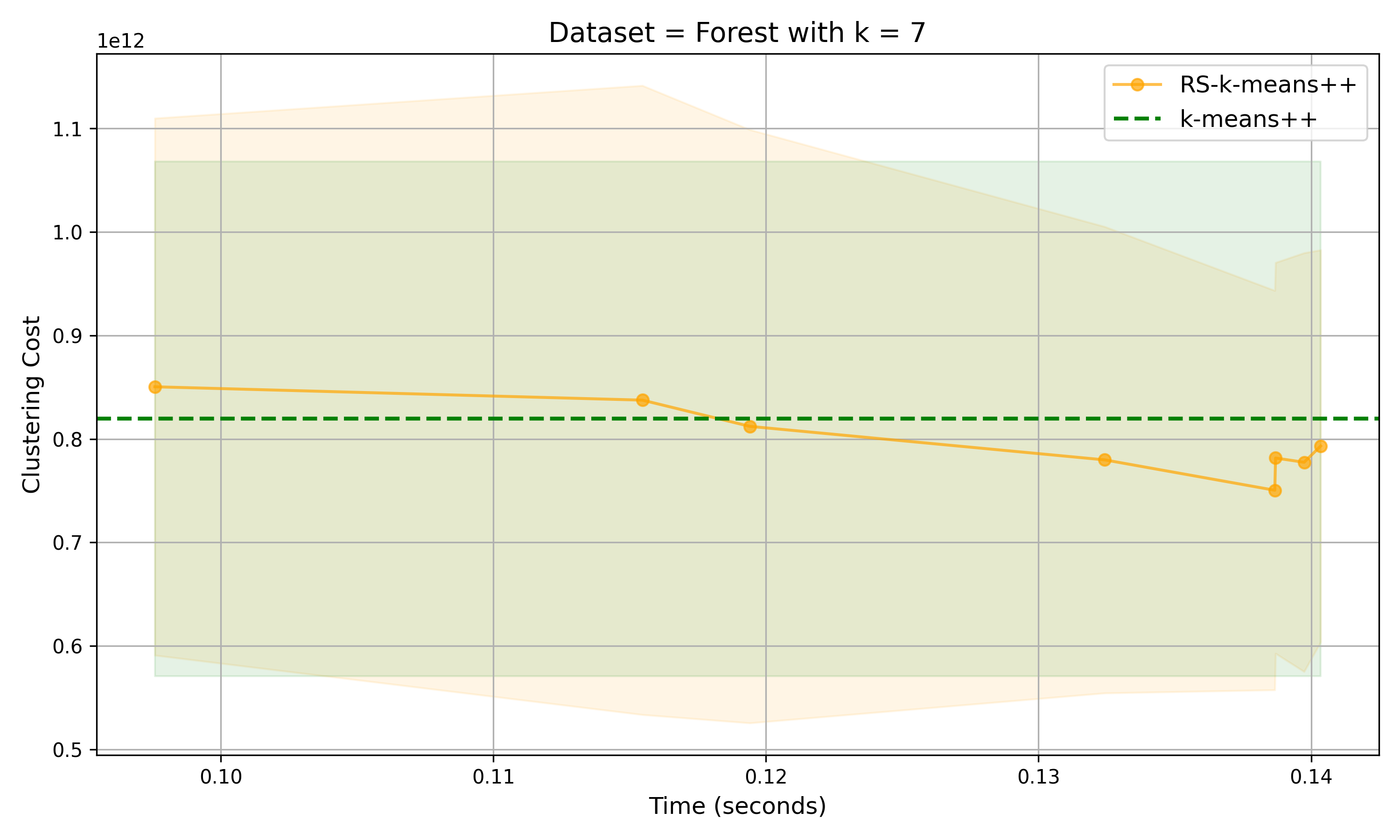}
    \includegraphics[width=0.45\textwidth]{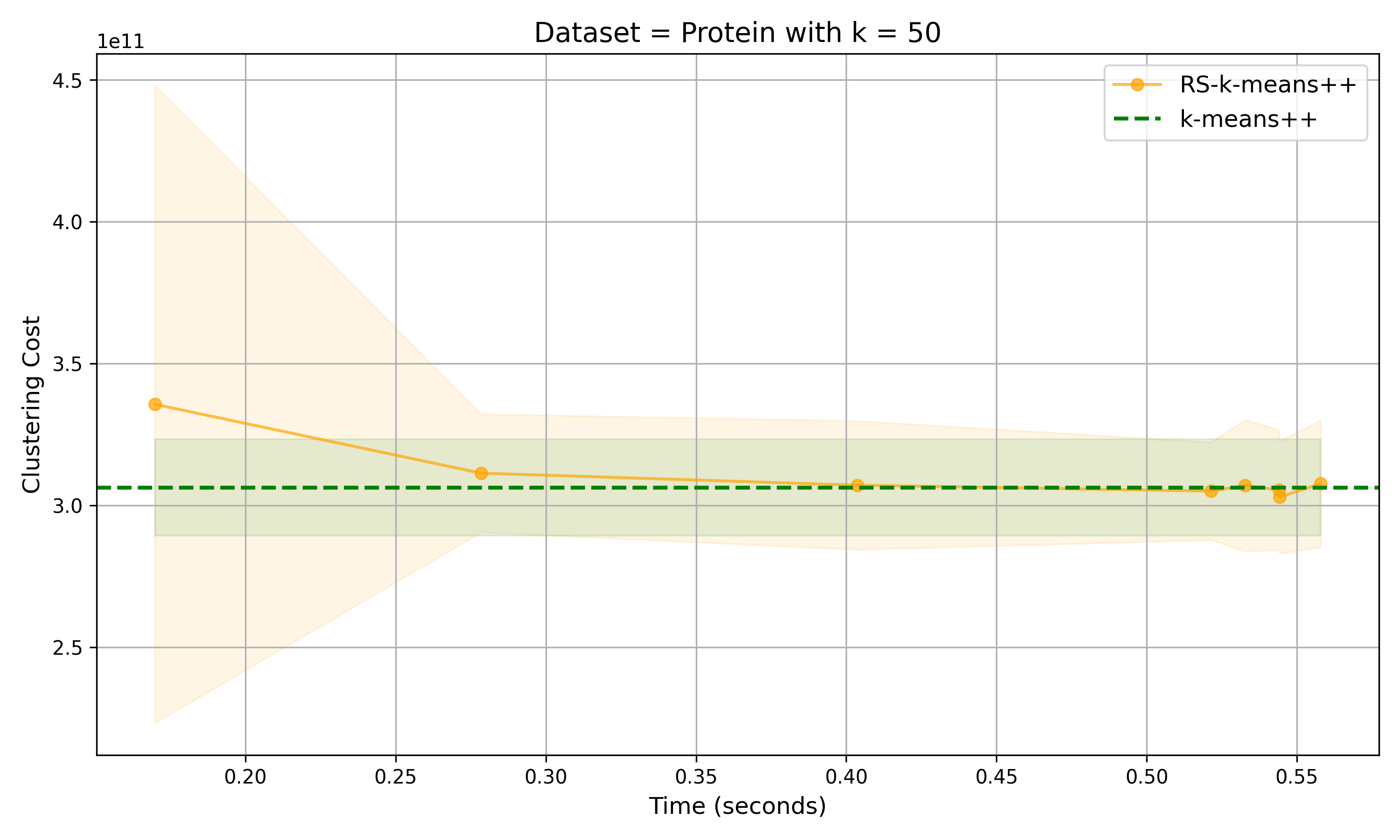}
    \includegraphics[width=0.45\textwidth]{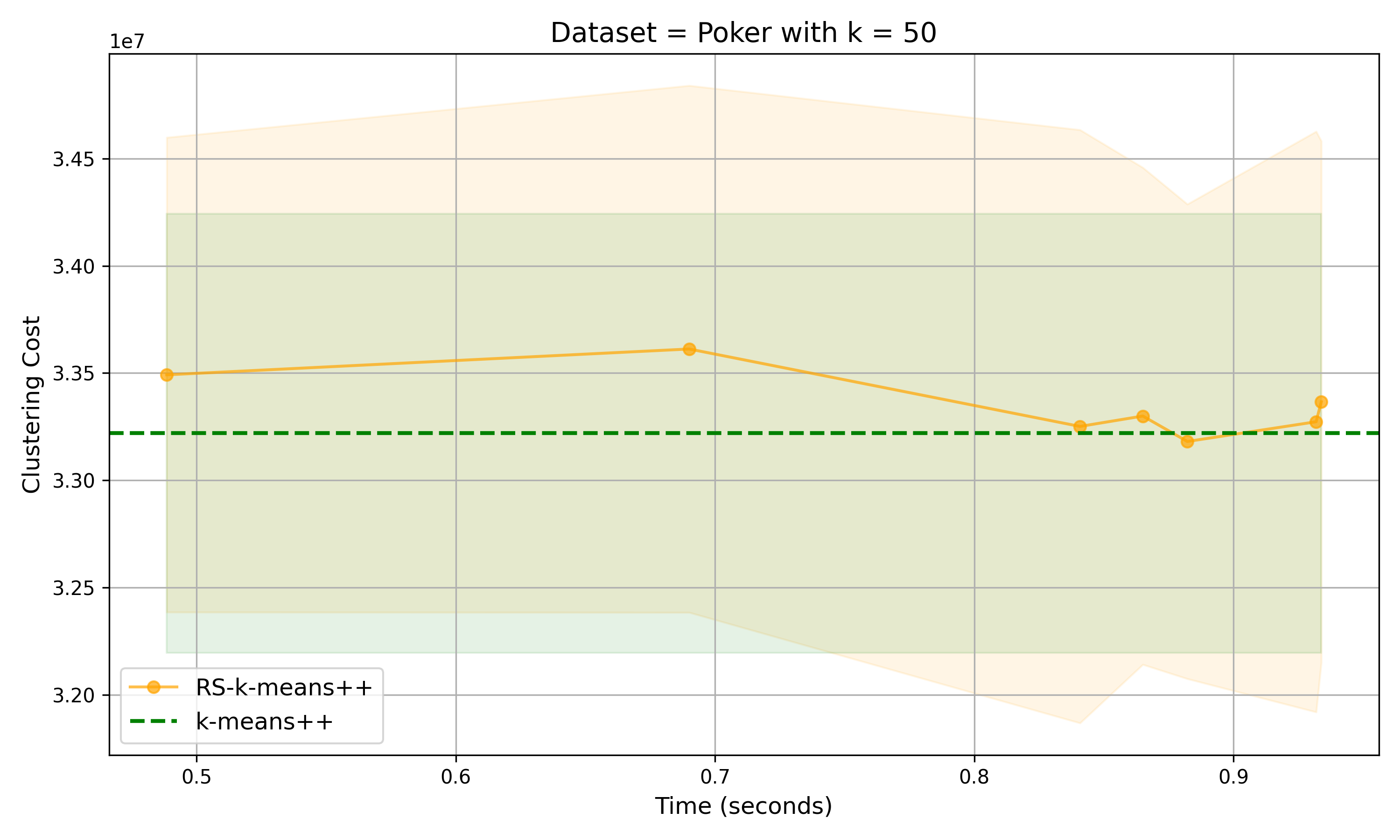}
    \caption{Trade-off plots}
\end{figure}

\begin{table*}[t]
\caption{Data points for the trade-off plots}
\label{tab:rskmeanspp_results}
\vskip 0.15in
\centering
\begin{scriptsize}
\begin{tabular}{p{2cm}p{2cm}p{2.5cm}p{2.5cm}p{2.5cm}}
\toprule
{\sc Dataset} & {\sc $m$} & {\sc $\rskmeans$ Cost} & {\sc $\rskmeans$ Std. Dev.} & {\sc $\rskmeans$ Time} \\
\midrule
\multirow{5}{*}{\sc Diabetes} 
& 5   & $8.235 \times 10^6$  & $4.39 \times 10^5$  & $2.14 \times 10^{-1}$ \\ 
& 10  & $7.804 \times 10^6$  & $3.87 \times 10^5$  & $3.70 \times 10^{-1}$ \\ 
& 20  & $7.593 \times 10^6$  & $3.04 \times 10^5$  & $5.46 \times 10^{-1}$ \\ 
& 50  & $7.443 \times 10^6$  & $3.10 \times 10^5$  & $6.53 \times 10^{-1}$ \\ 
& 75  & $7.495 \times 10^6$  & $2.35 \times 10^5$  & $6.93 \times 10^{-1}$ \\ 
& 100 & $7.386 \times 10^6$  & $2.11 \times 10^5$  & $6.79 \times 10^{-1}$ \\ 
& 125 & $7.493 \times 10^6$  & $2.42 \times 10^5$  & $7.09 \times 10^{-1}$ \\ 
& 150 & $7.437 \times 10^6$  & $2.76 \times 10^5$  & $6.96 \times 10^{-1}$ \\ 
\midrule
\multirow{5}{*}{\sc Forest} 
& 5   & $8.504 \times 10^{11}$ & $1.32 \times 10^{11}$ & $9.76 \times 10^{-2}$ \\ 
& 10  & $8.375 \times 10^{11}$ & $1.55 \times 10^{11}$ & $1.15 \times 10^{-1}$ \\ 
& 20  & $8.122 \times 10^{11}$ & $1.46 \times 10^{11}$ & $1.19 \times 10^{-1}$ \\ 
& 50  & $7.798 \times 10^{11}$ & $1.15 \times 10^{11}$ & $1.32 \times 10^{-1}$ \\ 
& 75  & $7.816 \times 10^{11}$ & $9.64 \times 10^{10}$ & $1.39 \times 10^{-1}$ \\ 
& 100 & $7.504 \times 10^{11}$ & $9.85 \times 10^{10}$ & $1.39 \times 10^{-1}$ \\ 
& 125 & $7.775 \times 10^{11}$ & $1.03 \times 10^{11}$ & $1.40 \times 10^{-1}$ \\ 
& 150 & $7.932 \times 10^{11}$ & $9.67 \times 10^{10}$ & $1.40 \times 10^{-1}$ \\ 
\midrule
\multirow{5}{*}{\sc Protein} 
& 5   & $3.356 \times 10^{11}$ & $5.73 \times 10^{10}$ & $1.70 \times 10^{-1}$ \\ 
& 10  & $3.114 \times 10^{11}$ & $1.06 \times 10^{10}$ & $2.78 \times 10^{-1}$ \\ 
& 20  & $3.071 \times 10^{11}$ & $1.16 \times 10^{10}$ & $4.04 \times 10^{-1}$ \\ 
& 50  & $3.070 \times 10^{11}$ & $1.18 \times 10^{10}$ & $5.33 \times 10^{-1}$ \\ 
& 75  & $3.029 \times 10^{11}$ & $1.01 \times 10^{10}$ & $5.44 \times 10^{-1}$ \\ 
& 100 & $3.076 \times 10^{11}$ & $1.14 \times 10^{10}$ & $5.58 \times 10^{-1}$ \\ 
& 125 & $3.054 \times 10^{11}$ & $1.08 \times 10^{10}$ & $5.44 \times 10^{-1}$ \\ 
& 150 & $3.050 \times 10^{11}$ & $8.80 \times 10^9$  & $5.21 \times 10^{-1}$ \\ 
\midrule
\multirow{5}{*}{\sc Poker} 
& 5   & $3.35 \times 10^7$  & $5.65 \times 10^5$  & $4.88 \times 10^{-1}$ \\ 
& 10  & $3.36 \times 10^7$  & $6.27 \times 10^5$  & $6.90 \times 10^{-1}$ \\ 
& 20  & $3.33 \times 10^7$  & $7.06 \times 10^5$  & $8.41 \times 10^{-1}$ \\ 
& 50  & $3.33 \times 10^7$  & $6.91 \times 10^5$  & $9.32 \times 10^{-1}$ \\ 
& 75  & $3.33 \times 10^7$  & $5.91 \times 10^5$  & $8.65 \times 10^{-1}$ \\ 
& 100 & $3.32 \times 10^7$  & $5.65 \times 10^5$  & $8.82 \times 10^{-1}$ \\ 
& 125 & $3.34 \times 10^7$  & $6.21 \times 10^5$  & $9.34 \times 10^{-1}$ \\ 
& 150 & $3.34 \times 10^7$  & $6.21 \times 10^5$  & $9.34 \times 10^{-1}$ \\ 
\midrule
\multirow{5}{*}{\sc Cancer} 
& 5   & $7.17 \times 10^6$  & $2.60 \times 10^5$  & $7.80 \times 10^{-2}$ \\ 
& 10  & $7.06 \times 10^6$  & $1.68 \times 10^5$  & $7.68 \times 10^{-2}$ \\ 
& 20  & $7.05 \times 10^6$  & $1.95 \times 10^5$  & $8.58 \times 10^{-2}$ \\ 
& 50  & $7.05 \times 10^6$  & $1.85 \times 10^5$  & $9.53 \times 10^{-2}$ \\ 
& 75  & $7.13 \times 10^6$  & $2.41 \times 10^5$  & $9.02 \times 10^{-2}$ \\ 
& 100 & $7.10 \times 10^6$  & $1.71 \times 10^5$  & $9.15 \times 10^{-2}$ \\ 
& 125 & $7.09 \times 10^6$  & $2.24 \times 10^5$  & $7.99 \times 10^{-2}$ \\ 
& 150 & $7.10 \times 10^6$  & $2.04 \times 10^5$  & $8.97 \times 10^{-2}$ \\ 
\bottomrule
\end{tabular}
\end{scriptsize}

\end{table*}

\subsection*{Observations}

Based on the above experiments, we summarize our observations as follows : 

\begin{itemize}
    \item \textbf{Observation 1.} The data dependent parameter $\beta$ does not take on prohibitively large values. Indeed, for the data sets used in our experiments, these values are quite reasonable. This observation is in accordance with the experiments of \cite{bachem_16b}.  

    \item \textbf{Observation 2.} $\rskmeans$ provides solutions with comparable quality to $\afkmc$, while generally being much faster. On datasets like {\sc Poker} where the data size is much larger than the number of clusters, we observe a speedup of $\sim 40$ - $70 \times$. Moreover, this version of $\rskmeans$ does not require choosing any extra parameters as input. 

    \item \textbf{Observation 3.} The solution quality of $\rskmeans$ approaches that of $k$-\means++ rapidly with increase in the upper bound for the number of rejection sampling rounds allowed. This can be seen from the plots in Figure ~\ref{trade-off}

\end{itemize}

\section{Conclusion}

In this work, we present a simple rejection sampling approach to $k$-\means++ through the $\rskmeans$ algorithm. We show that our algorithm allows for new trade-offs between the computational cost and solution quality of the $k$-\means++ seeding procedure. It also has the advantage of supporting fast data updates and being easy to adapt in the parallel setting. The solution quality of $\rskmeans$ is bounded through the analysis of a \textit{perturbed} version of the standard $k$-\means++ method. The effectiveness of our approach is reflected in the experimental evaluations performed. Interesting future directions include the possibility of improving the dependence of the runtime - quality trade-off on the data dependent parameter. We believe that similar techniques could be adapted to the setting where the data set is present in the disk instead of the main memory and the goal is to minimize the number of disk accesses.

\bibliographystyle{apalike}
\bibliography{refs}

\begin{thebibliography}{}

\bibitem[Ackermann et~al., 2012]{ackermann_12}
Ackermann, M.~R., M\"{a}rtens, M., Raupach, C., Swierkot, K., Lammersen, C., and Sohler, C. (2012).
\newblock Streamkm++: A clustering algorithm for data streams.
\newblock {\em ACM J. Exp. Algorithmics}, 17.

\bibitem[Ahmadian et~al., 2020]{ahmadian_17}
Ahmadian, S., Norouzi-Fard, A., Svensson, O., and Ward, J. (2020).
\newblock Better guarantees for \$k\$-means and euclidean \$k\$-median by primal-dual algorithms.
\newblock {\em SIAM Journal on Computing}, 49(4):FOCS17--97--FOCS17--156.

\bibitem[Ailon et~al., 2009]{ailon_09}
Ailon, N., Jaiswal, R., and Monteleoni, C. (2009).
\newblock Streaming k-means approximation.
\newblock In Bengio, Y., Schuurmans, D., Lafferty, J., Williams, C., and Culotta, A., editors, {\em Advances in Neural Information Processing Systems}, volume~22. Curran Associates, Inc.

\bibitem[Arthur and Vassilvitskii, 2006a]{arthur_vassilvitskii_06b}
Arthur, D. and Vassilvitskii, S. (2006a).
\newblock How slow is the k-means method?
\newblock In {\em SCG '06: Proceedings of the twenty-second annual symposium on computational geometry}. ACM Press.

\bibitem[Arthur and Vassilvitskii, 2006b]{arthur_vassilvitskii_06a}
Arthur, D. and Vassilvitskii, S. (2006b).
\newblock Worst-case and smoothed analysis of the icp algorithm, with an application to the k-means method.
\newblock In {\em Symposium on Foundations of Computer Science}.

\bibitem[Arthur and Vassilvitskii, 2007]{arthur_vassilvitskii_07}
Arthur, D. and Vassilvitskii, S. (2007).
\newblock k-means++: the advantages of careful seeding.
\newblock In {\em Proceedings of the Eighteenth Annual ACM-SIAM Symposium on Discrete Algorithms}, SODA '07, page 1027–1035, USA. Society for Industrial and Applied Mathematics.

\bibitem[Awasthi et~al., 2015]{awasthi_15}
Awasthi, P., Charikar, M., Krishnaswamy, R., and Sinop, A.~K. (2015).
\newblock {The Hardness of Approximation of Euclidean k-Means}.
\newblock In Arge, L. and Pach, J., editors, {\em 31st International Symposium on Computational Geometry (SoCG 2015)}, volume~34 of {\em Leibniz International Proceedings in Informatics (LIPIcs)}, pages 754--767, Dagstuhl, Germany. Schloss Dagstuhl -- Leibniz-Zentrum f{\"u}r Informatik.

\bibitem[Bachem et~al., 2016a]{bachem_16a}
Bachem, O., Lucic, M., Hassani, H., and Krause, A. (2016a).
\newblock Fast and provably good seedings for k-means.
\newblock In Lee, D., Sugiyama, M., Luxburg, U., Guyon, I., and Garnett, R., editors, {\em Advances in Neural Information Processing Systems}, volume~29. Curran Associates, Inc.

\bibitem[Bachem et~al., 2016b]{bachem_16b}
Bachem, O., Lucic, M., Hassani, S.~H., and Krause, A. (2016b).
\newblock Approximate k-means++ in sublinear time.
\newblock {\em Proceedings of the AAAI Conference on Artificial Intelligence}, 30(1).

\bibitem[Bachem et~al., 2017a]{bachem_17b}
Bachem, O., Lucic, M., and Krause, A. (2017a).
\newblock Distributed and provably good seedings for k-means in constant rounds.
\newblock In Precup, D. and Teh, Y.~W., editors, {\em Proceedings of the 34th International Conference on Machine Learning}, volume~70 of {\em Proceedings of Machine Learning Research}, pages 292--300. PMLR.

\bibitem[Bachem et~al., 2017b]{bachem_17}
Bachem, O., Lucic, M., and Krause, A. (2017b).
\newblock Practical coreset constructions for machine learning.
\newblock {\em arXiv preprint arXiv:1703.06476}.

\bibitem[Bahmani et~al., 2012]{bahmani_12}
Bahmani, B., Moseley, B., Vattani, A., Kumar, R., and Vassilvitskii, S. (2012).
\newblock Scalable k-means++.
\newblock {\em Proc. VLDB Endow.}, 5(7):622–633.

\bibitem[Bhattacharya et~al., 2020]{bhattacharya_20}
Bhattacharya, A., Eube, J., R\"{o}glin, H., and Schmidt, M. (2020).
\newblock {Noisy, Greedy and Not so Greedy k-Means++}.
\newblock In Grandoni, F., Herman, G., and Sanders, P., editors, {\em 28th Annual European Symposium on Algorithms (ESA 2020)}, volume 173 of {\em Leibniz International Proceedings in Informatics (LIPIcs)}, pages 18:1--18:21, Dagstuhl, Germany. Schloss Dagstuhl -- Leibniz-Zentrum f{\"u}r Informatik.

\bibitem[Blackard, 1998]{forest}
Blackard, J. (1998).
\newblock Covertype [dataset].
\newblock UCI Machine Learning Repository.

\bibitem[Caruana and Joachims, 2004]{protein}
Caruana, R. and Joachims, T. (2004).
\newblock Kdd cup 2004: Protein homology dataset.
\newblock \url{https://kdd.org/kdd-cup/view/kdd-cup-2004/Data}.
\newblock Accessed: 2025-01-29.

\bibitem[Cattral and Oppacher, 2002]{poker}
Cattral, R. and Oppacher, F. (2002).
\newblock Poker hand [dataset].
\newblock UCI Machine Learning Repository.

\bibitem[Charikar et~al., 2023]{charikar_23}
Charikar, M., Henzinger, M., Hu, L., V\"{o}tsch, M., and Waingarten, E. (2023).
\newblock Simple, scalable and effective clustering via one-dimensional projections.
\newblock In Oh, A., Naumann, T., Globerson, A., Saenko, K., Hardt, M., and Levine, S., editors, {\em Advances in Neural Information Processing Systems}, volume~36, pages 64618--64649. Curran Associates, Inc.

\bibitem[Choo et~al., 2020]{choo_20}
Choo, D., Grunau, C., Portmann, J., and Rozhon, V. (2020).
\newblock k-means++: few more steps yield constant approximation.
\newblock In III, H.~D. and Singh, A., editors, {\em Proceedings of the 37th International Conference on Machine Learning}, volume 119 of {\em Proceedings of Machine Learning Research}, pages 1909--1917. PMLR.

\bibitem[Cohen-Addad, 2018]{cohen-addad_18}
Cohen-Addad, V. (2018).
\newblock A fast approximation scheme for low-dimensional k-means.
\newblock In {\em Proceedings of the Twenty-Ninth Annual ACM-SIAM Symposium on Discrete Algorithms}, SODA '18, page 430–440, USA. Society for Industrial and Applied Mathematics.

\bibitem[Cohen-Addad and C.S., 2019]{cohen-addad_c.s.19}
Cohen-Addad, V. and C.S., K. (2019).
\newblock Inapproximability of clustering in lp metrics.
\newblock In {\em 2019 IEEE 60th Annual Symposium on Foundations of Computer Science (FOCS)}, pages 519--539.

\bibitem[Cohen-Addad et~al., 2022]{cohen-addad_22}
Cohen-Addad, V., Esfandiari, H., Mirrokni, V., and Narayanan, S. (2022).
\newblock Improved approximations for euclidean k-means and k-median, via nested quasi-independent sets.
\newblock In {\em Proceedings of the 54th Annual ACM SIGACT Symposium on Theory of Computing}, STOC 2022, page 1621–1628, New York, NY, USA. Association for Computing Machinery.

\bibitem[Cohen-Addad et~al., 2019]{cohen-addad_19}
Cohen-Addad, V., Klein, P.~N., and Mathieu, C. (2019).
\newblock Local search yields approximation schemes for \$k\$-means and \$k\$-median in euclidean and minor-free metrics.
\newblock {\em SIAM Journal on Computing}, 48(2):644--667.

\bibitem[Cohen-Addad et~al., 2020]{cohen-addad_20}
Cohen-Addad, V., Lattanzi, S., Norouzi-Fard, A., Sohler, C., and Svensson, O. (2020).
\newblock Fast and accurate k-means++ via rejection sampling.
\newblock In Larochelle, H., Ranzato, M., Hadsell, R., Balcan, M., and Lin, H., editors, {\em Advances in Neural Information Processing Systems}, volume~33, pages 16235--16245. Curran Associates, Inc.

\bibitem[Dasgupta, 2003]{dasgupta_03}
Dasgupta, S. (2003).
\newblock How fast is k-means?
\newblock In Sch{\"o}lkopf, B. and Warmuth, M.~K., editors, {\em COLT}, volume 2777 of {\em Lecture Notes in Computer Science}, page 735. Springer.

\bibitem[Dasgupta, 2008]{dasgupta_08}
Dasgupta, S. (2008).
\newblock The hardness of k-means clustering.
\newblock Technical report, UC San Diego: Department of Computer Science \& Engineering.

\bibitem[Dasgupta, 2013]{dasgupta_13}
Dasgupta, S. (2013).
\newblock CSE 291 : Geometric Algorithms, Lecture 3 - Algorithms for k-means clustering.

\bibitem[Feldman, 2020]{feldman_20}
Feldman, D. (2020).
\newblock Introduction to core-sets: an updated survey.
\newblock {\em arXiv preprint arXiv:2011.09384}.

\bibitem[Friggstad et~al., 2019]{friggstad_19}
Friggstad, Z., Rezapour, M., and Salavatipour, M.~R. (2019).
\newblock Local search yields a ptas for \$k\$-means in doubling metrics.
\newblock {\em SIAM Journal on Computing}, 48(2):452--480.

\bibitem[Grunau et~al., 2023]{grunau_23}
Grunau, C., \"{O}z\"{u}do\u{g}ru, A.~A., and Rozho\v{n}, V. (2023).
\newblock {Noisy k-Means++ Revisited}.
\newblock In G{\o}rtz, I.~L., Farach-Colton, M., Puglisi, S.~J., and Herman, G., editors, {\em 31st Annual European Symposium on Algorithms (ESA 2023)}, volume 274 of {\em Leibniz International Proceedings in Informatics (LIPIcs)}, pages 55:1--55:7, Dagstuhl, Germany. Schloss Dagstuhl -- Leibniz-Zentrum f{\"u}r Informatik.

\bibitem[Har-Peled and Sadri, 2005]{har-peled_sadri_05}
Har-Peled, S. and Sadri, B. (2005).
\newblock How fast is the k-means method?
\newblock In {\em SODA '05: Proceedings of the sixteenth annual ACM-SIAM symposium on Discrete algorithms}, pages 877--885, Philadelphia, PA, USA. Society for Industrial and Applied Mathematics.

\bibitem[Hastings, 1970]{hastings_70}
Hastings, W.~K. (1970).
\newblock Monte carlo sampling methods using markov chains and their applications.
\newblock {\em Biometrika}, 57(1):97--109.

\bibitem[Jain and Vazirani, 2001]{jain_vazirani_01}
Jain, K. and Vazirani, V.~V. (2001).
\newblock Approximation algorithms for metric facility location and k-median problems using the primal-dual schema and lagrangian relaxation.
\newblock {\em J. ACM}, 48(2):274–296.

\bibitem[Jaiswal et~al., 2014]{jaiswal_14}
Jaiswal, R., Kumar, A., and Sen, S. (2014).
\newblock A simple {D2}-sampling based {PTAS} for k-means and other clustering problems.
\newblock {\em Algorithmica}, 70(1):22--46.

\bibitem[Jaiswal et~al., 2015]{jaiswal_15}
Jaiswal, R., Kumar, M., and Yadav, P. (2015).
\newblock Improved analysis of {D2}-sampling based {PTAS} for k-means and other clustering problems.
\newblock {\em Information Processing Letters}, 115(2):100--103.

\bibitem[Jaiswal and Shah, 2024]{jaiswal_24}
Jaiswal, R. and Shah, P. (2024).
\newblock Quantum (inspired) $d^2$-sampling with applications.

\bibitem[Johnson and Lindenstrauss, 1984]{johnson_lindenstrauss_84}
Johnson, W.~B. and Lindenstrauss, J. (1984).
\newblock Extensions of lipschitz maps into a hilbert space.
\newblock {\em Contemporary Mathematics}, 26:189--206.

\bibitem[Kanungo et~al., 2002]{kanungo_02}
Kanungo, T., Mount, D.~M., Netanyahu, N.~S., Piatko, C.~D., Silverman, R., and Wu, A.~Y. (2002).
\newblock A local search approximation algorithm for k-means clustering.
\newblock In {\em Proceedings of the Eighteenth Annual Symposium on Computational Geometry}, SCG '02, page 10–18, New York, NY, USA. Association for Computing Machinery.

\bibitem[Kelly et~al., 2021]{diabetes}
Kelly, M., Longjohn, R., and Nottingham, K. (2021).
\newblock Cdc diabetes health indicators dataset.
\newblock The UCI Machine Learning Repository.

\bibitem[Krishnapuram, 2008]{cancer}
Krishnapuram, B. (2008).
\newblock Kdd cup 2008: Breast cancer dataset.
\newblock \url{https://kdd.org/kdd-cup/view/kdd-cup-2008/Data}.
\newblock Accessed: 2025-01-29.

\bibitem[Kumar et~al., 2010]{kumar_10}
Kumar, A., Sabharwal, Y., and Sen, S. (2010).
\newblock Linear-time approximation schemes for clustering problems in any dimensions.
\newblock {\em J. ACM}, 57(2).

\bibitem[Lattanzi and Sohler, 2019]{lattanzi_sohler_19}
Lattanzi, S. and Sohler, C. (2019).
\newblock A better k-means++ algorithm via local search.
\newblock In Chaudhuri, K. and Salakhutdinov, R., editors, {\em Proceedings of the 36th International Conference on Machine Learning}, volume~97 of {\em Proceedings of Machine Learning Research}, pages 3662--3671. PMLR.

\bibitem[Lee et~al., 2017]{lee_17}
Lee, E., Schmidt, M., and Wright, J. (2017).
\newblock Improved and simplified inapproximability for k-means.
\newblock {\em Inf. Process. Lett.}, 120:40--43.

\bibitem[Lloyd, 1982]{lloyd_82}
Lloyd, S. (1982).
\newblock Least squares quantization in pcm.
\newblock {\em IEEE Transactions on Information Theory}, 28(2):129--137.

\bibitem[Mahajan et~al., 2009]{mahajan_09}
Mahajan, M., Nimbhorkar, P., and Varadarajan, K. (2009).
\newblock The planar k-means problem is np-hard.
\newblock In {\em Proceedings of the 3rd International Workshop on Algorithms and Computation}, WALCOM '09, page 274–285, Berlin, Heidelberg. Springer-Verlag.

\bibitem[Pollard, 1981]{pollard_81}
Pollard, D. (1981).
\newblock Strong consistency of k-means clustering.
\newblock {\em The Annals of Statistics}, 9(1):135--140.

\bibitem[Tang, 2019]{tang_19}
Tang, E. (2019).
\newblock A quantum-inspired classical algorithm for recommendation systems.
\newblock In {\em Proceedings of the 51st Annual ACM SIGACT Symposium on Theory of Computing}, STOC 2019, page 217–228, New York, NY, USA. Association for Computing Machinery.

\bibitem[Wu et~al., 2008]{wu_08}
Wu, X., Kumar, V., Quinlan, J.~R., Ghosh, J., Yang, Q., Motoda, H., McLachlan, G.~J., Ng, A., Liu, B., Yu, P.~S., Zhou, Z.-H., Steinbach, M., Hand, D.~J., and Steinberg, D. (2008).
\newblock Top 10 algorithms in data mining.
\newblock {\em Knowledge and Information Systems}, 14(1):1--37.

\end{thebibliography}

\end{document}